\newif\iflong
\newif\ifshort
\newtheorem{longtheorem}{Theorem}
\newtheorem{longlemma}[longtheorem]{Lemma}
\newtheorem{longcorollary}[longtheorem]{Corollary}
\theoremstyle{remark}
\newtheorem{longclaim}[longtheorem]{Claim}
\def\ie{\emph{i.e.}\xspace}
\def\eg{\emph{e.g.}\xspace}
\title{Space-Efficient Parameterized Algorithms on Graphs of Low Shrubdepth}
\titlerunning{Space-Efficient Parameterized Algorithms on Graphs of Low Shrubdepth}
 \author{Benjamin Bergougnoux}{Institute of Informatics, University of Warsaw, Poland}{benjamin.bergougnoux@mimuw.edu.pl}{}{}
 \author{Vera Chekan}{Humboldt-Universität zu Berlin, Germany}{vera.chekan@informatik.hu-berlin.de}{https://orcid.org/0000-0002-6165-1566}{Supported by the DFG Research Training Group 2434 “Facets of Complexity.”}
 \author{Robert Ganian}{Algorithms and Complexity Group, TU Wien, Vienna, Austria}{rganian@gmail.com}{https://orcid.org/0000-0002-7762-8045}{Project No. Y1329 of the Austrian Science Fund (FWF), WWTF Project ICT22-029.}
\author{Mamadou Moustapha {Kant\'e}}{Université Clermont Auvergne, Clermont Auvergne INP, LIMOS, CNRS, Clermont-Ferrand, France}{mamadou.kante@uca.fr}{https://orcid.org/0000-0003-1838-7744}{Supported by the French National Research Agency (ANR-18-CE40-0025-01 and ANR-20-CE48-0002).}
\author{Matthias Mnich}{Hamburg University of Technology, Institute for Algorithms and Complexity, Hamburg, Germany}{matthias.mnich@tuhh.de}{https://orcid.org/0000-0002-4721-5354}{}
 \author{Sang-il Oum}{Discrete Mathematics Group, Institute for Basic Science (IBS), Daejeon, Korea \and 
Department of Mathematical Sciences, KAIST, Daejeon, Korea}{sangil@ibs.re.kr}{https://orcid.org/0000-0002-6889-7286}{Supported by the Institute for Basic Science (IBS-R029-C1).}
\author{Micha{\l} Pilipczuk}{Institute of Informatics, University of Warsaw, Poland}{michal.pilipczuk@mimuw.edu.pl}{}{\flag[0.17\textwidth]{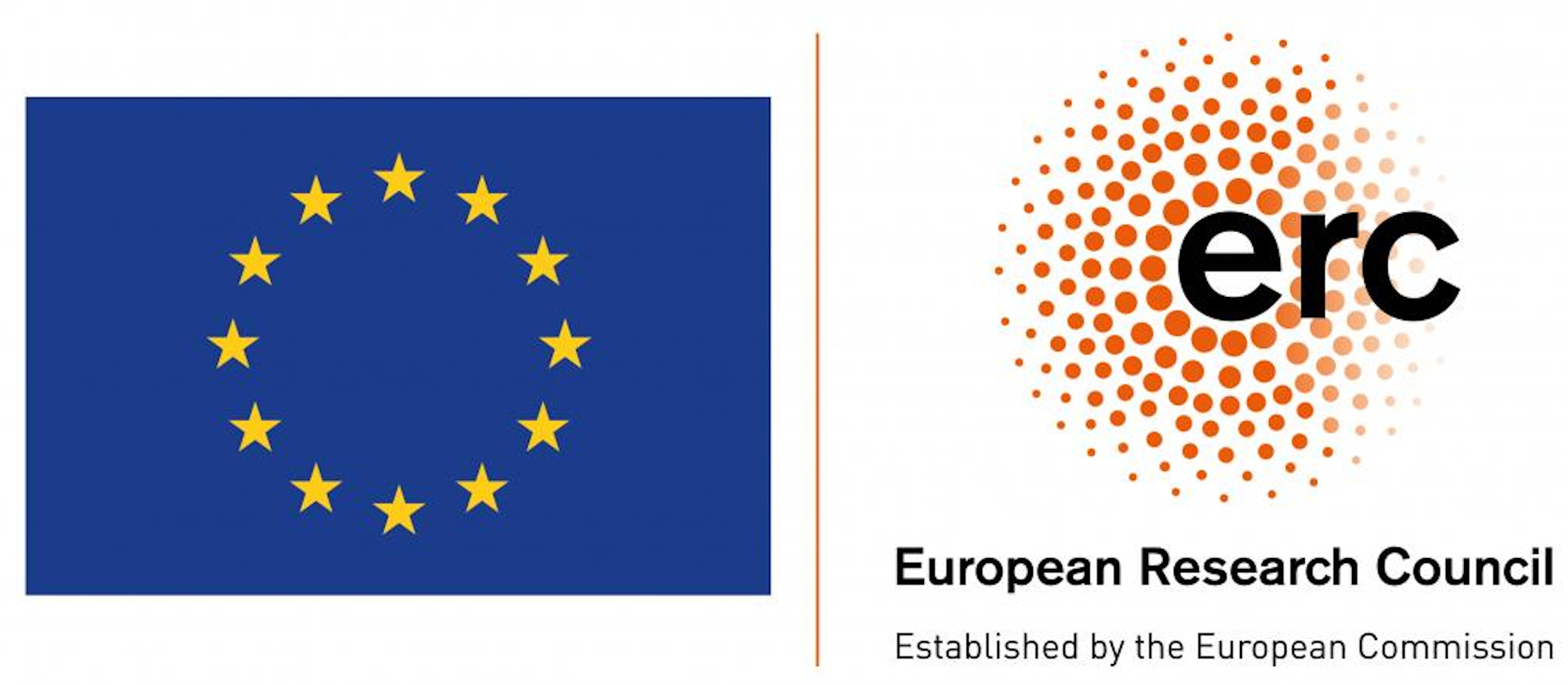}This work is a part of project BOBR that has received funding from the European Research Council (ERC) under the European Union’s Horizon 2020 research and innovation programme (grant agreement No. 948057).}
 \author{Erik Jan van Leeuwen}{Dept.\ Information and Computing Sciences, Utrecht University, The Netherlands}{e.j.vanleeuwen@uu.nl}{https://orcid.org/0000-0001-5240-7257}{}
\authorrunning{Bergougnoux, Chekan, Ganian, Kant\'e, Mnich, Oum, Pilipczuk, van Leeuwen} %
\keywords{Parameterized complexity, shrubdepth, space complexity, algebraic methods}
\begin{document}
	\maketitle

\begin{abstract}
    Dynamic programming on various graph decompositions is one of the most fundamental techniques used in parameterized complexity. Unfortunately, even if we consider concepts as simple as path or tree decompositions, such dynamic programming uses space that is exponential in the decomposition's width, and there are good reasons to believe that this is necessary. However, it has been shown that in graphs of low treedepth it is possible to design algorithms which achieve polynomial space complexity without requiring worse time complexity than their counterparts working on tree decompositions of bounded width. Here, {\em{treedepth}} is a graph parameter that, intuitively speaking, takes into account both the depth and the width of a tree decomposition of the graph, rather than the width alone.

    Motivated by the above, we consider graphs that admit clique expressions with bounded depth and label count, or equivalently, graphs of low shrubdepth. Here, shrubdepth is a bounded-depth analogue of cliquewidth, in the same way as treedepth is a bounded-depth analogue of treewidth. We show that also in this setting, bounding the depth of the decomposition is a deciding factor for improving the space complexity. More precisely, we prove that on $n$-vertex graphs equipped with a tree-model (a decomposition notion underlying shrubdepth) 
    of depth $d$ and using $k$ labels,
    \begin{itemize}
        \item {\sc{Independent Set}} can be solved in time $2^{\Oh(dk)}\cdot n^{\Oh(1)}$ using $\Oh(dk^2\log n)$ space;
        \item {\sc{Max Cut}} can be solved in time $n^{\Oh(dk)}$ using $\Oh(dk\log n)$ space; and
        \item {\sc{Dominating Set}} can be solved in time $2^{\Oh(dk)}\cdot n^{\Oh(1)}$ using $n^{\Oh(1)}$ space via a randomized algorithm.
    \end{itemize}
    We also establish a lower bound, conditional on a certain assumption about the complexity of {\sc{Longest Common Subsequence}}, which shows that at least in the case of {\sc{Independent Set}} the exponent of the parametric factor in the time complexity has to grow with $d$ if one wishes to keep the space complexity polynomial. 
\end{abstract}

\section{Introduction}

\noindent \textbf{Treewidth and Treedepth.} \quad Dynamic programming on graph decompositions is a fundamental method in the design of parameterized algorithms. Among various decomposition notions, {\em{tree decompositions}}, which underly the parameter {\em{treewidth}}, are perhaps the most widely used; see e.g.~\cite{cygan2015parameterized,DowneyF2013} for an introduction. A tree decomposition of a graph $G$ of width~$k$ provides a way to ``sweep'' $G$ while keeping track of at most $k+1$ ``interface vertices'' at a time. This can be used for dynamic programming: during the sweep, the algorithm maintains a set of representative partial solutions within the part already swept, one for each possible behavior of a partial solution on the interface vertices. Thus, the width of the decomposition is the key factor influencing the number of partial solutions that need to be stored.

In a vast majority of applications, this number of different partial solutions depends (at least) exponentially on the width $k$ of the decomposition, which often leads to time complexity of the form $f(k)\cdot n^{\Oh(1)}$ for an exponential function $f$. This should not be surprising, as most problems where this technique is used are ${\mathsf{NP}}$-hard. Unfortunately, the space complexity---which often appears to be the true bottleneck in practice ---is also exponential. There is a simple tradeoff trick, first observed by Lokshtanov et al.~\cite{LokshtanovMS11}, which can often be used to reduce the space complexity to polynomial at the cost of increasing the time complexity. For instance, {\sc{Independent Set}} can be solved in $2^k\cdot n^{\Oh(1)}$ time and using $2^k\cdot n^{\Oh(1)}$ space on an $n$-vertex graph equipped with a width-$k$ tree decomposition via dynamic programming~\cite{Furer17};
combining this algorithm with a simple recursive Divide\&Conquer scheme yields an algorithm with running time $2^{\Oh(k^2)}\cdot n^{\Oh(1)}$ and space complexity $n^{\Oh(1)}$.

Allender et al.~\cite{AllenderCLPT14} and then Pilipczuk and Wrochna~\cite{PilipczukW18} studied the question whether the loss on the time complexity is necessary if one wants to achieve polynomial space complexity in the context of dynamic programming on tree decompositions. While the formal formulation of their results is somewhat technical and complicated, the take-away message is the following: there are good complexity-theoretical reasons to believe that even in the simpler setting of path decompositions, one cannot achieve algorithms with polynomial space complexity whose running times asymptotically match the running times of their exponential-space counterparts. We refer to the works~\cite{AllenderCLPT14,PilipczukW18} for further details.

However, starting with the work of F\"urer and Yu~\cite{FurerY17}, a long line of advances~\cite{HegerfeldK20,NadaraPS22,NederlofPSW20,PilipczukW18} showed that bounding the {\em{depth}}, rather than the width, of a decomposition leads to the possibility of designing algorithms that are both time- and space-efficient. To this end, we consider the {\em{treedepth}} of a graph $G$, which is the least possible depth of an {\em{elimination forest}}: a forest $F$ on the vertex set of $G$ such that every two vertices adjacent in $G$ are in the ancestor/descendant relation in $F$.
An elimination forest of depth $d$ can be regarded as a tree decomposition of depth $d$, and thus treedepth is the bounded-depth analogue of treewidth. As shown in~\cite{FurerY17,HegerfeldK20,NederlofPSW20,PilipczukW18}, for many classic problems, including {\sc{3-Coloring}}, {\sc{Independent Set}}, {\sc{Dominating Set}},
and {\sc{Hamiltonicity}}, it is possible to design algorithms with running time $2^{\Oh(d)}\cdot n^{\Oh(1)}$ and polynomial space complexity, assuming the graph is supplied with an elimination forest of depth $d$. In certain cases, the space complexity can even be as low as $\Oh(d+\log n)$ or $\Oh(d\log n)$~\cite{PilipczukW18}. Typically, the main idea is to reformulate the classic bottom-up dynamic programming approach so that it can be replaced by a simple top-down recursion. This reformulation is by no means easy---it often involves a highly non-trivial use of algebraic transforms or other tools of algebraic flavor, such as inclusion-exclusion~branching.

\smallskip
\noindent \textbf{Cliquewidth and Shrubdepth.}\quad
In this work, we are interested in the parameter {\em{cliquewidth}} and its low-depth counterpart: {\em{shrubdepth}}. While treewidth applies only to sparse graphs, cliquewidth is a notion of tree-likeness suited for dense graphs as well. The decompositions underlying cliquewidth are called {\em{clique expressions}}~\cite{CourcelleMR00}. A clique expression is a term operating over {\em{$k$-labelled graphs}}---graphs where every vertex is assigned one of $k$ labels---and the allowed operations are: (i) apply any renaming function to the labels; (ii) make a complete bipartite graph between two given labels; and (iii) take the disjoint union of two $k$-labelled graphs. Then the cliquewidth of $G$ is the least number of labels using which (some labelling of) $G$ can be constructed. Similarly to treewidth, dynamic programming over clique expressions can be used to solve a wide range of problems, in particular all problems expressible in $\mathsf{MSO}_1$ logic, in $\mathsf{FPT}$ time when parameterized by cliquewidth. Furthermore, while several problems involving edge selection or edge counting, such as {\sc{Hamiltonicity}} or {\sc{Max Cut}}, remain $\mathsf{W}[1]$-hard under the cliquewidth parameterization~\cite{FominGLS10,FominGLS14}, standard dynamic programming still allows us to solve them in $\mathsf{XP}$ time. In this sense, clique-width can be seen as the ``least restrictive'' general-purpose graph parameter which allows for efficient dynamic programming algorithms where the decompositions can also be computed efficiently~\cite{FominK22}.
Nevertheless, since the cliquewidth of a graph is at least as large as its linear cliquewidth, which in turn is as large as its pathwidth, the lower bounds of Allender et al.~\cite{AllenderCLPT14} and of Pilipczuk and Wrochna~\cite{PilipczukW18} carry over to the cliquewidth setting. Hence, reducing the space complexity to polynomial requires a sacrifice in the time complexity.

Shrubdepth, introduced by Ganian et al.~\cite{GanianHNOM19}, is a variant of cliquewidth where we stipulate the decomposition to have bounded depth. This necessitates altering the set of operations used in clique expressions in order to allow taking disjoint unions of multiple graphs as a single operation. In this context, we call the decompositions used for shrubdepth {\em{$(d,k)$-tree-models}}, where $d$ stands for the depth and $k$ for the number of labels used; a formal definition is provided in Section~\ref{sec:prelims}. 
Shrubdepth appears to be a notion of depth that is sound from the model-theoretic perspective, is $\mathsf{FPT}$-time computable~\cite{GajarskyK20}, and has become an important concept in the logic-based theory of well-structured dense graphs~\cite{Dreier21,DreierGKPT22,GajarskyKNMPST20,GanianHNOM19,OhlmannPPT23,OSSONADEMENDEZ2022103660}.

Since shrubdepth is a bounded-depth analogue of cliquewidth in the same way as treedepth is a bounded-depth analogue of treewidth, it is natural to ask whether for graphs from classes of bounded shrubdepth, or more concretely, for graphs admitting $(d,k)$-tree-models where both $d$ and $k$ are considered parameters, one can design space-efficient $\mathsf{FPT}$ algorithms. Exploring this question is the topic of this work.

\smallskip
\noindent \textbf{Our contribution.}\quad
We consider three example problems: {\sc{Independent Set}}, {\sc{Max Cut}}, and {\sc{Dominating Set}}. For each of them we show that on graphs supplied with $(d,k)$-tree-models where $d=\Oh(1)$, one can design space-efficient fixed-parameter algorithms whose running times asymptotically match the running times of their exponential-space counterparts working on general clique expressions. While we focus on the three problems mentioned above for concreteness, we in fact provide a more general algebraic framework, inspired by the work on the treedepth parameterization~\cite{FurerY17,HegerfeldK20,NadaraPS22,NederlofPSW20,PilipczukW18}, that can be applied to a wider range of problems. Once the depth $d$ is not considered a constant, the running times of our algorithms increase with $d$. To mitigate this concern, we give a conditional lower bound showing that this is likely to be necessary if one wishes to keep the space complexity~polynomial. 

Recall that standard dynamic programming solves the {\sc{Independent Set}} problem in time $2^k\cdot n^{\Oh(1)}$ and space $2^k\cdot n^{\Oh(1)}$ on a graph constructed by a clique expression of width~$k$~\cite{Furer17}. Our first contribution is to show that on graphs with $(d,k)$-tree-models, the space complexity can be reduced to as low as $\Oh(dk^2\cdot \log n)$ at the cost of allowing time complexity $2^{\Oh(dk)}\cdot n^{\Oh(1)}$. In fact, we tackle the more general problem of computing the independent set polynomial.

\iflong
\begin{restatable}{theorem}{spaceIS}
\label{thm:IS}
There is an algorithm which takes as input an $n$-vertex graph $G$ along with a $(d,k)$-tree model of $G$, runs in time $2^{\Oh(k d)}\cdot n^{\Oh(1)}$ and uses at most $\Oh(dk^2 \log n)$ space, and computes the independent set polynomial of $G$.
\end{restatable}
\fi
\ifshort
\begin{restatable}{theorem}{spaceIS}
\label{thm:IS}
There is an algorithm which takes as input an $n$-vertex graph $G$ along with a $(d,k)$-tree model of $G$, runs in time $2^{\Oh(k d)}\cdot n^{\Oh(1)}$ and uses at most $\Oh(dk^2 \log n)$ space, and computes the independent set polynomial of $G$.
\end{restatable}
\fi

The idea of the proof of Theorem~\ref{thm:IS} is to reorganize the computation of the standard bottom-up dynamic programming by applying the zeta-transform to the computed tables. This allows a radical simplification of the way a dynamic programming table for a node is computed from the tables of its children, so that the whole dynamic programming can be replaced by top-down recursion. Applying just this yields an algorithm with space polynomial in $n$. We reduce space to $\bigoh(dk^2 \log n)$ by computing the result modulo several small primes, and using space-efficient Chinese remaindering. This is inspired by the algorithm for {\sc{Dominating Set}} on graphs of small treedepth of Pilipczuk and Wrochna~\cite{PilipczukW18}.

In fact, the technique used to prove Theorem~\ref{thm:IS} is much more general and can be used to tackle all coloring-like problems of local character. We formalize those under a single umbrella by solving the problem of counting List $H$-homomorphisms (for an arbitrary but fixed pattern graph $H$), for which we provide an algorithm with the same complexity guarantees as those of Theorem~\ref{thm:IS}. The concrete problems captured by this framework include, e.g., {\sc{Odd Cycle Transveral}} and {\sc{$q$-Coloring}} for a fixed constant $q$\ifshort~(details in the appendix).\fi\iflong; details are provided in Section~\ref{subsec:list-homomorphisms}.\fi

Next, we turn our attention to the {\sc{Max Cut}} problem. This problem is $\mathsf{W}[1]$-hard when parameterized by cliquewidth, but it admits a simple $n^{\Oh(k)}$-time algorithm on $n$-vertex graphs provided with clique expressions of width $k$~\cite{FominGLS14}. Our second contribution is a space-efficient counterpart of this result for graphs equipped with bounded-depth tree-models.

\iflong
\begin{restatable}{theorem}{spaceMC}
\label{thm:maxcut}
There is an algorithm which takes as input an $n$-vertex graph $G$ along with a $(d,k)$-tree model of $G$, runs in time $n^{\Oh(dk)}$ and uses at most $\Oh(dk \log n)$ space, and solves the {\sc{Max Cut}} problem in $G$.
\end{restatable}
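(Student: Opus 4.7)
The plan is to adapt the standard $\mathsf{XP}$-time dynamic program for \textsc{Max Cut} on clique-expressions---which operates on signatures counting how many vertices of each label and each side sit in each sub-instance---to the tree-model setting, and then evaluate it via a top-down recursion that keeps only a single root-to-leaf slice of state alive on the stack.

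For each node $t$ of the input $(d,k)$-tree-model and each signature $\sigma\in\{0,\ldots,n\}^{2k}$ that records, for every pair $(\ell,c)\in[k]\times\{0,1\}$, the number of leaves of $V_t$ labeled $\ell$ that are colored $c$, I would set $f_t(\sigma)$ to be the maximum number of cut-edges of $G[V_t]$ over all $2$-colorings of $V_t$ realizing signature $\sigma$ at $t$. In the tree-model each edge of $G$ is born at the least common ancestor of its endpoints, and its presence depends only on the endpoints' labels and on the adjacency predicate at that LCA; hence the number of cut-edges born at an internal node $t$ with children $c_1,\ldots,c_{m_t}$ is a polynomial-time-computable function $\mathrm{cross}_t$ that is a sum, over ordered pairs of children, of a bilinear form in their signatures. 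This yields the recurrence
\[
f_t(\sigma) \;=\; \max\Bigg\{\sum_{i=1}^{m_t} f_{c_i}(\sigma_i) + \mathrm{cross}_t(\sigma_1,\ldots,\sigma_{m_t}) \;:\; \textstyle\sum_i \sigma_i = \sigma\Bigg\},
\]
with the trivial base case at leaves, and the \textsc{Max Cut} answer is $\max_\sigma f_{\mathrm{root}}(\sigma)$.

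Next I would evaluate the recurrence in a top-down recursive manner. A procedure $\mathrm{Solve}(t,\sigma)$ enumerates the possible splits $\sigma=\sigma_1+\cdots+\sigma_{m_t}$ and invokes $\mathrm{Solve}(c_i,\sigma_i)$ for each child, maintaining a running best value; since these sub-calls are issued sequentially, only one call per tree-level is active at any moment, so the recursion depth into the tree-model is at most $d$. Each stack frame carries only the current target signature together with a handful of $O(k \log n)$-bit bookkeeping fields (running aggregate, running best, current child index), giving total space $O(dk \log n)$. The running time follows by observing that at each tree-level the algorithm commits to one of $n^{O(k)}$ candidate signatures, so the product of branching factors along a root-to-leaf recursion path is $n^{O(dk)}$.

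The delicate point will be handling the split-enumeration at an internal node whose number of children $m_t$ may be as large as $n$: a naive recursive enumeration would nest $m_t$ frames at this level, inflating the stack by a factor of $n$. To circumvent this, I would binarize the children of each internal node via a balanced auxiliary binary tree and perform the split of the target signature in a divide-and-conquer fashion along that tree; the $\mathrm{cross}_t$ contribution between two disjoint groups of children is determined solely by their aggregate signatures (owing to the bilinear-sum structure of $\mathrm{cross}_t$), so each auxiliary level needs only to try $n^{O(k)}$ splits of its own aggregate and issue two sub-calls, and each auxiliary frame holds only $O(k \log n)$ bits of state. The main work of the proof will be to verify that this binarization preserves both the $O(dk \log n)$ space bound and the $n^{O(dk)}$ running-time bound, which boils down to a careful amortization against the depth of the original tree-model; correctness will then follow by a routine induction on the tree.
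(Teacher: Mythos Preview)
Your dynamic-programming setup is correct and matches the paper, but the plan for handling a node with many children has a genuine gap: the balanced-binary auxiliary tree does \emph{not} give the claimed bounds.

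At each internal node of the auxiliary binary tree you branch over $n^{\Oh(k)}$ ways to split the current aggregate signature and issue two sub-calls. Hence the number of leaf-calls produced by an auxiliary tree over $m$ children satisfies $A(m)\le n^{\Oh(k)}\cdot 2A(m/2)$, so $A(m)=n^{\Oh(k\log m)}$. Along a root-to-leaf path $t_0,\ldots,t_d$ in the tree-model, the total auxiliary depth is $\sum_i \log m_{t_i}$, and this sum is \emph{not} $\Oh(d)$: already the simple instance where $m_{t_i}\approx n/d$ for every $i$ (the path has $n$ leaves hanging off it in $d$ roughly equal batches) gives $\sum_i \log m_{t_i}=\Theta(d\log(n/d))$, which is $\Theta(d\log n)$ whenever $d=n^{o(1)}$. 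Consequently your recursion has depth $\Theta(d\log n)$ and branching $n^{\Oh(k)}$ per level, yielding running time $n^{\Oh(dk\log n)}$ and space $\Oh(dk\log^2 n)$ rather than $n^{\Oh(dk)}$ and $\Oh(dk\log n)$. No amortization against the original depth can repair this, because the bad instances above are exactly $(d,k)$-tree-models with unbounded fanout, which is intrinsic to the shrubdepth setting.

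What you are missing is the paper's key algorithmic idea, taken from Kane's logspace algorithm for \textsc{Unary Subset Sum}. Instead of enumerating splits $s=s^1+\cdots+s^t$, the paper fixes a target cut value $B$ and tests whether \emph{some} split attains $B$ by evaluating, for several small primes $p$, a quantity of the form
\[
\sum_{x=1}^{p-1} x^{C(s\,|\,B-m_a(s,M_a))}\ \prod_{j=1}^{t}\ \Bigl(\sum_{s^j\in\cS(ab_j)} x^{-C(s^j\,|\,f_{ab_j}(s^j)-m_{ab_j}(s^j,M_a))}\Bigr)\pmod p,
\]
where $C(\cdot)$ packs a $(k{+}1)$-tuple into a single integer in base $2n^2+1$. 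The crucial structural point is that this expression \emph{factors as a product over the children}, so it can be computed by a single left-to-right sweep over $j=1,\ldots,t$ using only $\Oh(k\log n)$ bits of state, with $n^{\Oh(k)}$ oracle calls to each $f_{ab_j}$; Chinese remaindering over $\Oh(nk\log n)$ primes then certifies whether the number of valid splits is nonzero. This is what buys $n^{\Oh(k)}$ time and $\Oh(k\log n)$ space per original level, and hence the stated $n^{\Oh(dk)}$ and $\Oh(dk\log n)$ overall. Your divide-and-conquer over children cannot match this without such an algebraic encoding.
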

\fi
\ifshort
\begin{restatable}{theorem}{spaceMC}
\label{thm:maxcut}
There is an algorithm which takes as input an $n$-vertex graph $G$ along with a $(d,k)$-tree model of $G$, runs in time $n^{\Oh(dk)}$ and uses at most $\Oh(dk \log n)$ space, and solves the {\sc{Max Cut}} problem on $G$.
\end{restatable}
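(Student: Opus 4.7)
The plan is to adapt the count-based dynamic program that solves {\sc{Max Cut}} in time $n^{\Oh(k)}$ on cliquewidth-$k$ clique expressions~\cite{FominGLS14}, but to execute it top-down so that the depth $d$ of the tree-model bounds the recursion depth and, through that, the overall memory footprint. For a node $t$ of the tree-model, let $F(t,\vec{a})$ denote the maximum number of cut edges in $G_t$ over two-colorings placing exactly $a_x$ vertices of label $x$ on the $A$-side, for each $x \in [k]$. At a leaf $F(t,\cdot)$ takes only the two trivial values, and at an internal node $t$ with children $t_1,\ldots,t_m$ we have
\[
F(t,\vec{a}) \;=\; \max_{\vec{a}^{(1)}+\cdots+\vec{a}^{(m)}=\vec{a}}\; \sum_{j=1}^m F(t_j,\vec{a}^{(j)}) \;+\; \Delta_t(\vec{a}^{(1)},\ldots,\vec{a}^{(m)}),
\]
where $\Delta_t$ counts the cut edges newly introduced at $t$ by the tree-model's signature at the depth of $t$; the answer to {\sc{Max Cut}} is then $\max_{\vec{a}} F(\mathrm{root},\vec{a})$. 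Since every edge contributed at $t$ joins vertices in distinct children of $t$, one can rewrite $\Delta_t = \mathrm{Agg}_t(\vec{a}) - \sum_j \mathrm{Corr}_t(\vec{a}^{(j)})$, where $\mathrm{Agg}_t$ depends only on the aggregate and $\mathrm{Corr}_t$ is a per-child correction; absorbing $-\mathrm{Corr}_t$ into each $F(t_j,\vec{a}^{(j)})$ turns the maximization into a purely separable tropical convolution over children, matching the standard cliquewidth DP.

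To meet the claimed space bound, I would carry out this recursion top-down without materializing any tables. Every recursion frame corresponds to one tree-model level and stores only the current target $\vec{a}$, the index $j$ of the child currently being processed, the cumulative vector $\vec{A}^{(<j)}$ from children already handled, a running partial value, the trial split $\vec{a}^{(j)}$, and a running best---all in $\Oh(k\log n)$ bits. Children are processed sequentially within the same frame: for each $j$ we loop over candidate $\vec{a}^{(j)}$, invoke $F(t_j,\vec{a}^{(j)})$ recursively on the child's subtree, and, using the separable form of $\Delta_t$, absorb the result into the running partial value and the cumulative vector before proceeding to the next child. Because the frame does not grow with the number of children, the stack reaches at most $d$ frames, giving total space $\Oh(dk\log n)$. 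The time bound $n^{\Oh(dk)}$ then follows from $\mathsf{DSPACE}(S) \subseteq \mathsf{DTIME}(2^{\Oh(S)})$, or equivalently from the observation that at each of the $d$ levels the frame can take at most $n^{\Oh(k)}$ distinct values, yielding at most $(n^{\Oh(k)})^d = n^{\Oh(dk)}$ distinct stack configurations.

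The main subtlety I expect is arranging the split enumeration at internal nodes of potentially large arity $m$ so that it really fits in a single $\Oh(k\log n)$-bit frame rather than the $\Omega(mk\log n)$ bits demanded by naively nesting $m-1$ loops over $\vec{a}^{(1)},\ldots,\vec{a}^{(m-1)}$. The separability provided by the decomposition $\Delta_t = \mathrm{Agg}_t - \sum_j \mathrm{Corr}_t$ is precisely what enables the single-pass streaming implementation: once child $t_j$ has been processed, its individual $\vec{a}^{(j)}$ has been summed into $\vec{A}^{(<j)}$ and its value folded into the partial sum, so no further per-child bookkeeping at level $t$ is needed, and the backtracking across alternative splits can be realized through the recursion on subtrees rather than by stacking per-child frames at $t$.
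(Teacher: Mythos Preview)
Your setup is right and matches the paper's: the recurrence separates into $F(t,\vec a)=\mathrm{Agg}_t(\vec a)+\max_{\sum_j \vec a^{(j)}=\vec a}\sum_j g_j(\vec a^{(j)})$ with $g_j(\vec a^{(j)})=F(t_j,\vec a^{(j)})-\mathrm{Corr}_t(\vec a^{(j)})$. The gap is the claim that this max--plus convolution over the children can be evaluated with a single $\Oh(k\log n)$-bit frame at $t$. Separability of the summands does \emph{not} make the convolution streamable: after having ``processed'' children $1,\ldots,j$, the information you must retain is not a single pair (partial value, cumulative $\vec A^{(<j)}$) but the entire function $G_j(\vec b)=\max_{\sum_{i\le j}\vec a^{(i)}=\vec b}\sum_{i\le j} g_i(\vec a^{(i)})$, because the optimal $\vec a^{(j+1)}$ depends on which $\vec b$ the earlier children produced. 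Collapsing $G_j$ to a single number is a greedy choice and is simply incorrect. If instead you recompute $G_{j-1}(\cdot)$ on demand (your ``backtracking through the recursion on subtrees''), you incur recursion depth $m$ at node $t$, hence overall recursion depth $\sum_{\ell}m_\ell$ along a root-to-leaf path, which in a depth-$d$ tree model can be $\Theta(n)$ (already for a star, $d=1$ and $m=n$). Either way you do not get $\Oh(dk\log n)$ space, and your time argument via $\mathsf{DSPACE}\subseteq\mathsf{DTIME}$ then collapses with it.

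The paper closes exactly this gap by \emph{replacing} the max--plus convolution with an arithmetic one. For each candidate cut value $B$ (tried from high to low) it asks whether the number $A(\vec s,B)$ of splits $(\vec s^{(1)},\ldots,\vec s^{(m)})$ with $\sum_j \vec s^{(j)}=\vec s$ and $\sum_j g_j(\vec s^{(j)})=B-\mathrm{Agg}_t(\vec s)$ is nonzero. This count is read off (modulo a prime $p$) from $\sum_{x=1}^{p-1} x^{C(\vec s\mid B-\mathrm{Agg}_t(\vec s))}\prod_{j=1}^{m}\bigl(\sum_{\vec s^{(j)}} x^{-C(\vec s^{(j)}\mid g_j(\vec s^{(j)}))}\bigr)$, which \emph{is} streamable over $j$ because an ordinary product can be accumulated one factor at a time in $\Oh(k\log n)$ bits. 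Chinese remaindering over $n^{\Oh(k)}$-sized primes then certifies $A(\vec s,B)\neq 0$. This is Kane's logspace Subset~Sum trick, and it is the missing idea in your proposal; without it (or an equivalent device turning the tropical convolution into something computable with a bounded-size running aggregate) the space bound does not follow.
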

\fi

Upon closer inspection, the standard dynamic programming for {\sc{Max Cut}} on clique expressions solves a {\sc{Subset Sum}}-like problem whenever aggregating the dynamic programming tables of children to compute the table of their parent. We apply the approach of Kane~\cite{Kane10} that was used to solve {\sc{Unary Subset Sum}} in logarithmic space: we encode the aforementioned {\sc{Subset Sum}}-like problem as computing the product of polynomials, and use Chinese remaindering to compute this product in a space-efficient way.

Finally, we consider the {\sc{Dominating Set}} problem, for which we prove the following.

\iflong
\begin{restatable}{theorem}{spaceDS}
\label{thm:domset}
There is a randomized algorithm which takes as input an $n$-vertex graph $G$ along with a $(d,k)$-tree model of $G$, runs in time $2^{\Oh(dk)} \cdot \polyn$ and uses at most $\Oh(dk^2 \log n+n\log n)$ space, and reports the minimum size of a dominating set in $G$ that is correct with probability at least $1/2$.
\end{restatable}
\fi
\ifshort
\begin{restatable}{theorem}{spaceDS}
\label{thm:domset}
There is a randomized algorithm which takes as input an $n$-vertex graph $G$ along with a $(d,k)$-tree model of $G$, runs in time $2^{\Oh(dk)} \cdot \polyn$ and uses at most $\Oh(dk^2 \log n+n\log n)$ space, and reports the minimum size of a dominating set in $G$ that is correct with probability at least $1/2$.
\end{restatable}
\fi

Note that the algorithm of Theorem~\ref{thm:domset} is randomized and uses much more space than our previous algorithms: more than $n \log n$. The reason for this is that we use the inclusion-exclusion approach proposed very recently by Hegerfeld and Kratsch~\cite{HegerfeldK23}, which is able to count dominating sets only modulo $2$. Consequently, while the parity of the number of dominating sets of certain size can be computed in space $\Oh(dk^2\log n)$, to determine the existence of such dominating sets we use the Isolation Lemma and count the parity of the number of dominating sets of all possible weights. This introduces randomization and necessitates sampling---and storing---a weight function. At this point we do not know how to remove neither the randomization nor the super-linear space complexity in Theorem~\ref{thm:domset}; we believe this is an excellent open~problem.

Note that in all the algorithms presented above, the running times contain a factor $d$ in the exponent compared to the standard (exponential-space) dynamic programming on clique expressions. The following conditional lower bound shows that some additional dependency on the depth is indeed necessary; the relevant precise definitions are provided in Section~\ref{sec:lb}.

\begin{restatable}{theorem}{lcs-lb}
\label{thm:lcs-lb}
Suppose {\sc{Longest Common Subsequence}} cannot be solved in time $M^{f(r)}$ and space $f(r)\cdot M^{\Oh(1)}$ for any computable function $f$, even if the length $t$ of the sought subsequence is bounded by $\delta(N)$ for any unbounded computable function $\delta$; here $r$ is the number of strings on input, $N$ is the common length of each string, and $M$ is the total bitsize of the instance. Then for every unbounded computable function $\delta$, there is no algorithm that solves the {\sc{Independent Set}} problem in graphs supplied with $(d,k)$-tree-models satisfying $d\leq \delta(k)$ that would run in time $2^{\Oh(k)}\cdot n^{\Oh(1)}$ and simultaneously use $n^{\Oh(1)}$ space.
\end{restatable}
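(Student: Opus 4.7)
The plan is to reduce \textsc{Longest Common Subsequence} to \textsc{Independent Set} on graphs supplied with $(d,k)$-tree-models, so that any polynomial-space $2^{\Oh(k)} n^{\Oh(1)}$-time algorithm for the latter would translate into an $M^{f(r)}$-time, $f(r) \cdot M^{\Oh(1)}$-space algorithm for the former, contradicting the hypothesis. I would fix the unbounded computable function $\delta$ and suppose toward contradiction that such an \textsc{Independent Set} algorithm $\mathcal{A}$ exists. Given an \textsc{LCS} instance consisting of $r$ strings $s_1,\ldots,s_r$ of length $N$ and target subsequence length $t$, I would construct a graph $G$ on the vertex set $\{v_{i,j,p} : (i,j,p)\in [r]\times[t]\times[N]\}$, so that $n=|V(G)|=\Oh(rtN)$ is polynomial in the input bitsize $M$. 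The edges encode the three natural \textsc{LCS} constraints: a clique on $\{v_{i,j,p}:p\in[N]\}$ for every pair $(i,j)$, forcing the selection of at most one position per string--slot; ordering edges $v_{i,j,p}v_{i,j',p'}$ whenever $j<j'$ and $p\geq p'$, enforcing monotonicity of picked positions within each string; and character-mismatch edges $v_{i,j,p}v_{i',j,p'}$ whenever $s_i[p]\neq s_{i'}[p']$, enforcing character agreement per slot across strings. A standard exchange argument then shows that $G$ admits an independent set of size $rt$ if and only if the \textsc{LCS} instance has a common subsequence of length $t$.

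The heart of the proof is to equip $G$ with a $(d,k)$-tree-model whose parameters simultaneously satisfy $d\leq \delta(k)$ and $2^{\Oh(k)}\cdot \mathrm{poly}(M) \leq M^{f(r)}$ for some computable $f$ depending on the size of the alphabet and on the LCS parameters permitted by the hypothesis. I would build the tree-model hierarchically: the root partitions the vertices by string, each string-node partitions further by slot, and within each string--slot block additional tree levels are used to discriminate position information. Labels at the leaves would encode a tuple involving the slot index and the character $s_i[p]$, together with whatever further attribute is needed to reconstruct the adjacency pattern. The signature at the string--slot depth handles the cliques; the signature at the string depth handles the ordering edges within each string; and the signature at the root depth handles the character-mismatch edges across strings. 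Once such a tree-model is produced, composing $\mathcal{A}$ with the reduction yields an \textsc{LCS} algorithm that runs in time $2^{\Oh(k)}\cdot M^{\Oh(1)}\leq M^{f(r)}$ and polynomial space; by selecting the LCS regime so that $t$ grows with $N$ at the rate permitted by an unbounded $\delta_0$ witnessing the hypothesis, this yields the desired contradiction.

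The principal obstacle will be the construction of the tree-model, and specifically the faithful encoding of the ordering edges. The most direct rendering of these edges between consecutive slot layers is a half-graph-like bipartite pattern, a structure known to have unbounded shrubdepth, and is therefore incompatible with a $(d,k)$-tree-model of bounded parameters. Overcoming this requires either enriching the labels with enough positional granularity so that the ordering can be read off from a single signature at the string-depth node (paying a moderate price in $k$ but keeping $2^{\Oh(k)}$ within the $M^{f(r)}$ budget), or distributing position information hierarchically across several additional tree levels and recomposing the ordering constraint piecewise via the signatures there (paying a price in $d$ that is controlled by $\delta(k)$). Striking the correct balance between these two options while respecting the constraint $d\leq\delta(k)$ for an arbitrary unbounded $\delta$ is where the main technical care is expected; once achieved, the composition with the assumed \textsc{Independent Set} algorithm closes the argument.
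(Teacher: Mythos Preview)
Your overall architecture is right, and you have correctly put your finger on the real difficulty: the ordering constraints between consecutive slots in a fixed string form a half-graph on the $N$ positions, and half-graphs have unbounded shrubdepth. The gap is that neither of your two proposed remedies actually closes this. If you ``enrich the labels with positional granularity'' so that a single signature at the string-depth node can read off the order, you need the label to determine $p\in[N]$, forcing $k\geq N$ and hence $2^{\Oh(k)}\geq 2^{\Oh(N)}$, which destroys the time bound. If instead you ``distribute position information hierarchically'' by recursively halving $[N]$, you introduce $\Omega(\log N)$ additional tree levels; then $d=\Omega(\log N)$ while $k$ is at best $\Oh(r\log N)$ or $\Oh(t|\Sigma|)$, so for a sufficiently slow $\delta$ (say $\delta(k)=\log\log k$) the constraint $d\leq\delta(k)$ fails. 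There is no balance point between these two options for your vertex set $\{v_{i,j,p}\}$: any tree-model of this graph must have $d\cdot\log k=\Omega(\log N)$, because the half-graph is an induced subgraph.

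The paper escapes this by \emph{changing the graph}, not the tree-model. Instead of one vertex per position, it represents a position by the $\log N$ bits of its binary encoding: a ``selection gadget'' is a matching on $\log N$ edges, and a maximum independent set in it picks one endpoint per edge, i.e.\ one bit per coordinate. The ordering constraint ``position chosen in slot $q$ is smaller than in slot $q{+}1$'' is then enforced not by a half-graph but by a separate \emph{inferiority gadget} of size $\Oh(\log^2 N)$ that compares the two bitstrings lexicographically; this gadget has constant local complexity and fits in a tree-model with $\Oh(r\log N)$ labels. The character-matching constraint is handled by a third gadget. The upshot is $k=\Oh(r\log N)$ and $d=\Oh(\log t)$, so $2^{\Oh(k)}=N^{\Oh(r)}\leq M^{\Oh(r)}$ and, since $t\leq\delta'(N)$ for a suitably chosen slow $\delta'$, one gets $d\leq\delta(k)$ for the given $\delta$. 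The binary encoding of positions is the missing idea; without it, the half-graph is unavoidable and the reduction cannot be completed.
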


The possibility of achieving time- and space-efficient algorithms for  {\sc{Longest Common Subsequence}} was also the base of conjectures formulated by Pilipczuk and Wrochna~\cite{PilipczukW18} for their lower bounds against time- and space-efficient algorithms on graphs of bounded pathwidth. The supposition made in Theorem~\ref{thm:lcs-lb} is a refined version of those conjectures that takes also the length of the sought subsequence into account. The reduction underlying Theorem~\ref{thm:lcs-lb} is loosely inspired by the constructions of~\cite{PilipczukW18}, but requires new ideas due to the different setting of tree-models of low depth.

Finally, given that the above results point to a fundamental role of shrubdepth in terms of space complexity, it is natural to ask whether shrubdepth can also be used to obtain meaningful tractability results with respect to the ``usual'' notion of fixed-parameter tractability. We conclude our exposition by highlighting two examples of problems which are \NP-hard on graphs of bounded cliquewidth (and even of bounded pathwidth)~\cite{ChlebikovaC17,LiP22}, 
and yet which admit fixed-parameter algorithms when parameterized by the shrubdepth.

\begin{restatable}{theorem}{FPTproblems}
\label{thm:FPTproblems}
{\sc Metric Dimension} and {\sc Firefighter} can be solved in fixed-parameter time on graphs supplied with $(d,k)$-tree-models, where $d$ and $k$ are considered the parameters.
\end{restatable}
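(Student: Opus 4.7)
The plan is to derive both fixed-parameter algorithms from two structural facts about $(d,k)$-tree-models combined with the meta-theorem on $\mathsf{MSO}_1$ optimization for graphs of bounded cliquewidth. First, a $(d,k)$-tree-model straightforwardly unfolds into a clique expression whose number of labels depends only on $d$ and $k$, so the cliquewidth of $G$ is bounded by a computable function of $d$ and $k$. Second, every graph equipped with a $(d,k)$-tree-model has each connected component of diameter at most some $D=D(d,k)$, since adjacency between two leaves is determined by the labels encountered on the bounded-depth paths to their lowest common ancestor; this is a standard consequence of the definition of tree-models, already implicit in~\cite{GanianHNOM19}.

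For \textsc{Metric Dimension}, the bounded diameter implies that for each $0\le i\le D$ the predicate ``$\mathrm{dist}(u,v)=i$'' is first-order definable by a formula $\delta_i(u,v)$ whose size depends only on $D$. The property ``$S$ is a resolving set'' can then be stated as the $\mathsf{MSO}_1$ formula
\[
\psi(S)\;\equiv\;\forall u\,\forall v\,\Bigl(u=v\;\vee\;\exists w\,\bigl(w\in S\wedge\textstyle\bigvee_{0\le i\le D}(\delta_i(u,w)\wedge\neg\delta_i(v,w))\bigr)\Bigr),
\]
whose total size depends only on $d$ and $k$. Invoking the extension of Courcelle's theorem to bounded cliquewidth~\cite{CourcelleMR00}, which makes $\mathsf{MSO}_1$ optimization fixed-parameter tractable in the cliquewidth plus the formula size, I minimize $|S|$ subject to $\psi(S)$; this computes the metric dimension in fpt time with respect to $d$ and $k$.

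For \textsc{Firefighter}, the same diameter bound ensures that starting from any source the fire-spreading process terminates within at most $D$ rounds, so any non-wasteful firefighter strategy is a tuple $(v_1,\dots,v_D)$ of at most $D$ vertices. I would introduce $D$ first-order variables for the strategy and unfold the spreading rules over the bounded number of rounds as an $\mathsf{MSO}_1$ formula $\chi(S)$ expressing ``there exist a strategy $(v_1,\dots,v_D)$ and intermediate burning sets $B_0,\dots,B_D$, all obeying the firefighter dynamics, such that $S$ is exactly the set of vertices that remain unburnt.'' Since both the number of rounds and the length of the strategy are bounded by a function of $d$ and $k$, the formula $\chi$ has bounded size, and maximizing $|S|$ subject to $\chi(S)$ via the same $\mathsf{MSO}_1$-optimization meta-theorem yields an fpt algorithm for \textsc{Firefighter}.

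The main technical point requiring care is the explicit derivation of the diameter bound $D(d,k)$ from the definition of a $(d,k)$-tree-model, together with the routine but verbose encoding of the iterated fire-spreading step in $\mathsf{MSO}_1$ with a formula whose size is controlled by a function of $d$ and $k$ alone; beyond these bookkeeping tasks, no conceptual obstacle is anticipated.
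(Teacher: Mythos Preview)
Your approach is essentially the paper's: bound the cliquewidth by a function of $(d,k)$, bound a relevant distance-like invariant by a function of $(d,k)$, and invoke $\mathsf{MSO}_1$ optimization on bounded-cliquewidth graphs. The paper outsources the $\mathsf{MSO}_1$ encodings to existing results (Gima et al.~\cite{GimaHKKO22} for \textsc{Metric Dimension} parameterized by cliquewidth plus diameter; Bazgan et al.~\cite{BazganCCFFL14} for \textsc{Firefighter} parameterized by width plus number of rounds), whereas you spell the formulas out directly; that difference is cosmetic.

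There is, however, a genuine gap in your \textsc{Firefighter} argument. Bounded diameter does \emph{not} bound the number of rounds of the process. Take a long path $p_0,\dots,p_n$ and add a universal vertex $u$: the diameter is~$2$, yet if the fire starts at $p_0$ and the firefighter protects $u$ in round~$1$, the fire then crawls along the path for $n$ rounds. What actually bounds the number of rounds is the length of the longest \emph{induced} path: if a vertex burns at round $i$, tracing back one burning neighbour per round yields a path of length $i$ from the source, and any chord in that path would have ignited the later endpoint earlier, so the path is induced. This induced-path bound is precisely what~\cite[Theorem~3.7]{GanianHNOM19} supplies for graphs with $(d,k)$-tree-models, and it is what the paper invokes. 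Replacing ``diameter'' by ``longest induced path'' in your \textsc{Firefighter} paragraph repairs the argument; your \textsc{Metric Dimension} paragraph, which genuinely only needs bounded diameter, is fine as written.
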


\section{Preliminaries}\label{sec:prelims}
For a positive integer $k$, we denote by $[k]=\{1,\ldots,k\}$ and $[k]_0=[k]\cup \{0\}$.
For a function $f \colon A \to B$ and elements $a, b$ (not necessarily from $A \cup B$), the function $f[a \mapsto b] \colon A \cup \{a\} \to B \cup \{b\}$ is given by $f[a \mapsto b](x) = f(x)$ for $x \neq a$ and $f[a \mapsto b](a) = b$.
We use standard graph terminology~\cite{Diestel12}. 
\ifshort
The full proofs of our results also require the use of algebraic tools---notably the cover product and the fast subset convolution machinery of Björklund et al.~\cite{BjorklundHKK07}. 
\fi

We use the same computational model as Pilipczuk and Wrochna~\cite{PilipczukW18}, namely the RAM model where each operation takes time polynomially proportional to the number of bits of the input, and the space is measured in terms of bits. We say that an algorithm $A$ runs in time $t(n)$ and space $s(n)$ if, for every input of size $n$, the number of operations of $A$ is bounded by $t(n)$ and the auxiliary used space of $A$ has size bounded by $s(n)$ bits. 

\subparagraph{Shrubdepth.}
We first introduce the decomposition notion for shrubdepth: {\em{tree-models}}.

    \begin{definition}
    For $d,k\in \bN$, a $(d,k)$-tree-model $(T, \mathcal{M}, \cR, \lab)$ of a graph $G$ is 
    a rooted tree~$T$ of depth $d$ together with a family of symmetric Boolean $k \times k$-matrices $\mathcal{M} = \{M_a\}_{a \in V(T)}$, a labeling function $\lab \colon V(G) \to [k]$, and a family of renaming functions $\cR = \{\rename_{ab}\}_{ab \in E(T)}$ with $\rename_{ab} \colon [k] \to [k]$ for all $ab \in E(T)$ such that:
    \begin{itemize}  
    	\item The leaves of $T$ are identified with vertices of $G$. For each node $a$ of $T$, we denote by $V_a \subseteq V(G)$ the leaves of $T$ that are
          descendants of $a$, and with $G_a = G[V_a]$ we denote the subgraph induced by these vertices. 
    	
    	\item With each node $a$ of $T$ we associate a labeling function $\lab_a : V_a \to [k]$ defined as follows. 
            If $a$ is a leaf, then $\lab_a(a) = \lab(a)$.
            If $a$ is a non-leaf node, then for every child $b$ of $a$ and every vertex $v \in V_b$, we have $\lab_a(v) = \rename_{ab}(\lab_b(v))$.

    	\item 
            For every pair of vertices $(u,v)$ of $G$, let $a$ denote their least common ancestor in $T$. 
            Then we have $uv \in E(G)$ if and only if $M_a[\lab_a(u),\lab_a(v)]=1$.
    \end{itemize}
    \end{definition}

        We introduce some notation. If $(T, \mathcal{M}, \cR, \lab)$ is a $(d,k)$-tree model of a graph $G$, then for every node $a$ of $T$ and every $i\in [k]$, let $V_a(i) = \lab_a^{-1}(i)$ be the set
    of vertices labeled $i$ at $a$.
    Given a subset $X$ of $V_a$ and $i\in [k]$, let $X_a(i) = X \cap V_a(i)$ be the vertices of $X$ labeled $i$ at $a$.

\iflong
    A $(d,k)$-tree-model can be understood as a term of depth $d$ that constructs a $k$-labelled graph from single-vertex graphs by means of the following operations: renaming of the labels, and joining several labelled graphs while introducing edges between vertices originating from different parts based on their labels. This makes tree-models much closer to the {\em{NLC-decompositions}} which underly the parameter {\em{NLC-width}} than to clique expressions. NLC-width is a graph parameter introduced by Wanke~\cite{Wanke94} that can be seen as an alternative, functionally equivalent variant of cliquewidth.
    \fi

    We say that a class $\mathcal{C}$ of graphs has {\em{shrubdepth}} $d$ if there exists $k\in \N$ such that every graph in $\mathcal{C}$ admits a $(d,k)$-tree-model. Thus, shrubdepth is a parameter of a graph class, rather than of a single graph; though there are functionally equivalent notions, such as {\em{SC-depth}}~\cite{GanianHNOM19} or {\em{rank-depth}}~\cite{DeVosKO20}, that are suited for the treatment of single graphs. \ifshort
We remark that in the original definition proposed by Ganian et al.~\cite{GanianHNOM19}, relabeling is not allowed; however, using either definition yields the same notion of shrubdepth. 
Moreover, throughout this work we abstract away from the computation of the tree-models themselves and assume that a $(d,k)$-tree-model of the considered graph is provided on input. 

We note that a fixed-parameter algorithm for computing tree-models has been proposed by Gajarsk\'y and Kreutzer~\cite{GajarskyK20} (in the sense of Ganian et al.~\cite{GanianHNOM19}). The approach of Gajarsk\'y and Kreutzer is essentially kernelization: they iteratively ``peel off'' isomorphic parts of the graph until the problem is reduced to a kernel of size bounded only in terms of $d$ and $k$. This kernel is then treated by any brute-force method. Consequently, a straightforward inspection of their algorithm~\cite{GajarskyK20} shows that it can be implemented with polynomial space; but not space of the form $(d+k)^{\Oh(1)}\cdot \log n$, due to the necessity of storing all the intermediate graphs in the kernelization process. We leave as an open question the computation of a $(d,k)$-tree model, for a given graph $G$, running in time $f(d,k)\cdot n^{\Oh(1)}$ and using space $(d+k)^{\Oh(1)}\cdot \log n$.

    \fi

\iflong
    We remark that in the original definition proposed by Ganian et al.~\cite{GanianHNOM19}, there is no renaming of the labels: for every vertex $u\in V(G)$, $\lambda_a(u)$ is always the same label $\lambda(u)$ for all relevant nodes $a$. This boils down to all the renaming functions $\rename_{ab}$ equal to the identify function on $[k]$. Clearly, a $(d,k)$-tree-model in the sense of Ganian et al. is also a $(d,k)$-tree-model in our sense, while a $(d,k)$-tree-model in our sense can be easily turned into a $(d,k^{d+1})$-model in the sense of Ganian et al. by setting $\lambda(u)$ to be the $d+1$ tuple of consisting of labels $\lambda_a(u)$, for $a$ ranging over the ancestors of $u$ in $T$. Thus, using either definition yields the same notion of shrubdepth for graph classes. We choose to use the definition with renaming, as it provides more flexibility in the construction of tree-models that can result in a smaller number of labels and, consequently, better running times. It is also closer to the original definitions of clique expressions or NLC-decompositions.

    Within this work we will always assume that a $(d,k)$-tree-model of the considered graph is provided on input. Thus, we abstract away the complexity of computing tree-models, but let us briefly discuss this problem. Gajarsk\'y and Kreutzer~\cite{GajarskyK20} gave an algorithm that given a graph $G$ and parameters $d$ and $k$, computes a $(d,k)$-tree-model of $G$ (in the sense of Ganian et al.~\cite{GanianHNOM19}), if there exists one, in time $f(d,k)\cdot n^{\Oh(1)}$ for a computable function $f$. The approach of Gajarsk\'y and Kreutzer is essentially kernelization: they iteratively ``peel off'' isomorphic parts of the graph until the problem is reduced to a kernel of size bounded only in terms of $d$ and $k$. This kernel is then treated by any brute-force method. Consequently, a straightforward inspection of the algorithm of~\cite{GajarskyK20} shows that it can be implemented so that it uses polynomial space; but not space of the form $(d+k)^{\Oh(1)}\cdot \log n$, due to the necessity of storing all the intermediate graphs in the kernelization process.
    \fi

    \iflong
    \subparagraph{Cover products and transforms.} We now recall the algebraic tools we are going to use.
    Let $U$ be a finite set and  $R$ be a ring. Let  $g_1, \dots, g_t \colon 2^U \to R$ be set functions, for some~integer~$t$. For every
    $i\in [t]$, 
    the \emph{zeta-transform} $\xi g_i \colon 2^U \to R$ of $g_i$ is defined by 
    \ifshort
$(\xi g_i)(Y) = \sum\limits_{X \subseteq Y} g_i(X)$,
    \fi
    \iflong
    \[
        (\xi g_i)(Y) = \sum\limits_{X \subseteq Y} g_i(X),
    \]
    \fi
    and similarly, the \emph{Möbius-transform} $\mu g_i \colon 2^U \to R$ of $g_i$ is given by
    \ifshort
$(\mu g_i)(Y) = \sum\limits_{X \subseteq Y} (-1)^{|Y \setminus X|} g_i(X)$.
    \fi
    \iflong
    \[
        (\mu g_i)(Y) = \sum\limits_{X \subseteq Y} (-1)^{|Y \setminus X|} g_i(X).
    \]
    \fi
    The \emph{cover product} $g_1 \ast_c g_2 \ast_c \dots \ast_c g_t \colon 2^U \to R$ of $g_1, \dots, g_t$ is defined by
    \[
        (g_1 \ast_c g_2 \ast_c \dots \ast_c g_t)(Y) = \sum\limits_{\substack{X_1, \dots, X_t \subseteq 2^{[k]} \colon \\ X_1 \cup \dots \cup X_t = Y}} g_1(X_1) \cdot g_2(X_2) \cdot \dots \cdot g_t(X_t).
    \]
    We emphasize that unlike another well-known concept of subset convolution, here the sets $X_1, \dots, X_t$ are not required to be pairwise disjoint.
    The following result of Björklund et al.~\cite{BjorklundHKK07} will be relevant for us:
    \begin{lemma}[\cite{BjorklundHKK07}] \label{lem:cover-product}
        Let $U$ be a finite set,  $R$ be a ring, and  $g_1, \dots, g_t \colon 2^U \to R$ be set functions for a positive integer $t$.
        Then for every $X \in 2^U$, it holds that
        \[
            (\xi(g_1 \ast_c g_2 \ast_c \dots \ast_c g_t)) (X) = (\xi g_1)(X) \cdot (\xi g_2)(X) \cdot \dots \cdot (\xi g_t)(X).
        \]
        Also for every $i \in [t]$, we have $\mu (\xi (g_i)) = g_i$.
    \end{lemma}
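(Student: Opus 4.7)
The plan is to prove the two identities separately, both by direct manipulation of the definitions; neither requires tools beyond elementary combinatorics, so I do not expect any serious obstacle.

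For the Möbius inversion claim $\mu(\xi g_i)=g_i$, the plan is to substitute the definitions of the two transforms, swap the order of summation, and then evaluate the resulting alternating sum using the standard binomial identity $\sum_{B\subseteq A}(-1)^{|B|}=1$ if $A=\emptyset$ and $0$ otherwise. Concretely,
\[
(\mu(\xi g_i))(Y)=\sum_{X\subseteq Y}(-1)^{|Y\setminus X|}\sum_{Z\subseteq X}g_i(Z)=\sum_{Z\subseteq Y}g_i(Z)\sum_{X:\,Z\subseteq X\subseteq Y}(-1)^{|Y\setminus X|};
\]
the inner sum, after the substitution $B:=Y\setminus X$ which makes $B$ range over subsets of $A:=Y\setminus Z$, collapses to the binomial sum above, so only the term $Z=Y$ survives, yielding $g_i(Y)$.

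For the convolution-zeta identity, the plan is to unfold both $\xi$ and $\ast_c$, obtaining the double sum
\[
(\xi(g_1\ast_c\cdots\ast_c g_t))(Y)=\sum_{X\subseteq Y}\;\sum_{\substack{X_1,\ldots,X_t\subseteq U\\X_1\cup\cdots\cup X_t=X}}\prod_{i=1}^{t}g_i(X_i).
\]
The key observation is that this double sum reindexes as a single sum over tuples $(X_1,\ldots,X_t)$ with $X_1\cup\cdots\cup X_t\subseteq Y$, since each such tuple is attributed to a unique $X$ in the outer sum (namely $X=X_1\cup\cdots\cup X_t$). Using now that the constraint $X_1\cup\cdots\cup X_t\subseteq Y$ is equivalent to $X_i\subseteq Y$ for every $i\in[t]$, the sum factorizes into
\[
\sum_{X_1,\ldots,X_t\subseteq Y}\prod_{i=1}^{t}g_i(X_i)=\prod_{i=1}^{t}\sum_{X_i\subseteq Y}g_i(X_i)=\prod_{i=1}^{t}(\xi g_i)(Y),
\]
where the middle equality is distributivity in the ring $R$, applied with each factor $g_i(X_i)$ staying in its prescribed position (so that no commutativity of $R$ is needed).

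If there is a delicate point at all, it is in the reindexing step: one must verify that the grouping of admissible tuples $(X_1,\ldots,X_t)$ by their union partitions the index set of the double sum, and that the "union-lies-in-$Y$" constraint decomposes coordinatewise. Once these two observations are explicitly stated, the rest is mechanical, and the proof is complete.
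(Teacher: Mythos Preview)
Your proof is correct and is the standard argument for both identities. Note that the paper does not actually prove this lemma: it is quoted verbatim as a known result from Bj\"orklund et al.~\cite{BjorklundHKK07}, so there is no ``paper's own proof'' to compare against. Your write-up is self-contained and the delicate reindexing step is handled cleanly; the remark that no commutativity of $R$ is needed (only distributivity, with each $g_i(X_i)$ kept in position) is a nice touch that goes slightly beyond what the statement requires.
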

\fi

\section{Space-Efficient Algorithms on Tree-Models}
\ifshort
\noindent \textbf{Independent Set.} \quad 
\fi
\iflong
 \subsection{Independent Set}\label{subsec:independent-set}
\fi
    In this section, we provide a fixed-parameter algorithm 
    computing the independent set polynomial of a graph in time $2^{\mathcal{O}(dk)} \cdot \polyn$ and using  $\poly(d, k) \log n$ space, when given a $(d,k)$-tree model. In particular, given a $(d, k)$-tree model $(T, \mathcal{M}, \cR, \lab)$ of an $n$-vertex graph~$G$, our algorithm will allow to compute the number of independent sets of size $p$ for each $p\in [n]$. 
    For simplicity of representation, we start by describing an algorithm that uses $\poly(d, k, n)$ space and then show how a result by Pilipczuk and Wrochna~\cite{PilipczukW18} can be applied to decrease the space complexity to $ \poly(d, k) \log n$.

    

    In order to simplify forthcoming definitions/statements, let $a$ be an internal node of $T$ with $b_1,\ldots, b_t$ as children. 
    For $S\subseteq [k]$, we denote by $q(a,S,p)$ the number of independent sets $I$ of size $p$ of $G_a$ such that $S=\{i\in [k] \mid I_a(i)\ne \emptyset\}$. 
    Let us define the polynomial
    \ifshort
$\IS(a,S) = \sum_{p\in \NN} q(a,S,p) \cdot x^p$.
    \fi
    \iflong
    \begin{align*}
        \IS(a,S) & = \sum\limits_{p\in \NN} q(a,S,p) \cdot x^p.
    \end{align*}
    \fi
    For the root $r$ of $T$, the number of independent sets of~$G$ of size $p$ is then given by 
    \iflong
    \[ \sum\limits_{S\subseteq [k]} q(r,S,p).\]
    \fi
    \ifshort
$\sum_{S\subseteq [k]} q(r,S,p)$
    \fi
    and the independent set polynomial of~$G$ is
    \iflong
    \[
        \sum\limits_{S \subseteq [k]} \IS(r, S).
    \]
    \fi
    \ifshort
$\sum_{S \subseteq [k]} \IS(r, S).$
    \fi
    Therefore, the problem boils down to the computation of $\IS(r, S)$ and its coefficients $q(r,S,p)$. A usual way to obtain a polynomial or logarithmic space algorithm is a top-down traversal of a rooted tree-like representation of the input---in our case, this will be the tree model. In this top-down traversal, the computation of coefficients $q(a,S,p)$ of $\IS(a, S)$ 
    makes some requests to the coefficients $q(b_i,S_i,p_i)$ of $\IS(b_i, S_i)$
    for each $i\in [t]$, for some integer $p_i$, and some set $S_i$ of labels of $G_{b_i}$ so that $\sum_{i\in [t]} p_i = p$ and $\bigcup_{i\in [t]} \rename_{ab_i}(S_i) = S$.
    Since there are exponentially many (in $t$) possible partitions of $p$ into $t$ integers and $t$ can be $\Theta(n)$, we must avoid running over all such integer partitions, and this will be done by the fast computation of a certain subset cover. 
    
    We will later show that if some independent set of $G_a$ contains vertices of labels $i$ and~$j$ with $M_a[i, j] = 1$, then all these vertices come from the same child of $a$. In particular, the vertices of label $i$ (rsp. $j$) cannot come from multiple children of $a$. 
    To implement this observation, after fixing a set $S$ of labels, for each label class in $S$ we ``guess'' (i.e., branch on) whether it will come from a single child of $a$ or from many. 
    Such a guess is denoted by $\alpha \colon S\to \{1_{=}, 2_{\geq}\}$.
    So, the assignment $\alpha$ will allow us to control the absence of edges in the sought-after independent set.
    For a fixed $\alpha$, naively branching over all possibilities of assigning the labels of $S$ to the children of $a$ with respect to $\alpha$ would take time exponential in $t$, which could be as large as $\Theta(n)$.
    We will use inclusion-exclusion branching to speed-up the computations while retaining the space complexity.
    In some sense, we will first allow less restricted assignments of labels to the children of $a$, and then filter out the ones that result in non-independent sets using the construction of a certain auxiliary graph.
    The former will be implemented by using ``less restricted'' guesses $\beta \colon S \to \{1_=, 1_\geq\}$ where $1_\geq$ reflects that vertices of the corresponding label come from at least one child of $a$.
    Note that if the vertices of some label $i$ come from exactly one child of $a$, then such an independent set satisfies both $\beta(i) = 1_=$ and $\beta(i) = 1_\geq$.
    Although it might seem counterintuitive, this type of guesses will enable a fast computation of a certain subset cover.
    After that, we will be able to compute the number of independent sets satisfying guesses of type $\alpha \colon S \to \{1_=, 2_\geq\}$ by observing that independent sets where some label $i$ occurs in at least two children of $a$ can be obtained by counting those where label $i$ occurs in at least one child and subtracting those where this label occurs in exactly one child.    

    We now proceed to a formalization of the above.    
    Let $S\subseteq \lab_a(V_a)$ and $\alpha \colon S\to \{1_{=}, 2_{\geq}\}$ be fixed.
    Let $s_1,\ldots, s_{\abs{\alpha^{-1}(2_{\geq})}}$ be an arbitrary linear ordering of $\alpha^{-1}(2_{\geq})$. 
    To compute the number of independent sets that match our choice of $\alpha$, we proceed by iterating over $c\in \{0,\ldots, \abs{\alpha^{-1}(2_{\geq})}\}$, and we count independent sets where the labels in $\{s_1, \dots, s_c\}$ occur exactly once, and the number of such sets where the labels occur at least once. Later, we will obtain the desired number of independent sets via carefully subtracting these two values. In particular, let $\gamma \colon \{s_1,\ldots,s_c\}\to \{1_{=}, 1_{\geq}\}$, and we denote by $q(a,S,\alpha,c,\gamma,p)$ the number of independent sets $I$ of size $p$ of $G_a$ such that 
     \begin{itemize}
        \item for every label $i \notin S$, we have $I_a(i)=\emptyset$;
        \item for every label $i \in \{s_1, \dots, s_c\}$ with $\gamma(i) = 1_=$, 
        there exists a unique child $b_j$ of $a$ such that $I_a(i)\cap V_{b_j} \neq \emptyset$;
        \item for every label $i \in \{s_1, \dots, s_c\}$ with $\gamma(i) = 1_\geq$, 
        there exists at least one child $b_j$ of $a$ such that $I_a(i)\cap V_{b_j} \neq \emptyset$;
        \item for every label $i \in S \setminus \{s_1, \dots, s_c\}$ with $\alpha(i) = 1_=$, 
        there exists a unique child $b_j$ of $a$ such that $I_a(i)\cap V_{b_j} \neq \emptyset$;
        \item and for every label $i \in S \setminus \{s_1, \dots, s_c\}$ with $\alpha(i) = 2_\geq$, there exist at least two children~$b_{j_1}$ and $b_{j_2}$ of $a$ such that $I_a(i)\cap V_{b_{j_1}}\ne \emptyset$ and $I_a(i)\cap V_{b_{j_2}}\ne \emptyset$.
    \end{itemize}
    \iflong
    Then for $c \in [\alpha^{-1}(2_{\geq})]_0$ we define the polynomial $T(a, S, \alpha,c,\gamma) \in \ZZ[x]$ as 
    \[
        T(a, S, \alpha,c,\gamma) = \sum\limits_{p \in \NN_0} q(a,S,\alpha,c,\gamma,p) x^p.
    \]\fi
    We now proceed with some observations that directly follow from the definitions. 

\ifshort
     \begin{observation} \label{obs:qasp} 
     We have $q(a,S,p)$ = $\sum_{\alpha\in \{1_=,2_\geq\}^S, \gamma\in \{1_=,1_\geq\}^\emptyset} q(a,S,\alpha,0,\gamma,p)$ for every $S\subseteq \lab_a(V_a)$ and integer $p$.
         Also, for every $\alpha\in \{1_=,2_\geq\}^S$, every $c \in \{0,\ldots, \abs{\alpha^{-1}(2_\geq)} - 1\}$ and every $\gamma \colon \{s_1, \dots, s_c\} \to \{1_=, 1_\geq\}$, 
         we have
        $q(a,S,\alpha,c,\gamma,p)  = q(a,S,\alpha, c+1, \gamma[s_{c+1} \mapsto 1_\geq],p) - q(a,S,\alpha,c+1, \gamma[s_{c+1} \mapsto 1_=], p)$.
\end{observation}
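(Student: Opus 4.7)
My plan is to verify both identities by unpacking the definitions and applying a one-line inclusion-exclusion argument for the recursion. Neither part looks like it will require any genuine ingenuity: both equalities are definitional once one parses the meaning of $\alpha$, $\gamma$, and $c$. The only thing to watch out for is keeping track of which labels are controlled by $\alpha$ versus $\gamma$, and not double-counting.

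For the first identity, I start from the definition of $q(a,S,p)$ as the number of independent sets $I$ of size $p$ in $G_a$ with $\{i \in [k] : I_a(i) \neq \emptyset\} = S$. Every such $I$ naturally induces a function $\alpha_I \colon S \to \{1_=, 2_\geq\}$ by setting $\alpha_I(i) = 1_=$ if the vertices of $I$ of label $i$ all lie in a single child of $a$, and $\alpha_I(i) = 2_\geq$ otherwise. Conversely, fixing $\alpha$ and then counting independent sets compatible with that $\alpha$ is precisely what $q(a,S,\alpha,0,\gamma,p)$ measures when $c=0$ (since then $\gamma$ has empty domain and imposes no additional condition, and the conditions inherited from $\alpha$ are exactly the ``exactly one child'' and ``at least two children'' constraints in the definition of $\alpha_I$). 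Thus summing $q(a,S,\alpha,0,\gamma,p)$ over all $\alpha \in \{1_=, 2_\geq\}^S$ (and the unique $\gamma \in \{1_=, 1_\geq\}^\emptyset$) partitions the independent sets counted by $q(a,S,p)$, yielding the identity.

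For the recursion, I fix $\alpha$, $c$, and $\gamma \colon \{s_1, \dots, s_c\} \to \{1_=, 1_\geq\}$, and focus on the label $s_{c+1}$, for which $\alpha(s_{c+1}) = 2_\geq$. In the left-hand side $q(a,S,\alpha,c,\gamma,p)$, the label $s_{c+1}$ is governed by $\alpha$ and so is required to appear in at least two children of $a$. In the two terms on the right, $s_{c+1}$ is moved into $\gamma$: in $\gamma[s_{c+1} \mapsto 1_\geq]$ the requirement becomes ``appears in at least one child'', and in $\gamma[s_{c+1} \mapsto 1_=]$ the requirement becomes ``appears in exactly one child''. For every other label, the constraints imposed by $\alpha$ and $\gamma$ are identical on both sides. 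Hence, for any fixed behavior of the other labels, the set of independent sets counted by the left-hand side is exactly the set of independent sets counted by the first right-hand term minus those counted by the second, since (at least two children) $=$ (at least one child) $\setminus$ (exactly one child). Summing over the other labels yields the claimed equality.

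The main potential pitfall is purely notational: one has to be careful that the set $\alpha^{-1}(2_\geq) \setminus \{s_1, \dots, s_c\}$ still contains $s_{c+1}$ as an element (so that $\alpha$ still contributes the ``at least two children'' constraint for $s_{c+1}$ on the left-hand side), and that after the shift to $c+1$ this element is removed from $\alpha$'s active domain and transferred to $\gamma$'s. Once this bookkeeping is recorded explicitly, both identities follow without further calculation, so I expect the entire proof to be only a few lines.
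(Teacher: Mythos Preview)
Your proposal is correct and matches the paper's treatment: the paper states this observation without proof, remarking only that it ``directly follow[s] from the definitions,'' and your argument is precisely the careful unpacking of those definitions that justifies this. The two points you highlight---that the conditions $1_=$ and $2_\geq$ are mutually exclusive and exhaustive for each label in $S$ (yielding the partition in the first identity), and that ``at least two'' equals ``at least one'' minus ``exactly one'' for $s_{c+1}$ (yielding the recursion)---are exactly what the authors intend.
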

\fi
\iflong
\begin{observation} \label{obs:qasp} 
    For every $S\subseteq \lab_a(V_a)$ and integer $p$, we have
    \[
        q(a,S,p) = \sum\limits_{\substack{\alpha\in \{1_=,2_\geq\}^S, \\ \gamma\in \{1_=,1_\geq\}^\emptyset}} q(a,S,\alpha,0,\gamma,p)
    \]
    and hence,
    \begin{equation} \label{eq:incl-excl-base}
        \IS(a, S) = \sum\limits_{\substack{\alpha\in \{1_=,2_\geq\}^S \\ \gamma\in \{1_=,1_\geq\}^\emptyset}}T(a, S, \alpha, 0,\gamma)
    \end{equation}
    Moreover, for every $\alpha\in \{1_=,2_\geq\}^S$, every $c \in \{0,\ldots, \abs{\alpha^{-1}(2_\geq)} - 1\}$ and every $\gamma \colon \{s_1, \dots, s_c\} \to \{1_=, 1_\geq\}$, 
    we have
    \[
        q(a,S,\alpha,c,\gamma,p)  = q(a,S,\alpha, c+1, \gamma[s_{c+1} \mapsto 1_\geq],p) - q(a,S,\alpha,c+1, \gamma[s_{c+1} \mapsto 1_=], p).
    \]   
    and hence
    \begin{equation} \label{eq:incl-excl-step-1}
        T(a, S, \alpha,c,\gamma) = T(a, S, \alpha,c+1,\gamma[s_{c+1} \mapsto 1_\geq]) - T(a, S, \alpha,c+1,\gamma[s_{c+1} \mapsto 1_=]).
    \end{equation}
\end{observation}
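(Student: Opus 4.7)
The plan is to verify each of the two identities by directly unpacking the definition of $q(a,S,\alpha,c,\gamma,p)$ and matching the sets being counted; no machinery beyond elementary inclusion-exclusion is required.

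For the first identity, I would observe that at $c = 0$ the five bullets defining $q(a,S,\alpha,0,\gamma,p)$, with $\gamma$ the unique empty function, collapse to: labels outside $S$ are absent from $I$, while every label $i \in S$ either appears in exactly one child of $a$ or in at least two children. This dichotomy is exhaustive and mutually exclusive, so every independent set $I$ of size $p$ in $G_a$ with $S = \{i : I_a(i) \neq \emptyset\}$ determines a unique function $\alpha \colon S \to \{1_=, 2_\geq\}$: set $\alpha(i) = 1_=$ iff all the $I$-vertices of label $i$ lie in a single child of $a$, and $\alpha(i) = 2_\geq$ otherwise. Summing over $\alpha$ therefore yields exactly $q(a,S,p)$.

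For the inclusion-exclusion step, I would fix $\alpha$, $c$, and $\gamma$ and compare the three counts bullet by bullet. Since $c + 1 \leq |\alpha^{-1}(2_\geq)|$, the label $s_{c+1}$ is well defined and satisfies $\alpha(s_{c+1}) = 2_\geq$. The only bullet that differs between $q(a,S,\alpha,c,\gamma,p)$ and the two summands on the right is the one governing $s_{c+1}$: on the left it demands that at least two children of $a$ contribute a vertex of label $s_{c+1}$ to $I$; on the right it becomes ``at least one child contributes'' (for $\gamma[s_{c+1}\mapsto 1_\geq]$) or ``exactly one child contributes'' (for $\gamma[s_{c+1}\mapsto 1_=]$). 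Because the event ``at least one'' decomposes disjointly into ``exactly one'' and ``at least two'', the required subtraction identity holds setwise, and hence also at the level of counts.

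The only genuine care is bookkeeping: one must confirm that the bullets governing labels in $\{s_1,\ldots,s_c\}$, in $S \setminus \{s_1,\ldots,s_{c+1}\}$, and outside $S$ remain untouched when $(c,\gamma)$ is replaced by $(c+1,\gamma[s_{c+1}\mapsto \cdot])$. This is immediate, as $\gamma$ and both its extensions agree on $\{s_1,\ldots,s_c\}$, and $s_{c+1}$ is the sole label that migrates from the $\alpha$-governed group into the $\gamma$-governed group. I therefore do not anticipate any substantive obstacle; the observation is the purely combinatorial foundation on which the algorithmic inclusion-exclusion of the subsequent sections will rest.
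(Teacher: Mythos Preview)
Your proposal is correct and matches the paper's approach: the paper simply states that the observation ``directly follows from the definitions'' and gives no further argument, and your write-up is precisely the natural unpacking of those definitions---partitioning by the unique $\alpha$ determined by each independent set for the first identity, and using the disjoint decomposition ``at least one $=$ exactly one $\cup$ at least two'' for the second.
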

\fi

\iflong
It remains then to show how to compute, for every $\alpha\in \{1_=,2_\geq\}^S$ and every $\gamma \in \{1_=,1_\geq\}^{\alpha^{-1}(2_\geq)}$, the polynomial 
$T(a,S,\alpha,\abs{\alpha^{-1}(2_\geq)},\gamma)$.
\fi 
\ifshort
It remains then to show how to compute, for every $\alpha\in \{1_=,2_\geq\}^S$, every $\gamma \in \{1_=,1_\geq\}^{\alpha^{-1}(2_\geq)}$, and every integer $p$ the value $q(a,S,\alpha,\abs{\alpha^{-1}(2_\geq)},\gamma,p)$.
\fi
It is worth mentioning that if $\beta\colon S\to \{1_=,1_\geq\}$ is such that $\beta^{-1}(1_=)=\alpha^{-1}(1_{=})\cup \gamma^{-1}(1_=)$ and $\beta^{-1}(1_\geq)=\alpha^{-1}(2_{\geq})\setminus \gamma^{-1}(1_=)$, then $q(a,S,\alpha, \abs{\alpha^{-1}(1_\geq)}, \gamma,p)$ is exactly the number of
independent sets $I$ of size $p$ of $G_a$ satisfying the following:
\begin{enumerate}
\item For every $i \in [k] \setminus S$, we have $I_a(i) = \emptyset$.
\item For every $i \in \beta^{-1}(1_{=})$, 
there exists a unique index $j \in [t]$ such that $I_a(i) \cap V_{b_j} \neq \emptyset$.
\item For every $i \in \beta^{-1}(1_\geq)$, 
there exists a (not necessarily unique) index $j \in [t]$ such that $I_a(i)
  \cap V_{b_j} \neq \emptyset$. 
\end{enumerate}
We will therefore write $q(a,S,\beta,p)$ instead of $q(a,S,\alpha, \abs{\alpha^{-1}(1_\geq)}, \gamma,p)$ and we define the polynomial $\TIS(a, S, \beta) \in \ZZ[x]$ (where ``T'' stands for ``transformed'') as
  $\TIS(a,S,\beta) = \sum_{p\in \NN} q(a,S,\beta, p)\cdot x^p. $
Recall that because we are computing 
$\IS(a, S)$
and
$\TIS(a, S, \beta)$
in a top-down manner, some queries for 
$\IS(b_i, S_i)$
will be made during the computation. 
Before continuing in the computation of $\TIS(a,S,\beta)$, let us first explain how to request the polynomials $\IS(b_j,S_j)$ from each child $b_j$ of $a$.  If $a$ is not the root, let $a^*$ be its parent in $T$, and we use $\PIS(a, S)$ 
(where ``P'' stands for ``parent'') to denote the polynomial
 \iflong
        \begin{align*} \PIS(a, S) & = \sum\limits_{p \in \NN_0} q^{\rename}(a, S, p) \cdot x^p\\ 
    \textrm{where}\\
          q^{\rename}(a, S, p) &= \sum\limits_{\substack{D \subseteq \lab_a(V_a)\colon \\ \rename_{a^*a}(D) = S}} q(a, D, p)
          %
          \end{align*}
          \fi
\ifshort
$\PIS(a, S) = \sum_{p \in \NN_0} q^{\rename}(a, S, p) x^p$ where $q^{\rename}(a, S, p) = \sum_{D \subseteq \lab_a(V_a)\colon \rename_{a^*a}(D) = S} q(a, D, p)$
\fi          
    is the number of independent sets of $G_a$ of size $p$ that contain a vertex with label $i \in [k]$ (i.e., $I_{a*}(i) \neq \emptyset$) if and only if $i \in S$ holds, \textbf{where the labels are treated with respect to $\lab_{a^*}$}.
    Then it holds that 
    \begin{equation}\label{eq:recursion-pis}
        \PIS(a, S) = \sum\limits_{\substack{D \subseteq \lab_a(V_a)\colon \\ \rename_{a^*a}(D) = S}} \IS(a, D) \enspace .
    \end{equation}
   
As our next step, we make some observations that will not only allow to restrict the $\beta$'s we will need in computing the polynomial $\IS(a, S)$ from the polynomials $\TIS(a, S, \beta)$,
but will also motivate the forthcoming definitions. Recall that we have fixed $S\subseteq \lab_a(V_a)$ and $\beta\colon S\to \{1_=,1_\geq\}$, and in $\IS(a,S)$ and $\TIS(a,S,\alpha)$
we are only counting independent sets $I$ such that $I_a(i)\ne \emptyset$ if and only if $i\in S$.  

\begin{observation} 
\label{obs:alpha-2-geq} If there exist $i_1, i_2 \in S$ such that $M_a[i_1, i_2] = 1$, then for any independent set $I$ counted in
  $\IS(a,S)$, there exists a unique $j\in [t]$ such that $I_a(i_1)\cup I_a(i_2)\subseteq V_{b_j}$.
\end{observation}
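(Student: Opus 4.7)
The plan is to proceed by a straightforward contradiction argument relying solely on the edge rule of tree-models. Suppose, towards a contradiction, that there exist $u \in I_a(i_1)$ and $v \in I_a(i_2)$ which lie in different child-subtrees, say $u \in V_{b_{j_1}}$ and $v \in V_{b_{j_2}}$ with $j_1 \neq j_2$. Since the subtrees rooted at distinct children of $a$ are vertex-disjoint, the least common ancestor of $u$ and $v$ in $T$ is exactly $a$. By definition of the tree-model, this means $uv \in E(G)$ iff $M_a[\lab_a(u), \lab_a(v)] = 1$. Since $u \in I_a(i_1)$ we have $\lab_a(u) = i_1$, and similarly $\lab_a(v) = i_2$; together with the hypothesis $M_a[i_1, i_2] = 1$ this yields $uv \in E(G)$, contradicting the fact that $I$ is an independent set of $G_a$.

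Consequently, all vertices in $I_a(i_1) \cup I_a(i_2)$ must reside in the same subtree $V_{b_j}$ for some index $j \in [t]$. Because $i_1, i_2 \in S$ forces both $I_a(i_1)$ and $I_a(i_2)$ to be nonempty, such a $j$ exists, and uniqueness is immediate from the fact that the sets $\{V_{b_1}, \ldots, V_{b_t}\}$ partition $V_a$ (every leaf of $T$ descends from exactly one child of $a$).

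The only subtlety to handle is the case $i_1 = i_2$: here the hypothesis $M_a[i_1, i_1] = 1$ still yields the desired conclusion, because if two distinct vertices of label $i_1$ lay in different child-subtrees their LCA would again be $a$ and the edge rule would force an edge between them. Thus this degenerate case causes no difficulty. There is no substantial obstacle in this proof; the statement is essentially an unpacking of the definition of tree-models combined with the independence of $I$.
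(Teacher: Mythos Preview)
Your proof is correct and follows essentially the same approach as the paper: assume vertices of labels $i_1$ and $i_2$ lie in different child subtrees, use the tree-model edge rule at their least common ancestor $a$ to produce an edge inside $I$, and derive a contradiction. Your explicit treatment of existence, uniqueness, and the degenerate case $i_1=i_2$ is slightly more detailed than the paper's version, but the argument is the same.
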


\iflong
\begin{proof} Both $I_a(i_1)$ and $I_a(i_2)$ are non-empty. 
    So if there are at least two distinct $j_1$ and $j_2$ in $[t]$ such that $I_1 \coloneqq I_a(i_1)\cap V_{b_{j_1}}$ and $I_2 \coloneqq I_a(i_2)\cap V_{b_{j_2}}$ are non-empty, 
    then $M_a[i_1, i_2] = 1$ implies that there is a complete bipartite graph between $I_1$ and $I_2$.
    Hence the graph induced on $I$ would contain an edge, which is a contradiction. 
\end{proof}
\fi

Recall that for every label $i\in \alpha^{-1}(2_\geq)$, each independent set $I$ contributing to the value $q(a,S,\alpha,0,\gamma,p)$ has the property that there are distinct children $b_{j_1}$ and $b_{j_2}$ such that $I_a(i)\cap V_{b_{j_1}}$ and $I_a(i)\cap V_{b_{j_2}}$ are both non-empty. 
Then by \cref{obs:alpha-2-geq} for every $i_1\in S$ it holds that if $\alpha(i_1)=2_\geq$, then $M_a[i_1,i_2]=0$ for all $i_2\in S$. 
So if $\alpha$ does not satisfy this, the request $T(a,S,\alpha,0,\gamma)$ can be directly answered with $0$.
Otherwise, since we use \cref{obs:qasp} 
for recursive requests, the requests $\TIS(a, S, \beta)$ made all have the property that for each $i_1\in S$ the following holds: if $\beta(i_1)=1_\geq$, then $M_a[i_1,i_2]=0$ for all $i_2\in S$.
We call such $\beta$'s \emph{conflict-free} and we restrict ourselves to only conflict-free $\beta$'s. 
In other words, we may assume that if $i_1, i_2 \in S$ and $M_a[i_1, i_2] = 1$, then we have $\beta(i_1) = \beta(i_2) = 1_=$.
Observation \ref{obs:alpha-2-geq} implies that for such $i_1$ and $i_2$, each independent set $I$ counted in $\TIS(a,S,\beta)$ is such that $I_a(i_1) \cup I_a(i_2) \subseteq V_{b_j}$ for some child $b_j$ of $a$. 
Now, to
capture this observation, we define an auxiliary graph $F^{a, \beta}$ as follows.  The vertex set of $F^{a, \beta}$ is $\beta^{-1}(1_=)$ and there is an edge
between vertices $i_1 \neq i_2$ if and only if $M_a[i_1, i_2] = 1$.  Thus, by the above observation, if we consider a connected component $C$ of $F^{a, \beta}$,
then in each independent set $I$ counted in $\TIS(a, S, \beta)$, all the vertices of $I$ with labels from $C$ come from a single child of $a$.
    
\begin{observation} \label{obs:connected-components} Let $C$ be a connected component of $F^{a, \beta}$. 
For every independent set
  $I$ counted in $\TIS(a, S, \beta)$, there exists a unique $j\in [t]$ such that $\bigcup_{i \in C} I_a(i) \subseteq V_{b_{j}}$. 
\end{observation}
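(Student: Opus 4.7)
The plan is to prove the claim by propagating Observation~\ref{obs:alpha-2-geq} along paths of the auxiliary graph $F^{a,\beta}$. First I would observe that every independent set $I$ counted in $\TIS(a,S,\beta)$ satisfies $I_a(i)\neq\emptyset$ for every $i\in S$: this is immediate from unwinding the definition of $q(a,S,\beta,p)$, where for each $i\in S=\beta^{-1}(1_=)\cup\beta^{-1}(1_\geq)$, at least one child $b_j$ has $I_a(i)\cap V_{b_j}\neq\emptyset$. In particular, for every $i\in C\subseteq\beta^{-1}(1_=)$, condition~2 in the definition of $q(a,S,\beta,p)$ gives a unique child $b_{j(i)}$ of $a$ with $I_a(i)\cap V_{b_{j(i)}}\neq\emptyset$; since the intersections with the remaining children are then empty and the sets $V_{b_1},\dots,V_{b_t}$ partition $V_a$, we in fact have $I_a(i)\subseteq V_{b_{j(i)}}$. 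This defines a map $j\colon C\to[t]$.

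The core step is to show that $j$ is constant on $C$. For any edge $i_1i_2$ of $F^{a,\beta}$, by construction $M_a[i_1,i_2]=1$ and $i_1,i_2\in\beta^{-1}(1_=)\subseteq S$. Since $I$ is also counted in $\IS(a,S)$, Observation~\ref{obs:alpha-2-geq} produces a unique index $j_0\in[t]$ with $I_a(i_1)\cup I_a(i_2)\subseteq V_{b_{j_0}}$; combined with $I_a(i_1),I_a(i_2)\neq\emptyset$ and the uniqueness inherent in the definition of $j$, this forces $j(i_1)=j(i_2)=j_0$. As $C$ is connected in $F^{a,\beta}$, applying this fact to each edge of a path in $F^{a,\beta}$ between two chosen vertices of $C$ shows that $j$ is constant on $C$; denote this common value by $j^\ast$. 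Then $\bigcup_{i\in C}I_a(i)\subseteq V_{b_{j^\ast}}$, and the uniqueness of $j^\ast$ is immediate from $I_a(i)\neq\emptyset$ for any single $i\in C$ together with the pairwise disjointness of $V_{b_1},\dots,V_{b_t}$.

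I do not expect a genuine obstacle in this proof: the statement is essentially a clean lifting of Observation~\ref{obs:alpha-2-geq} from individual adjacent pairs of labels to connected components of $F^{a,\beta}$, and the argument is a routine propagation along paths. The only point worth being careful about is that the vertex set of $F^{a,\beta}$ is precisely $\beta^{-1}(1_=)$ rather than all of $S$, so the uniqueness clause in condition~2 of the definition of $q(a,S,\beta,p)$ applies to every label in $C$ and legitimately turns $j$ into a well-defined map.
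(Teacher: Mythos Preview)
Your proposal is correct and follows precisely the approach the paper intends: the paper states this as an observation without a formal proof, merely noting just before it that by Observation~\ref{obs:alpha-2-geq} the labels in any connected component of $F^{a,\beta}$ must come from a single child, which is exactly the propagation-along-paths argument you spell out. Your care in verifying that every $I$ counted in $\TIS(a,S,\beta)$ is also counted in $\IS(a,S)$ (so that Observation~\ref{obs:alpha-2-geq} applies) and that the map $j$ is well-defined on $C\subseteq\beta^{-1}(1_=)$ fills in the details the paper leaves implicit.
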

    
    We proceed with some intuition on how we compute $\TIS(a, S, \beta)$ by requesting some $\PIS(b_j, S_j)$.
    Let $I$ be some independent set counted in $\TIS(a, S, \beta)$. 
    This set contains vertices with labels from the set $S$, and the assignment $\beta$ determines whether there is exactly one or at least one child from which the vertices of a certain label come from.
    Moreover, by \cref{obs:connected-components}, for two labels $i_1, i_2$ from the same connected component of $F^{a, \beta}$, the vertices with labels $i_1$ and $i_2$ in $I$ come from the same child of $a$. 
    Hence, to count such independent sets, we have to consider all ways to assign labels from $S$ to subsets of children of $a$ such that the above properties are satisfied---namely, each connected component of $F^{a, \beta}$ is assigned to exactly one child while every label from $\beta^{-1}(1_{\geq})$ is assigned to at least one child.
    Since the number of such assignments can be exponential in $n$, we employ the fast computation of a certain subset cover.
    
    We now formalize this step.
    Let $\cc(F^{a, \beta})$ we denote the set of connected components of~$F^{a, \beta}$.
    The universe $U^{a, \beta}$ (i.e., the set of objects we assign to the children of $a$) is defined as
        $U^{a, \beta} = \beta^{-1}(1_{\geq}) \cup \cc(F^{a, \beta})$.
    For every $j \in [t]$, we define a mapping $f_j^{a, \beta} \colon 2^{U^{a, \beta}} \to \ZZ[x, z]$ (i.e., to polynomials over $x$ and $z$) as follows:
        $f_j^{a, \beta}(X) = \PIS(b_j, \flatten^{a, \beta}(X)) z^{|X \cap \cc(F^{a, \beta})|}$
    where $\flatten^{a, \beta} \colon 2^{U^{a, \beta}} \to 2^S$ 
    intuitively performs a union over all the present labels---formally:
\iflong
    \[
        \flatten^{a, \beta}(W) = (W \cap \beta^{-1}(1_{\geq})) \cup \bigcup\limits_{w \in W \cap \cc(F^{a, \beta})} w.
    \]
    \fi
\ifshort
$\flatten^{a, \beta}(W) = (W \cap \beta^{-1}(1_{\geq})) \cup \bigcup_{w \in W \cap \cc(F^{a, \beta})} w.$
\fi    
    So if we fix the set $X$ of labels coming from the child $b_j$, then the (unique) coefficient in $f_j^{a, \beta}(X)$ reflects the number of independent sets of $G_{b_j}$ using exactly these labels (with respect to $\lab_a$).
    The exponent of the formal variable $z$ is intended to store the number of connected components of $F^{a, \beta}$ assigned to $b_j$. 
    This will later allow us to exclude from the computation those assignments of labels from $S$ to children of $a$ where the elements of some connected component of $F^{a, \beta}$ are assigned to multiple children of $a$.
    For every $j \in [t]$, we define a similar function $g^{a, \beta}_j \colon 2^{S} \to \ZZ[x, z]$ as follows:
    \[
        g^{a, \beta}_j(Y) = 
        \begin{cases}
            f^{a, \beta}_j(X) & \text{if }  \flatten^{a, \beta}(X) = Y \text{ for some } X \in 2^{U^{a, \beta}},  \\
            0 & \text{otherwise.}
        \end{cases}
    \]
    Observe that the function $\flatten^{a, \beta}$ is injective and hence $g^{a, \beta}_j$ is well-defined.
    The mapping~$g^{a, \beta}_j$ filters out those assignments where some connected component of $F^{a, \beta}$ is ``split''.
    \iflong
    For simplicity of notation, when $a$ and $\beta$ are clear from the context, we omit the superscript $a, \beta$.
    
    Crucially for our algorithm, we claim that the following holds:
    \[
        \TIS(a, S, \beta) = \left(\sum\limits_{\substack{X_1, \dots, X_t \subseteq [k] \colon \\ X_1 \cup \dots \cup X_t = S}} g_1(X_1) g_2(X_2) \dots g_t(X_t) \right) \langle z^{|\cc(F)|} \rangle
    \]
    where for a polynomial $P = \sum_{u_1, u_2 \in \NN_0} q_{u_1, u_2} x^{u_1} z^{u_2} \in \ZZ[x, z]$ the polynomial $P \langle z^{|\cc(F)|} \rangle \in \ZZ[x]$ is defined as $P \langle z^{|\cc(F)|} \rangle = \sum_{u_1 \in \NN_0} q_{u_1, |\cc(F)|} x^{u_1}$.
    In simple words, the $\langle z^{|\cc(F)|} \rangle$ operator first removes all terms where the degree of $z$ is not equal to $|\cc(F)|$ and then ``forgets'' about $z$.
    Before we provide a formal proof, let us sketch the idea behind it. 
    On the left side of the equality, we have the polynomial keeping track of the independent sets of $G_a$ that ``respect'' $\beta$. 
    First, for every label $i \in S$, some vertex of this label must occur in at least one child of $a$: this is handled by considering all covers $X_1 \cup \dots \cup X_t = S$ where for every $j \in [t]$, the set $X_j$ represents the labels assigned to the child $b_j$. 
    Next, if some $X_j$ ``splits'' a connected component, i.e., takes only a proper non-empty subset of this component, then such an assignment would not yield an independent set by \cref{obs:connected-components} and the function $g_j$ ensures that the corresponding cover contributes zero to the result.
    Hence, for every cover $X_1 \cup \dots \cup X_t = S$ with a non-zero contribution to the sum, every connected component of $F$ is completely contained in at least one $X_j$.
    In particular, this implies that for every non-zero term on the right side, the degree of the formal variable $z$ in this term is at least $z^{|\cc(F)|}$.
    On the other hand, if some connected component of $F$ is contained in several sets $X_j$, then the degree of the corresponding monomial is strictly larger than the total number of connected components and such covers $X_1, \dots, X_t$ are excluded from the consideration by applying $\langle z^{|\cc(F)|} \rangle$.
    We formalize this intuition below:
    \fi
    
    \ifshort
    \todo{R: if space permits, can put 10-line intuition here from long version.}
    \fi
    \iflong
    \begin{lemma}
    \label{lem:tis-correctness}
        Let $(T, \mathcal{M}, \cR, \lab)$ be a $(d,k)$-tree model of an $n$-vertex graph $G$. Let $a$ be a non-leaf node of $T$ and let $b_1, \dots, b_t$ be the children of $a$. For every $S \subseteq \lab_a(V_a)$, and every conflict-free $\beta \colon S \to \{1_=, 1_\geq\}$, it holds that
        \[
            \TIS(a, S, \beta) = \left(\sum\limits_{\substack{X_1, \dots, X_t \subseteq [k] \colon \\ X_1 \cup \dots \cup X_t = S}} \left (\prod\limits_{j = 1}^t g^{a, \beta}_j(X_j)\right)\right) \langle z^{|\cc(F^{a, \beta})|} \rangle.
        \]
    \end{lemma}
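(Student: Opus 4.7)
The plan is to establish the identity by expanding the right-hand side, identifying which cover terms survive the $\langle z^{|\cc(F^{a,\beta})|}\rangle$ filter, and then exhibiting a weight-preserving bijection. First, I would unfold $g_j^{a,\beta}(X_j)$. Injectivity of $\flatten^{a,\beta}$ (the preimage is recovered uniquely as the labels in $X_j \cap \beta^{-1}(1_\geq)$ together with the components of $F^{a,\beta}$ that lie entirely in $X_j$) implies that $g_j^{a,\beta}(X_j)$ is nonzero iff $X_j \cap \beta^{-1}(1_=)$ is a union of whole components of $F^{a,\beta}$; in that case $g_j^{a,\beta}(X_j) = \PIS(b_j, X_j) \cdot z^{n_j}$, where $n_j$ is the number of components contained in $X_j$. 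Under a cover condition $X_1 \cup \dots \cup X_t = S$, every component of $F^{a,\beta}$ lies in at least one $X_j$, so the total $z$-degree $\sum_j n_j$ is at least $|\cc(F^{a,\beta})|$, with equality iff every component is contained in a unique $X_j$. Applying $\langle z^{|\cc(F^{a,\beta})|}\rangle$ thus restricts the RHS to
\[
  \sum \prod_{j=1}^t \PIS(b_j, X_j),
\]
the sum ranging over covers $X_1 \cup \dots \cup X_t = S$ such that (i) each $X_j \cap \beta^{-1}(1_=)$ is a union of whole components of $F^{a,\beta}$ and (ii) each component lies in exactly one $X_j$.

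Second, I would exhibit a bijection between the independent sets $I$ counted by $\TIS(a,S,\beta)$ and the pairs $\bigl((X_j)_j,(I_j)_j\bigr)$, where $(X_j)_j$ is a cover as characterized above and each $I_j$ is an independent set of $G_{b_j}$ counted by $\PIS(b_j, X_j)$ (\ie, with $\{i \in [k] : I_{j,a}(i) \neq \emptyset\} = X_j$). In the forward direction, set $I_j := I \cap V_{b_j}$ and let $X_j$ be the label-set of $I_j$ at $a$'s level. The $\TIS$-defining conditions immediately give that $X_1 \cup \dots \cup X_t = S$, that each $i \in \beta^{-1}(1_=)$ lies in a unique $X_j$, and that each $i \in \beta^{-1}(1_\geq)$ lies in at least one $X_j$; \cref{obs:connected-components} (whose conclusion remains valid for conflict-free $\beta$, exactly as justified right before its statement via \cref{obs:alpha-2-geq}) ensures property (ii). The reverse direction takes $I := \bigcup_j I_j$; all $\TIS$-bookkeeping conditions are immediate by construction, so the only nontrivial point is the independence of $I$.

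Finally, verifying independence is where conflict-freeness is crucially used. A potential cross-edge would connect some $u \in I_{j_1}$ labelled $i_1 = \lab_a(u)$ to $v \in I_{j_2}$ labelled $i_2 = \lab_a(v)$ with $j_1 \neq j_2$ and $M_a[i_1,i_2]=1$. If $i_1, i_2 \in \beta^{-1}(1_=)$, then $i_1 i_2$ is an edge of $F^{a,\beta}$, so they sit in a common connected component, which by property (ii) lies in a single $X_j$---contradicting $j_1 \neq j_2$. Otherwise, at least one of $i_1, i_2$ is in $\beta^{-1}(1_\geq)$, and conflict-freeness forces $M_a[i_1,i_2]=0$, again a contradiction. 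Tracking $|I| = \sum_j |I_j|$ through the polynomial product then yields the claimed identity. I expect the main delicate point to be the $z$-bookkeeping: encoding the ``exactly one component per child'' constraint as a monomial-degree filter inside a cover product is a nonstandard device, and keeping the accounting clean requires the injectivity of $\flatten^{a,\beta}$ together with careful use of the conflict-free assumption.
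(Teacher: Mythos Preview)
Your proposal is correct and follows essentially the same approach as the paper's proof: both expand the right-hand side, use the $z$-degree filter to isolate exactly those covers in which every connected component of $F^{a,\beta}$ is contained in a unique $X_j$ (the paper tracks this via explicit preimage sets $W_j$, you do it more directly), and then establish a size-preserving bijection between the independent sets counted by $\TIS(a,S,\beta)$ and tuples $(I_j)_j$ with $I_j$ counted in $\PIS(b_j,X_j)$, using conflict-freeness of $\beta$ in the independence check in the same way.
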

    \fi
    \ifshort
\begin{lemma}
    
    \label{lem:tis-correctness}
        Let $(T, \mathcal{M}, \cR, \lab)$ be a $(d,k)$-tree model of an $n$-vertex graph $G$. Let $a$ be a non-leaf node of $T$ and let $b_1, \dots, b_t$ be the children of $a$. For every $S \subseteq \lab_a(V_a)$, and every conflict-free $\beta \colon S \to \{1_=, 1_\geq\}$, it holds that
        \[
            \TIS(a, S, \beta) = \left(\sum\limits_{\substack{X_1, \dots, X_t \subseteq [k] \colon \\ X_1 \cup \dots \cup X_t = S}} \left (\prod\limits_{j = 1}^t g^{a, \beta}_j(X_j)\right)\right) \langle z^{|\cc(F^{a, \beta})|} \rangle,
        \]
        where for a polynomial $P = \sum_{u_1, u_2 \in \NN_0} q_{u_1, u_2} x^{u_1} z^{u_2} \in \ZZ[x, z]$ the polynomial $P \langle z^{|\cc(F^{a, \beta})|} \rangle \in \ZZ[x]$ is defined as $P \langle z^{|\cc(F^{a, \beta})|} \rangle = \sum_{u_1 \in \NN_0} q_{u_1, |\cc(F^{a, \beta})|} x^{u_1}$.
    \end{lemma}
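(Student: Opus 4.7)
The plan is to show that both sides of the identity equal the common quantity $\sum_{(Y_1,\dots,Y_t)} \prod_{j=1}^t \PIS(b_j, Y_j)$, where the sum ranges over tuples of subsets of $S$ satisfying a specific list of combinatorial conditions. I will establish this by a partition argument on the LHS together with a direct analysis of the cover sum and the $z$-degree filter on the RHS, and then match the indexing sets.

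For the LHS, I would partition each independent set $I$ counted in $\TIS(a,S,\beta)$ according to the tuple $(Y_j)_j$ defined by $Y_j = \{i \in S : I \cap V_{b_j} \cap V_a(i) \neq \emptyset\}$. Once $(Y_j)_j$ is fixed, the number of $I$ realising this tuple factorises as $\prod_j \PIS(b_j, Y_j)$, because $I \cap V_{b_j}$ is an independent set of $G_{b_j}$ whose label set (w.r.t.\ $\lab_a$) is exactly $Y_j$, and the choices across children are independent. The tuples $(Y_j)_j$ that actually arise from some counted $I$ are exactly those satisfying (i) $\bigcup_j Y_j = S$, (ii) each $i \in \beta^{-1}(1_=)$ lies in exactly one $Y_j$, (iii) each $i \in \beta^{-1}(1_\geq)$ lies in at least one $Y_j$, and (iv) the resulting $I$ is independent in $G_a$. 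The crucial step is translating (iv) into a condition on $(Y_j)_j$: by the conflict-freeness of $\beta$ together with \cref{obs:alpha-2-geq}, the only edges of $G_a$ that can appear between different children at the join $a$ have both endpoints labelled from $\beta^{-1}(1_=)$, which is precisely the vertex set of $F^{a,\beta}$. Hence by \cref{obs:connected-components}, condition (iv) is equivalent to ``for every connected component $C$ of $F^{a,\beta}$, all labels of $C$ lie in a single $Y_j$'', which by (ii) means $C$ lies in exactly one $Y_j$.

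For the RHS, I would first observe that since $\flatten^{a,\beta}$ is injective, $g^{a,\beta}_j(X_j) = \PIS(b_j, X_j) \cdot z^{k_j}$ when $X_j \cap \beta^{-1}(1_=)$ is a union of exactly $k_j$ connected components of $F^{a,\beta}$, and is zero otherwise. Expanding the cover sum, every nonzero term has $z$-degree $\sum_j k_j$; the cover condition $\bigcup_j X_j = S$ forces $\sum_j k_j \geq |\cc(F^{a,\beta})|$, with equality iff each component of $F^{a,\beta}$ is contained in a unique $X_j$. Therefore the filter $\langle z^{|\cc(F^{a,\beta})|}\rangle$ retains precisely those tuples $(X_j)_j$ that cover $S$ and assign each connected component of $F^{a,\beta}$ to exactly one child; each such tuple contributes $\prod_j \PIS(b_j, X_j)$. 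These are exactly conditions (i)--(iv) from the LHS, so the two sums are identical term-by-term.

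The step I expect to be the most delicate is (iv): arguing that global independence of $I$ in $G_a$ is equivalent to the ``single component per child'' combinatorial condition on $(Y_j)_j$. This is where conflict-freeness of $\beta$ plays a non-cosmetic role, since without it a cross-child edge of $G_a$ could have an endpoint outside $F^{a,\beta}$, and the equivalence would break. Once this combinatorial translation is pinned down, the remainder is routine bookkeeping: recognising the cover product, unfolding $g^{a,\beta}_j$ via the injectivity of $\flatten^{a,\beta}$, and verifying that the $z$-degree filter carves out exactly the desired family of tuples.
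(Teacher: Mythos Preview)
Your proposal is correct and follows essentially the same approach as the paper's proof: both arguments factor an independent set $I$ as $(I\cap V_{b_j})_j$, characterise the admissible label tuples $(Y_j)_j$ via the conflict-freeness of $\beta$ and the component structure of $F^{a,\beta}$, and use the $z$-degree filter to isolate exactly those tuples where each component is assigned to a unique child. The only stylistic difference is that the paper expands the RHS first and then verifies the equality coefficient-by-coefficient via a bijection $I \leftrightarrow (I^1,\dots,I^t)$, whereas you identify the common indexing set upfront; the mathematical content is the same.
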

    \fi

\iflong
    \begin{proof}
          First, we bring the right-hand side of the equality into a more suitable form.
          \begin{align*}
          	& \sum\limits_{\substack{X_1, \dots, X_t \subseteq [k] \colon \\ X_1 \cup \dots \cup X_t = S}} \prod\limits_{j = 1}^t g_j(X_j) = \\
          	& \sum\limits_{\substack{X_1, \dots, X_t \subseteq [k] \colon \\ X_1 \cup \dots \cup X_t = S, \\ \forall j \in [t] \exists W_j \subseteq U \colon \flatten(W_j) = X_j}} \prod\limits_{j = 1}^t \PIS(b_j, X_j) z^{|W_j \cap \cc(F)|} = \\
    	    & \sum\limits_{\substack{X_1, \dots, X_t \subseteq [k] \colon \\ X_1 \cup \dots \cup X_t = S, \\ \forall j \in [t] \exists W_j \subseteq U \colon \flatten(W_j) = X_j}} \left(\prod\limits_{j = 1}^t \Big(\sum\limits_{p_j \in \NN_0} q^{\rename}(b_j, X_j, p_j) x^{p_j}\Big) z^{|W_j \cap \cc(F)|} \right) = \\
    	    & \sum\limits_{\substack{X_1, \dots, X_t \subseteq [k] \colon \\ X_1 \cup \dots \cup X_t = S, \\ \forall j \in [t] \exists W_j \subseteq U \colon \flatten(W_j) = X_j \\ p_1, \dots, p_t \in \NN_0}} \prod\limits_{j = 1}^t q^{\rename}(b_j, X_j, p_j) x^{p_j} z^{|W_j \cap \cc(F)|} = \\
    	    & \sum\limits_{\substack{X_1, \dots, X_t \subseteq [k] \colon \\ X_1 \cup \dots \cup X_t = S, \\ \forall j \in [t] \exists W_j \subseteq U \colon \flatten(W_j) = X_j \\ p_1, \dots, p_t \in \NN_0}} \left(\prod\limits_{j = 1}^t q^{\rename}(b_j, X_j, p_j) \right) x^{\sum\limits_{j = 1}^t p_j} z^{\sum\limits_{j = 1}^t |W_j \cap \cc(F)|} \\
          \end{align*}
          We recall that $\flatten$ is injective so the sum above is well-defined.
          So we have to prove that
          \begin{align*}
            & \TIS(a, S, \beta) = \\
            & \left(\sum\limits_{\substack{X_1, \dots, X_t \subseteq [k] \colon \\ X_1 \cup \dots \cup X_t = S, \\ \forall j \in [t] \exists W_j \subseteq U \colon \flatten(W_j) = X_j, \\ p_1, \dots, p_t \in \NN_0}} \left(\prod\limits_{j = 1}^t q^{\rename}(b_j, X_j, p_j) \right) x^{\sum\limits_{j = 1}^t p_j} z^{\sum\limits_{j = 1}^t |W_j \cap \cc(F)|}\right) \langle z^{|\cc(F)|} \rangle,
          \end{align*}
          i.e.,
          \begin{align}
            & \TIS(a, S, \beta) = \nonumber \\
            & \sum\limits_{\substack{X_1, \dots, X_t \subseteq 2^{[k]} \colon \\ X_1 \cup \dots \cup X_t = S, \\ \forall j \in [t] \exists W_j \colon \flatten(W_j) = X_j, \\ \sum\limits_{j \in [t]} |W_j \cap \cc(F)| = |\cc(F)|, \\ p_1, \dots, p_t \in \NN_0}} \left(\prod\limits_{j = 1}^t q^{\rename}(b_j, X_j, p_j) \right) x^{\sum\limits_{j = 1}^t p_j} \label{eq:desired-equality}
          \end{align}
         
         To prove that these two polynomials are equal, we show that for every power $p \in \NN_0$ of $x$, the coefficients at $x^p$ in both polynomials are equal.
         So let us fix an arbitrary integer $p$.
                  
         For one direction, let $I$ be an independent set counted in the coefficient $q(a, S, \beta, p)$ at the term $x^p$ on the left-hand side $\TIS(a, S, \beta)$; in particular, we then have $|I| = p$.
         For every $j \in [t]$, let 
            $I^j = I \cap V_{b_j}$, 
            $p_j = |I_j|$,
            and $X_j = \{ i \in [k] \mid I_a(i) \cap V_{b_j} \neq \emptyset\}$.
         Clearly, we have $p_1 + \dots + p_t = p$ and $X_1 \cup \dots \cup X_t = S$. 
         Now consider some $j \in [t]$. 
         The set $I^j$ is an independent set of $G_{b_j}$ that contains vertices with labels from exactly $X_j$ (with respect to $\lab_a$). 
         So $I^j$ is counted in $q^{\rename}(b_j, X_j, p_j)$.
         Let 
            $A_j = X_j \cap \beta^{-1}(1_=)$
         and 
            $B_j = X_j \cap \beta^{-1}(1_\geq)$.
         Note that $A_j \cup B_j = X_j$, $A_1 \cup \dots \cup A_t = \beta^{-1}(1_=)$, and $B_1 \cup \dots \cup B_t = \beta^{-1}(1_\geq)$.
         Then by \cref{obs:connected-components}, for every connected component $C$ of $F$ and every $j \in [t]$, we either have $C \subseteq X_j$ or $C \cap X_j = \emptyset$.
         Therefore, for every $j \in [t]$, we have
            $X_j = B_j \cup \bigcup_{{C \in \cc(F) \colon C \cap X_j \neq \emptyset}} C$.
         and hence,
            $X_j = \flatten(W_j)$ where $W_j = B_j \cup \{ C \in \cc(F) \mid C \cap X_j \neq \emptyset \}$.
         Finally, by the definition of objects counted in $\TIS(x, S, \beta)$, 
         since the labels from $\beta^{-1}(1_=)$ occur in exactly one child of $a$,
         it holds that $A_{j_1} \cap A_{j_2} = \emptyset$ for any $j_1 \neq j_2 \in [t]$.
         Together with $A_1 \cup \dots \cup A_t = \beta^{-1}(1_=)$ this implies that for every connected component $C$ of $F$, there exists exactly one index $j_C \in [t]$ with $C \subseteq A_{j_C}$, i.e., $C \in W_{j_C}$.
         So we obtain
            $\sum\limits_{j \in [t]} |W_i \cap \cc(F)| = |\cc(F)|.$
         Altogether, the tuple $(I^1, \dots, I^t)$ is counted in the product $\prod_{j = 1}^t q^\rename(b_j, X_j, p_j)$ and the properties shown above imply that this product contributes to the coefficient at the monomial $x^p$.
         Also note that the mapping of $I$ to $(I^1, \dots, I^t)$ is injective so we indeed obtain that the coefficient at $x^p$ on the left-hand side of \eqref{eq:desired-equality} is at most as large as one the right-hand side.
         
         Now we show that the other inequality holds as well. 
         Let $X_1, \dots, X_t \subseteq [k]$, $I^1, \dots, I^t \subseteq V$, $W_1, \dots, W_t \subseteq U$, and $p_1, \dots, p_t \in \NN_0$ be such that the following properties hold:
         \begin{itemize}
             \item $p_1 + \dots + p_t = p$,
             \item $X_1 \cup \dots \cup X_t = S$,
             \item for every $j \in [t]$, it holds that $\flatten(W_j) = X_j$,
             \item $\sum_{j \in [t]} |W_j \cap \cc(F)| = |\cc(F)|$,
             \item and for every $j \in [t]$, the set $I^j$ is an independent set of $G_{b_j}$ of size $p_j$ such that for every $i \in [k]$, $I^j_a(i) \neq \emptyset$ holds iff $i \in X_j$, i.e., $I^j$ is counted in $q^\rename(b_j, X_j, p_j)$.
         \end{itemize}
         Let $I = I^1 \cup \dots \cup I^t$.
         Since for every $j \in [t]$, we have $I^j \subseteq V_{b_j}$, the sets $I^1, \dots, I^t$ are pairwise disjoint and we have $|I| = p$.
         We also have $I \subseteq V_a$ and for every $i \in [k]$, we have $I_a(i) \neq \emptyset$ iff $i \in S$, i.e., $I$ contains vertices with labels from exactly $S$ with respect to $\lab_a$.
         We claim that $I$ is an independent set of $G_a$.
         Since $I_1, \dots, I_t$ are independent sets of $G_{b_1}, \dots, G_{b_t}$, respectively, and $G_{b_1}, \dots, G_{b_t}$ are induced subgraphs of $G_a$, it suffices to show that there are no edges between $I_{j_1}$ and $I_{j_2}$ for any $j_1 \neq j_2 \in [t]$.
         For this, suppose there is an edge $v_1 v_2$ of $G_a$ with $v_1 \in I^{j_1}$ and $v_2 \in I^{j_2}$ for some $j_1 \neq j_2 \in [t]$. 
         Also let $i_1 = \lab_a(v_1)$ and $i_2 = \lab_a(v_2)$. 
         Since $a$ is the lowest common ancestor of $v_1$ and $v_2$, it holds that $M_a[i_1, i_2] = 1$.
         By the assumption of the lemma, the mapping $\beta$ is conflict-free so we have $\beta(i_1) = \beta(i_2) = 1_=$. 
         Then, the property $M_a[i_1, i_2] = 1$ implies that $i_1$ and $i_2$ belong to the same connected component, say $C$, of $F$.
         Recall that we have $i_1 \in X_{j_1}$, $i_2 \in X_{j_2}$, $\flatten(W_{j_1}) = X_{j_1}$, and $\flatten(W_{j_2}) = X_{j_2}$. 
         Hence, it holds that 
            $C \in W_{j_1}$ and $C \in W_{j_2}$.
         On the other hand, let $C'$ be an arbitrary connected component of $F$ and let $i \in C'$ be some label.
         The property $X_1 \cup \dots \cup X_t = S$ implies that there exists an index $j_{C'}$ with $i \in X_{j_{C'}}$.
         Due to $\flatten(W_{j_{C'}}) = X_{j_{C'}}$ we then have $C' \in W_{j_{C'}}$.
         I.e., every connected component of $F$ is contained in at least one of the sets $W_1, \dots, W_t$ while $C$ is contained in at least two such sets so we get
            $\sum_{j \in [t]} |W_j \cap \cc(F)| > |\cc(F)|$ 
         -- a contradiction.
         
         Hence, the set $I$ is indeed an independent set of $G_a$ of size $p$ such that it contains vertices with labels from exactly $S$ with respect to $\lab_a$. 
         So it is counted in the coefficient $q^{\rename}(a, S, \beta, p)$ of the term $x^p$ in $\TIS(a, S, \beta)$.
         Finally, first note that $(I^1, \dots, I^t)$ is uniquely mapped to the tuple $(X_1, \dots, X_t, p_1, \dots, p_t)$ so it is counted only once on the right-hand side.
         And second, the mapping of $(I^1, \dots, I^t)$ to $I$ is injective (since $V_{b_1}, \dots, V_{b_t}$ are pairwise disjoint).
         Therefore, the coefficient at the term $x^p$ on the right-hand side of \eqref{eq:desired-equality} is at most as large as on the left-hand side.
         Altogether, we conclude that the two polynomials in \eqref{eq:desired-equality} are equal, as desired.
    \end{proof}
    \fi

    \iflong
    The above lemma implies that:
    \begin{align}
        &\TIS(a, S, \beta) = \nonumber \\
        & \sum\limits_{\substack{X_1, \dots, X_t \subseteq [k] \colon \\ X_1 \cup \dots \cup X_t = S}} \left (\prod\limits_{j = 1}^t g_j(X_j)\right) \langle z^{|\cc(F)|} \rangle = \nonumber \\
        & (g_1 \ast_c g_2 \ast_c \dots \ast_c g_t)(S) \langle z^{|\cc(F)|} \rangle \stackrel{\cref{lem:cover-product}}{=} \nonumber \\
        & 
        \bigl((\mu(\xi(g_1 \ast_c g_2 \ast_c \dotsc \ast_c g_t)))(S)\bigr) \langle z^{|\cc(F)|} \rangle = \nonumber \\
        & \left(\sum\limits_{Y \subseteq S} (-1)^{|S \setminus Y|} (\xi(g_1 \ast_c g_2 \ast_c \dots \ast_c g_t))(Y)\right) \langle z^{|\cc(F)|} \rangle \stackrel{\cref{lem:cover-product}}{=} \nonumber \\
        & \left(\sum\limits_{Y \subseteq S} (-1)^{|S \setminus Y|} (\xi g_1)(Y) (\xi g_2)(Y) \dots (\xi g_t)(Y) \right) \langle z^{|\cc(F)|} \rangle = \nonumber \\
        & \left(\sum\limits_{Y \subseteq S} (-1)^{|S \setminus Y|} \prod\limits_{j = 1}^t (\xi g_j)(Y) \right) \langle z^{|\cc(F)|} \rangle = \nonumber \\
        & \left(\sum\limits_{Y \subseteq S} (-1)^{|S \setminus Y|} \prod\limits_{j = 1}^t \sum\limits_{Z \subseteq Y} g_j(Z) \right) \langle z^{|\cc(F)|} \rangle \label{eq:recursion-tis}
    \end{align}
    \fi
    \ifshort
    Now we can apply a result by Bj{\"{o}}rklund et al.\ \cite{BjorklundHKK07} 
    to accelerate the computation of $\TIS(a, S, \beta)$:
    It holds $\TIS(a, S, \beta) = \left(\sum\limits_{Y \subseteq S} (-1)^{|S \setminus Y|} \prod\limits_{j = 1}^t \sum\limits_{Z \subseteq Y} g_j(Z) \right) \langle z^{|\cc(F)|} \rangle$.
    \fi
    We now have the equalities required for our algorithm to solve \textsc{Independent Set} parameterized by shrubdepth.
    By using these equalities directly, we would obtain an algorithm running in time $2^{\mathcal{O}(k d)}\cdot n^{\mathcal{O}(1)}$ and space $\mathcal{O}(d k^2 n^2)$.
    However, the latter can be substantially improved by using a result of Pilipczuk and Wrochna~\cite{PilipczukW18} based on the Chinese remainder theorem:
    \begin{theorem}[\cite{PilipczukW18}]\label{thm:chinese-remainder}
        Let $P(x) = \sum\limits_{i=0}^{n'} q_i x^i$ be a polynomial in one variable $x$ of degree at most $n'$ with integer coefficients satisfying $0 \leq q_i \leq 2^{n'}$ for $i = 0, \dots, n'$. 
        Suppose that given a prime number $p \leq 2n' + 2$ and $s \in \FF_p$, the value $P(s) \pmod p$ can be computed in time~$T$ and space $S$. 
        Then given $k \in \{0, ... , n'\}$, the value $q_k$ can be computed in time  $\mathcal{O}(T \cdot \operatorname{poly}(n'))$ and space $\mathcal{O}(S + \log n')$.
    \end{theorem}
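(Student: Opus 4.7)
The plan is to recover $q_k$ by first computing its residues modulo many small primes and then reassembling it via a space-efficient Chinese Remainder reconstruction. Since $0 \le q_k \le 2^{n'}$, by Bertrand's postulate together with the prime number theorem there are $\Theta(n'/\log n')$ primes in the interval $(n', 2n'+2]$, and the product of these primes is at least $2^{n'+1} > q_k$. Hence $q_k$ is uniquely determined by its residues modulo this family. Throughout, I would enumerate the primes one by one in time $\operatorname{poly}(n')$ using a trial-division test that uses only $\mathcal{O}(\log n')$ space.

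For each such prime $p$, the subroutine to compute $q_k \bmod p$ proceeds by polynomial interpolation in $\mathbb{F}_p$. Since $p > n'$, one can pick $n'+1$ distinct evaluation points $s_0,\dots,s_{n'} \in \mathbb{F}_p$ (e.g.\ $0,1,\dots,n'$). I would invoke the oracle sequentially on these points to obtain $P(s_j) \bmod p$, each time reusing the same block of $\mathcal{O}(S)$ space. By the Lagrange formula,
\[
q_k \equiv \sum_{j=0}^{n'} P(s_j) \cdot \ell_{j,k} \pmod{p},
\]
where $\ell_{j,k}$ is the coefficient of $x^k$ in $\prod_{i \ne j}(x-s_i)/(s_j-s_i)$. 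Each $\ell_{j,k} \bmod p$ can be computed on the fly from the indices in space $\mathcal{O}(\log n')$ (accumulating products modulo $p$ while iterating over the $n'$ remaining indices), so the residue $q_k \bmod p$ is obtained in time $\mathcal{O}(T \cdot \operatorname{poly}(n'))$ and auxiliary space $\mathcal{O}(S + \log n')$.

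The main obstacle is the Chinese remaindering itself: the naive CRT formula requires the product $\prod_p p$, which has $\Theta(n')$ bits and therefore cannot be stored within the prescribed budget. I would circumvent this by computing $q_k$ bit by bit, outputting the $i$-th bit before discarding any intermediate state that depends on it. This is achievable by appealing to the classical space-efficient integer arithmetic toolbox (Hesse, Allender, and Barrington), which shows that individual bits of an integer given implicitly by its residues modulo a polynomial family of small primes can be extracted in logarithmic space relative to an oracle producing each residue. Each bit-extraction query invokes the subroutine above once per prime, so the total cost stays within $\mathcal{O}(T \cdot \operatorname{poly}(n'))$ time and $\mathcal{O}(S + \log n')$ auxiliary space, as required.
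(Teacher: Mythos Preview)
The paper does not prove this theorem; it is quoted from Pilipczuk and Wrochna~\cite{PilipczukW18} and used as a black box. So there is no ``paper's proof'' to compare against, only the original source. Your overall architecture---compute $q_k \bmod p$ for enough small primes $p$ via interpolation over $\mathbb{F}_p$, then reassemble $q_k$ using the log-space integer arithmetic of Hesse, Allender and Barrington---is exactly the intended one.

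There is, however, a real gap in the interpolation step. You claim that each Lagrange coefficient $\ell_{j,k}\bmod p$ can be computed in $\mathcal{O}(\log n')$ space by ``accumulating products modulo $p$ while iterating over the $n'$ remaining indices.'' That argument covers the denominator $\prod_{i\ne j}(s_j-s_i)$, which is indeed a single running product, but not the numerator: the coefficient of $x^k$ in $\prod_{i\ne j}(x-s_i)$ is an elementary symmetric polynomial, i.e.\ a sum over $\binom{n'}{\,n'-k\,}$ products, and its computation via the obvious recurrence amounts to an iterated matrix product. It is not at all clear that this sits in $\mathcal{O}(\log n')$ space, and your parenthetical does not establish it.

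The clean fix (and the one used in~\cite{PilipczukW18}) is to bypass Lagrange interpolation entirely. For a prime $p>n'+1$ one has, using $\sum_{s\in\mathbb{F}_p^\ast}s^{m}\equiv -1\pmod p$ if $(p-1)\mid m$ and $0$ otherwise, the identity
\[
-\,q_k \;\equiv\; \sum_{s=1}^{p-1} P(s)\cdot s^{\,p-1-k} \pmod{p},
\]
since for $0\le i,k\le n'<p-1$ the only exponent $i+(p-1-k)$ divisible by $p-1$ is the one with $i=k$. This needs one oracle call per $s\in\mathbb{F}_p^\ast$, one running sum in $\mathbb{F}_p$, and repeated squaring for $s^{p-1-k}$; all of this is genuinely $\mathcal{O}(\log n')$ extra space. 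With this correction, the rest of your argument (choice of primes in $(n',2n'+2]$ so that their product exceeds $2^{n'}$, and the CRT reconstruction via log-space iterated arithmetic) goes through.
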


    With this, we can finally prove\iflong:\fi \ifshort  ~\Cref{thm:IS}.\fi

    \iflong
    \spaceIS*
    
        \begin{proof}
        The independent set polynomial of the graph $G = G_r$ is exactly 
            $\sum_{S \subseteq [k]} \IS(r, S)$
        where $r$ is the root of $T$.
        Let us denote this polynomial $P$.
        To apply \cref{thm:chinese-remainder}, we use $n' := n$.
        Let $p \leq 2n + 2$ be a prime number.
        The bound on $p$ implies that any number from $\FF_p$ can be encoded using $\mathcal{O}(\log n)$ bits, this will bound the space complexity.
        There are at most $2^n$ independent sets of $G$ so every coefficient of $P$ lies between $0$ and $2^n$, and therefore the prerequisites stated in the first sentence of \cref{thm:chinese-remainder} are satisfied.
        Let $s \in \FF_p$.
        We will now show that the value $P(s) \mod p$ can be evaluated in time $2^{\mathcal{O}(k d)} \cdot \polyn$ and space $\mathcal{O}(d k^2 \log n)$.
        At that point, the result will follow by \cref{thm:chinese-remainder}.

        Since we are interested in the evaluation of $P$ at $s$ modulo $p$, instead of querying and storing all coefficients, as a result of the recursion, we return the evaluation of a certain polynomial (e.g., $\TIS(a, S, \beta)$) at $s$ modulo $p$.
        For this, the formal variable $x$ is always substituted by $s$ and then arithmetic operations in $\FF_p$ are carried out.
        In the following, when computing a sum (resp.\ product) of certain values, these values are computed recursively one after another and we store the counter (\eg, current subset $S \subseteq [k]$) as well as the current value of the sum (resp.\ product). 
        Our algorithm relies on the equalities provided above and we now provide more details to achieve the desired time and space complexity.
        Let us denote $\AIS(a,S,\alpha) \coloneqq T_0(a, S, \alpha, 0, \emptyset)$ for simplicity.
                
        First, if $a$ is a leaf of $T$, then $\IS(a, S)$ can be computed directly via
    \begin{equation} \label{eq:leaf}
        \IS(a, S) =
        \begin{cases}
            1 & \text{if } S = \emptyset \\
            x & \text{if } S = \{\lab_a(a)\} \\
            0 & \text{otherwise} 
        \end{cases}
      \end{equation}
      and this is our base case.
    Otherwise, the queries are answered recursively and five types of queries occur, namely $\IS(a, S)$, $\AIS(a,S,\beta)$, 
    $T(a,S,\alpha,c,\gamma)
    $, $\TIS(a, S, \beta)$, and $\PIS(a, S)$.
        Let $a$ be an inner node with children $b_1, \dots, b_t$.
        To answer a query $T(a,S,\alpha,c,\gamma)$ for $c < |\alpha^{-1}(2_\geq)|$, we recurse via \eqref{eq:incl-excl-base}.
        If $c = |\alpha^{-1}(2_\geq)|$, then we first construct $\beta \colon S \to \{1_=, 1_\geq\}$ given by $\beta^{-1}(1_=)=\alpha^{-1}(1_{=})\cup \gamma^{-1}(1_=)$ and $\beta^{-1}(1_\geq)=\alpha^{-1}(2_{\geq})\setminus \gamma^{-1}(1_=)$ and then query $\TIS(a, S, \beta)$.
        Then, to answer a query $\TIS(a, S, \beta)$, we recurse via \eqref{eq:recursion-tis}.
        And finally, to answer a query $\PIS(a, S)$, we recurse using~\eqref{eq:recursion-pis}.
        
        Each of the above recurrences is given by a combination of sums and products of the results of recursive calls and these values are from $\FF_p$.
        To keep the space complexity of the algorithm bounded, for such recursion, the result is computed ``from inside to outside'' by keeping track of the current sums (resp.\ products) as well as the next value to be queried.
        For example, for \eqref{eq:recursion-tis}, we iterate through all $Y \subseteq S$, store the current value of the outer sum (modulo $p$), then for fixed $Y$, we iterate over $j \in [t]$ and store $j$ and the current value of the product (modulo $p$), and then for fixed $Y$ and $j$, iterate through $Z \subseteq Y$ and store the current value of $Z$ and the current inner sum. 
        After the complete iteration over $Z$ (resp.\ $j$) we update the current value of the product (resp.\ outer sum) and move on to the next $j$ (resp.\ $Y$).
        
        Now we analyze the time and space complexity of the algorithm.
        We start with the running time.
        For this, we analyze how often every query is answered.
        Namely, for all relevant values of $a$, $S$, $\alpha$, $\beta$, $c$, and $\gamma$, for each query $\IS(a, S)$, $\AIS(a, S, \alpha)$, 
        $T(a, S,\alpha, c,\gamma)$, $\TIS(a, S, \beta)$, resp.\ $\PIS(a, S)$,
        we use $Q(\IS(a, S))$, $Q(\AIS(a, S, \alpha))$, 
        $Q(T(a, S,\alpha, c,\gamma))$, $Q(\TIS(a, S, \beta))$, $Q(\PIS(a, S))$, respectively, to denote the number of times the query is answered by the algorithm and we call this value the \emph{multiplicity} of the query.
        Then, for $h \in [d]_0$, we define the value $q_{\IS}(h)$ to be the maximum multiplicity of a query $\IS(a, S)$ over all nodes $a$ at height $h$ in $T$ and all reasonable $S$.
        Similarly, we define the values $Q_{\AIS}(h)$, 
        $Q_{T}(h)$, $Q_{\TIS}(h)$, and $Q_{\PIS}(h)$ where we maximize over all nodes $a$ at height $h$ and all reasonable values of $S$, $\alpha$, $\beta$, $\gamma$ and $c$.
        We now upper bound these values. 
        
        Let $b$ be a node at height $h$ for some $h \in [d]_0$. 
        If $b = r$, then a query $\PIS(b, S)$ is not asked at all.
        Otherwise, let $a$ be the parent of $b$, and let $j$ be such that $b = b_j$ is the $j$-th child of $a$.
        Then $\PIS(b, S)$ can be asked when answering some query of the form $\TIS(a, D, \beta)$ to compute some value $\xi g_j^{a, \beta}(Y)$ such that $S \subseteq Y \subseteq D$.
        Therefore, for fixed $D$ and $\beta$, the value $\IS(b, S)$ is queried at most $2^k$ times, so we obtain 
            $Q(\PIS(b, S)) \leq \sum_{{S \subseteq D \subseteq [k], \beta \colon D \to \{1_=, 1_\geq\}}} Q({\TIS}(a, D, \beta))$
        and hence, 
        \[
            Q_{\PIS}(h)
            \begin{cases}
                = 0 & \text{if } h = 0 \\
                \leq 2^{3 k} Q_{\TIS}(h - 1) & \text{otherwise} 
            \end{cases}
            .
        \]
        
        Next, we consider a query of form $\TIS(b, S, \beta)$.
        Observe that for every $\alpha \colon S \to \{1_=, 2_\geq\}$ when recursing via \eqref{eq:incl-excl-base} to answer $T(a, S, \alpha, 0, \emptyset)$, we \emph{branch} on the values $1_=$ and $1_\geq$ for $s_1, \dots, s_{|\alpha^{-1}(2_\geq)|}$ one after another.
        Thus, after $|\alpha^{-1}(2_\geq)|$ steps every branch results in its own $\gamma \colon s_1, \dots, s_{|\alpha^{-1}(2_\geq)|} \to \{1_=, 1_\geq\}$, and hence, in its own $\beta \colon S \to \{1_=, 1_\geq\}$.
        Therefore, if we fix $\alpha$, then every $\TIS(a, S, \beta)$ is queried at most once when answering $T(a, S, \alpha, 0, \emptyset)$.
        Hence, we have 
        \[
            Q(\TIS(b, S, \beta)) \leq \sum_{\alpha \colon S \to \{1_=, 1_\geq\}} Q(\AIS(b, S, \alpha))
        \]
        and therefore,
            $Q_{\TIS}(h) \leq 2^k Q_{\AIS}(h)$.
        Every query $T(b, S, \alpha,c,\gamma)$ is also asked at most once while answering a query of $T(b, S, \alpha,0,\emptyset)$, i.e.,
            $Q(T(b,S,\alpha,c,\gamma)) \leq Q(\AIS(b, S, \alpha))$
        and
            $Q_T(h) \leq Q_{\AIS}(h)$.
        
        Further, for each fixed $\alpha$, a query $\AIS(b, S, \alpha)$ is asked exactly once for every query of $\IS(b, S)$, i.e.,
            $Q(\AIS(b, S, \alpha)) \leq Q(\IS(b, S))$
        and 
            $Q_{\AIS}(h) \leq Q_{\IS}(h)$.  
        Finally, a query of the  form $\IS(b, S)$ is queried at most once for every query of the form $\PIS(b, D)$, so we have
            $Q(\IS(b, S)) \leq \sum_{D \subseteq [k]} Q(\PIS(b, D))$
        and
            $Q_{\IS}(h) \leq 2^k Q_{\PIS}(p)$.
        
        By induction over $h$, we obtain that
        \[
            Q_{\AIS}(h), Q_T(h), Q_{\TIS}(h), Q_{\IS}(h), Q_{\PIS}(h) \leq 2^{5 h k}
        \]
        and
        \[
            Q_{\AIS}(h), Q_T(h), Q_{\TIS}(h), Q_{\IS}(h), Q_{\PIS}(h) \leq 2^{5 d k} \in 2^{\mathcal{O}(k d)}
        \]
        for every $h \in [d]_0$, i.e., any fixed query is asked $2^{\mathcal{O}(k d)}$ times.
        
        There are $\mathcal{O}(nd)$ nodes in $T$ and there are at most $2^k$ reasonable values of $S$; for any $S$, there are at most $2^k$ choices for $\alpha$, $\beta$, and $\gamma$; and there are at most $k$ reasonable values of $b$.
        Hence, there are at most 
        \[
            \mathcal{O}(nd) (2^{2 k} + 2^{3 k} k + 2^{2 k} + 2^k + 2^k) \in 2^{\mathcal{O}(k)} \cdot \polyn 
        \]
        different forms of queries and so there are at most
            $2^{\mathcal{O}(k d)} 2^{\mathcal{O}(k)} \cdot \polyn = 2^{\mathcal{O}(k d)} \cdot \polyn$
        recursive calls.
        
        Next, we bound the time spent on each query additionally to the recursive calls. 
        For each query, this additional time is mostly determined by $\mathcal{O}(2^{2 k} n)$ arithmetic operations. 
        For a query of the form $\TIS(\cdot)$, arithmetic operations are carried out over polynomials in a formal variable $z$ 
        where the coefficients are from $\FF_p$.
        It is crucial to observe that since in the end of the computation we apply the $\langle z^{|\cc(F)|} \rangle$ operation and the auxiliary graph $F$ has at most $k$ connected components, we can safely discard coefficients at terms $z^r$ for any $r > k$.
        Therefore, it suffices to keep track of at most $k$ coefficients from $\FF_p$.
        For the remaining queries, the arithmetic operations are carried out over $\FF_p$. 
        So in any case, there are at most $k$ relevant values from $\FF_p$ to store as a partial sum resp.\ product and a single arithmetic operation can be therefore carried out in $\polyn$ time.
        Further, when answering a query of the form $\TIS(a, S, \beta)$ and computing a value of the form $g_j^{a, \beta}(X_j)$ for this, we can check whether for $X_j$ there is $W_j$ with $\flatten^{a, \beta}(W_j) = X_j$ as follows. First, we compute the connected components of $F^{a, \beta}$: we start with a partition of $\beta^{-1}(1_=)$ into singletons and then iterate over all pairs of vertices $i_1$ and $i_2$ and if $M_a[i_1, i_2] = 1$, then we merge the sets containing $i_1$ and $i_2$.
        As a result of this process, we obtain the set of connected components of $F^{a, \beta}$.
        Then for each connected component $C$, we check if $C \cap X_j \in \{\emptyset, C\}$ holds.
        If this does not hold for at least one connected component, then we conclude that $g_j^{a, \beta}(X_j) = 0$.
        Otherwise, the desired set $W_j$ exists and we have $|W_j \cap \cc(F^{a, \beta})| = r$ where $r$ is the number of connected components $C$ with $C \cap X_j = C$.
        This process then runs in time $\mathcal{O}(k^3)$ and space $\mathcal{O}(k)$.
        Although this can be accelerated, this step is not a bottleneck so this time and space complexity suffices for our purposes.
        Also, when answering a query of the form $\AIS(a, S, \alpha)$, we need to check whether there exist labels $i_1, i_2 \in S$ with $\alpha(i_1) = 1_\geq$ and $M_a[i_1, i_2] = 1$: this can be done in time $\mathcal{O}(k^2)$ and space $\mathcal{O}(\log k)$ by considering all pairs $i_1, i_2 \in S$ and looking up these properties.
        So for any query, the time spent on this query apart from the recursive calls is bounded by
            $\mathcal{O}(2^{2 k} n \cdot k^3 \cdot \log n) = 2^{\mathcal{O}(k)} \cdot \polyn$.
        And the total running time of the algorithm is bounded by 
            $2^{\mathcal{O}(k d)} \cdot \poly(n) \cdot 2^{\mathcal{O}(k)} \cdot \poly(n) = 2^{\mathcal{O}(k d)} \cdot \poly(n)$, 
        i.e., the number of queries times the complexity of a single query.

        Finally, we bound the space complexity.
        The space used by a single query is to store the partial sums and/or products modulo $p$ as well as the counters that store the information about the next recursive call (e.g., current $S$).
        For any query other than $\TIS(\cdot)$, the partial result is in $\FF_p$. 
        For a query of the form $\TIS(\cdot)$, we are working with a polynomial in the formal variable $z$.
        Above we have argued why the coefficients at $z^p$ for $p > r$ can be discarded.
        Therefore, it suffices to keep track of at most $k$ coefficients from $\FF_p$.
        Recall that $p \leq 2n+2$ so any value from $\FF_p$ can be encoded with $\log n$ bits.
        When answering a query of the form $\TIS(a, S, \beta)$, we also need to consider the connected components of $F^{a, \beta}$: as argued above, this can be accomplished in $\mathcal{O}(k)$ space.
        So the space complexity of a single query can be bounded by 
            $\mathcal{O}(k \log n) + \mathcal{O}(k) + \log n = \mathcal{O}(k \log n)$.
        The depth of the recursion is bounded by $\mathcal{O}(k d)$: the depth of $T$ is $d$ and for each node, there are at most $k + 4$ recursive calls queried at this node (namely, $\PIS(\cdot)$, $\IS(\cdot)$, $\AIS = T(\cdot, c = 0, \cdot)$, $\dots$, $T(\cdot, c = |\alpha^{-1}(2_\geq)| \leq k, \cdot)$, $\TIS(\cdot)$). 
        Finally, during the algorithm we need to keep track of the node we are currently at.
        Therefore, the space complexity of the algorithm is 
             $\mathcal{O}(k d) \mathcal{O}(k \log n) + \mathcal{O}(\log n) = \mathcal{O}(k^2 d \log n)$.
    \end{proof}
    \fi

\ifshort
\medskip
\noindent \textbf{Counting List-Homomorphisms.} \quad
\fi
\iflong
\subsection{Counting List-Homomorphisms}\label{subsec:list-homomorphisms}
\fi
    We now explain how to apply the techniques from \iflong\cref{subsec:independent-set} \fi\ifshort the above \fi to a broader class of problems, namely all problems expressible as instantiations of the \textsc{\#-List-$H$-Homomorphism} problem for a fixed pattern graph $H$ (which we will introduce in a moment). In this way, we cover problems such as \textsc{Odd Cycle Transversal} and \textsc{$q$-Coloring}, for a fixed $q$. Furthermore, the techniques will be useful for solving {\sc{Dominating Set}} later. 
    
    Let $H$ be a fixed undirected graph (possibly with loops) and let $R \subseteq V(H)$ be a designated set of vertices.
    An instance of the \textsc{$R$-Weighted \#-List-$H$-Homomorphism} problem consists of
        a graph $G$,
        a weight function $\omega \colon V(G) \to \NN$,
        a list function $L \colon V(G) \to 2^{V(H)}$,
        a \emph{cardinality} $C \in \NN$ and a \emph{total weight} $W \in \NN$.
    The goal is to count the number of list $H$-homomorphisms of $G$ such that exactly $\cardinality$ vertices of $G$ are mapped to $R$ and their total weight in $\omega$ is $\totalweight$. 
    More formally, we seek the value
    \begin{align*}
        \bigl|\bigl\{
            \phi \colon V(G) \to V(H) \ \bigm\vert\  & \forall v \in V(G) \colon \phi(v) \in L(v),
            \forall uv \in E(G) \colon \phi(u) \phi(v) \in E(H), \\
            &|\phi^{-1}(R)| = \cardinality,\textrm{ and }
            \omega(\phi^{-1}(R)) = \totalweight
        \bigr\}\bigr| \enspace .
    \end{align*}
    We say that such $\phi$ has cardinality $C$ and weight $W$.
    For the ``standard'' \textsc{\#-List $H$-Homomorphism} problem we would use $R = V(H)$, $C = W = |V(G)|$, and unit weights. We also have the following special cases of the \textsc{$R$-Weighted \#-List-$H$-Homomorphism} problem. In all cases, we consider unit weights.
    \begin{itemize}
        \item To model \textsc{Independent Set}, the pattern graph $H$ consists of two vertices $\mathbf u$ and $\mathbf v$ and the edge set contains a loop at $\mathbf v$ and the edge $\mathbf{uv}$. 
    The set $R$ consists of $\mathbf u$ only. 
    \iflong
    Then \textsc{Independent Set} is equivalent to finding the largest $\cardinality$ for which we have a positive number of solutions in the constructed instance of \textsc{$R$-Weighted \#-List-$H$-Homomorphism}. \fi
        \item Similarly, to model \textsc{Odd Cycle Transversal}, the pattern graph $H$ is a triangle on vertex set $\{\mathbf u,\mathbf v,\mathbf w\}$ with a loop added on $\mathbf u$.
        Again, we take $R=\{\mathbf u\}$.
        \item To model \textsc{$q$-Coloring}, we take $H$ to be the loopless clique on $q$ vertices, and $R=V(H)$.
    \end{itemize}
 While in all the cases described above we only use unit weights, we need to work with any weight function in our application to {\sc{Dominating Set}}.
 \ifshort
\iflong In this section, we build on the techniques of Subsection~\ref{subsec:independent-set} to establish the following result:\fi
 \fi
    \ifshort
    We prove the following result.
    \fi

   \begin{theorem}
   \label{thm:counting-weighted-list-homomorphisms}
        Fix a graph $H$ (possibly with loops) and $R \subseteq V(H)$. There is an algorithm which takes as input an $n$-vertex graph $G$ together with a weight function $\omega$ and a $(d, k)$-tree-model, runs in time $2^{\mathcal{O}(d k)} \cdot \polyn \cdot (W^*)^{\mathcal{O}(1)}$ and uses space $\mathcal{O}(k^2 d(\log n + \log W^*))$, and solves the \textsc{$R$-Weighted \#-List-$H$-Homomorphism} in $G$, where $W^*$ denotes the maximum weight in $\omega$.
    \end{theorem}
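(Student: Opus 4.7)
The plan is to extend the approach used for Theorem~\ref{thm:IS} by enriching the dynamic programming state to encode not just which labels are in use, but also which vertices of $H$ each label class is mapped into, and by tracking both the cardinality and total weight via a bivariate generating polynomial. Concretely, for each node $a$ of the tree-model and each \emph{type} $\sigma \colon [k] \to 2^{V(H)}$, we count list $H$-homomorphisms $\phi \colon V_a \to V(H)$ such that $\{\phi(v) : v \in V_a(i)\} = \sigma(i)$ for every $i \in [k]$, and assemble these counts into a polynomial in $\ZZ[x,y]$ where $x$ tracks the number of vertices mapped into $R$ and $y$ tracks their total $\omega$-weight. Since $|V(H)|$ is a constant, there are at most $2^{\Oh(k)}$ relevant types $\sigma$, which matches the role played by the $2^k$ label sets $S \subseteq [k]$ in the proof of Theorem~\ref{thm:IS}.

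The crux is the analogue of the conflict graph $F^{a,\beta}$. For a fixed type $\sigma$ at a node $a$, we define an auxiliary graph on vertex set $\{(i, h) : i \in [k], h \in \sigma(i)\}$ by placing an edge between $(i_1, h_1)$ and $(i_2, h_2)$ whenever $M_a[i_1, i_2] = 1$ and $h_1 h_2 \notin E(H)$; in particular, same-label pairs $(i, h_1), (i, h_2)$ are connected whenever $M_a[i,i] = 1$ and $h_1 h_2 \notin E(H)$. The generalization of Observation~\ref{obs:alpha-2-geq} says that any two label-image pairs joined by an edge in this auxiliary graph must be realized inside a single child of $a$, since otherwise a cross-edge of $G_a$ would be sent to a non-edge of $H$. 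Consequently, each connected component of the auxiliary graph must be absorbed into a single child, and within this constraint the original inclusion-exclusion branching applies verbatim: for every label-image pair we guess whether it occurs in exactly one child or in at least two, reduce the $\{1_=, 2_\geq\}$ branching to the $\{1_=, 1_\geq\}$ branching via the argument of Observation~\ref{obs:qasp}, and reformulate the remaining task as evaluating a cover product of at most $t$ set functions over the universe consisting of label-image pairs marked $1_\geq$ together with the connected components of the auxiliary graph.

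With this state in place, the counterparts of Lemma~\ref{lem:tis-correctness} and Lemma~\ref{lem:cover-product} produce bivariate recurrences driving the top-down algorithm, and each leaf $v$ simply contributes, for each $h \in L(v)$, a monomial of the form $1$ or $xy^{\omega(v)}$ depending on whether $h \in R$. To achieve the claimed space complexity we apply Theorem~\ref{thm:chinese-remainder} separately in $x$ and $y$: the $x$-degree is bounded by $n$ while the $y$-degree is bounded by $n W^*$, so the modulus $p$ can be chosen of bitsize $\Oh(\log n + \log W^*)$ and partial evaluations are stored as elements of $\FF_p$. This gives per-query space $\Oh(k(\log n + \log W^*))$, and with the same recursion depth $\Oh(dk)$ as in Theorem~\ref{thm:IS} the total space becomes $\Oh(dk^2(\log n + \log W^*))$; the running time acquires only a $(W^*)^{\Oh(1)}$ factor from arithmetic modulo $p$.

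The main obstacle is the careful design of the auxiliary graph and of the ``conflict-free'' conditions on $\sigma$ in the presence of self-loops in $M_a$ and of looped vertices in $H$. In particular, a pair $(i, h)$ with $M_a[i,i] = 1$ that is split across two children of $a$ forces a self-loop on $h$ in $H$, and types violating this are discarded by a local pre-check. Once these local filters are installed, the multiplicity analysis of the recursion tree, the Chinese remaindering argument, and the bound on the recursion depth all transfer from the proof of Theorem~\ref{thm:IS} without modification, yielding the claimed bounds of $2^{\Oh(dk)} \cdot \polyn \cdot (W^*)^{\Oh(1)}$ time and $\Oh(dk^2(\log n + \log W^*))$ space.
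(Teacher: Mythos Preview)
Your proposal is correct and matches the paper's approach almost exactly: the paper also replaces the label set $S\subseteq[k]$ by a subset of $\mathbf{States}=\{(\mathbf h,i):\mathbf h\in V(H),\,i\in[k]\}$, defines the auxiliary graph $F^{a,\beta}$ with edges between $(\mathbf h,i)$ and $(\mathbf h',j)$ whenever $M_a[i,j]=1$ and $\mathbf{hh'}\notin E(H)$, and tracks cardinality and weight via a bivariate polynomial. The one place where your sketch is imprecise is the phrase ``apply Theorem~\ref{thm:chinese-remainder} separately in $x$ and $y$'': a naive nesting of the univariate CRT runs into mixed-modulus issues, and the paper instead uses a Kronecker substitution, encoding $Q(x,y)$ as the univariate $P(t)=\sum q_{j_1,j_2}\,t^{j_1(nW^*+1)+j_2}$ so that $P(s)=Q(s^{nW^*+1},s)$ and a single application of Theorem~\ref{thm:chinese-remainder} with $n'=\Theta(n^2W^*)$ suffices.
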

    
\iflong
    Using the argumentation above, from \cref{thm:counting-weighted-list-homomorphisms} we can derive the following corollaries.
        
        \begin{longcorollary}\label{thm:counting-list-homomorphisms}
        Fix a graph $H$ (possibly with loops).
        Then given an $n$-vertex graph $G$ together with a $(d, k)$-tree-model, \textsc{\#-List-$H$-Homomorphism} in $G$ can be solved in time $2^{\mathcal{O}(dk)} \cdot \polyn$ and space $\mathcal{O}(dk^2 \log n)$.
    \end{longcorollary}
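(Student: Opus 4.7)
\textbf{Plan for the proof of Corollary~\ref{thm:counting-list-homomorphisms}.}
The plan is to realise \textsc{\#-List-$H$-Homomorphism} as a single query to the algorithm of Theorem~\ref{thm:counting-weighted-list-homomorphisms}. Concretely, I would instantiate the more general problem by choosing $R := V(H)$, assigning unit weights $\omega \equiv 1$, and setting the cardinality and total weight to $\cardinality := n$ and $\totalweight := n$, where $n = |V(G)|$. With this instantiation every list $H$-homomorphism $\phi$ automatically satisfies $|\phi^{-1}(R)| = |V(G)| = n$ and $\omega(\phi^{-1}(R)) = n$, so no homomorphism is discarded by the cardinality/weight constraint, and the value returned by the algorithm of Theorem~\ref{thm:counting-weighted-list-homomorphisms} is precisely the number of list $H$-homomorphisms of $G$.

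The second step is simply a resource accounting. The maximum weight is $W^* = 1$, so $\log W^* = 0$ and $(W^*)^{\Oh(1)} = 1$. Substituting into the bounds of Theorem~\ref{thm:counting-weighted-list-homomorphisms} yields running time $2^{\Oh(dk)} \cdot \polyn \cdot (W^*)^{\Oh(1)} = 2^{\Oh(dk)} \cdot \polyn$ and space $\Oh(k^2 d (\log n + \log W^*)) = \Oh(dk^2 \log n)$, which are exactly the bounds claimed in the corollary.

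Since $H$ is fixed, the choice of $R = V(H)$ is part of the (constant-sized) parameter of the reduction and does not enter the asymptotics. There is essentially no obstacle here beyond checking that the weighted formulation indeed captures the unweighted counting problem on the nose; the whole statement is a direct specialisation, so no additional combinatorial argument or algorithmic redesign is needed.
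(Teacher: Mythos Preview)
Your proposal is correct and matches the paper's approach exactly: the paper also instantiates Theorem~\ref{thm:counting-weighted-list-homomorphisms} with $R = V(H)$, unit weights, and $C = W = |V(G)|$, then reads off the bounds using $W^* = 1$. There is nothing to add.
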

    
    \begin{longcorollary}\label{thm:odd-cycle-transversal}
        Fix $q \in \NN$. Then given an $n$-vertex graph $G$ together with a $(d, k)$-tree-model, \textsc{$q$-Coloring} and \textsc{Odd Cycle Transversal} in $G$ can be solved in time $2^{\mathcal{O}(dk)} \cdot \polyn$ and space $\mathcal{O}(dk^2 \log n)$.
    \end{longcorollary}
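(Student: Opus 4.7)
The plan is to realize each of the two problems as a specific instantiation of \textsc{(Weighted) \#-List-$H$-Homomorphism} for a fixed pattern graph $H$ depending only on $q$, and then invoke Corollary~\ref{thm:counting-list-homomorphisms} (respectively Theorem~\ref{thm:counting-weighted-list-homomorphisms}) as a black box. Because $q$ is fixed, $|V(H)|=\Oh(1)$, so the overhead introduced by passing through the $H$-homomorphism framework is absorbed into the constants hidden in the $\Oh(\cdot)$ bounds.

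\emph{$q$-Coloring.} First I would take $H = K_q$ to be the loopless clique on $q$ vertices and, for the given input graph $G$, set $L(v) = V(H)$ for every $v \in V(G)$. A function $\phi \colon V(G) \to V(H)$ is a list $H$-homomorphism precisely when the images of adjacent vertices are distinct (since $H$ has no loops and is complete), i.e.\ precisely when $\phi$ is a proper $q$-coloring of $G$. Hence $G$ is $q$-colorable iff the number of list $H$-homomorphisms is nonzero, and Corollary~\ref{thm:counting-list-homomorphisms} yields an algorithm running in time $2^{\Oh(dk)}\cdot n^{\Oh(1)}$ and space $\Oh(dk^2 \log n)$ that tests this.

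\emph{Odd Cycle Transversal.} For this problem I would take $H$ to be the triangle on vertex set $\{\mathbf u,\mathbf v,\mathbf w\}$ with an added loop at $\mathbf u$, set $R=\{\mathbf u\}$, assign unit weights to every vertex of $G$, and set $L(v) = V(H)$ for every $v \in V(G)$. For any list $H$-homomorphism $\phi$, let $S = \phi^{-1}(\mathbf u)$, $A = \phi^{-1}(\mathbf v)$, $B = \phi^{-1}(\mathbf w)$. The adjacencies in $H$ allow all edges except those inside $A$ and those inside $B$, so the set $S$ is exactly an odd cycle transversal of $G$ (the partition $A,B$ of $G - S$ witnesses 2-colorability). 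Conversely, every odd cycle transversal $S$ of size $c$ extends to at least one such homomorphism $\phi$ with $|\phi^{-1}(R)| = c$. Therefore, iterating $c = 0,1,\dots,n$ and invoking Theorem~\ref{thm:counting-weighted-list-homomorphisms} with cardinality $C = c$ and total weight $W = c$ (unit weights), the smallest $c$ for which the returned count is nonzero equals the minimum size of an OCT of $G$.

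\emph{Complexity accounting.} Because unit weights are used, we have $W^* = 1$, so $\log W^* = 0$ and the Theorem~\ref{thm:counting-weighted-list-homomorphisms} guarantees time $2^{\Oh(dk)}\cdot n^{\Oh(1)}$ and space $\Oh(dk^2 \log n)$ per invocation. The outer loop over $c \in \{0,\dots,n\}$ adds only a polynomial factor in $n$ to the running time and can reuse the working memory, so the total bounds remain $2^{\Oh(dk)}\cdot n^{\Oh(1)}$ time and $\Oh(dk^2 \log n)$ space, as claimed. The only step that requires any care is verifying that the chosen pattern graph $H$ for OCT really captures the problem exactly (no spurious constraints), and that the loop at $\mathbf u$ is essential to allow arbitrary edges inside the transversal; both are routine once the construction is written out.
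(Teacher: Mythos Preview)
Your proposal is correct and follows essentially the same route as the paper: the paper explicitly lists the pattern graphs $H=K_q$ for \textsc{$q$-Coloring} and the triangle with a loop at $\mathbf u$ (with $R=\{\mathbf u\}$) for \textsc{Odd Cycle Transversal}, and then derives the corollary directly from Theorem~\ref{thm:counting-weighted-list-homomorphisms} with unit weights. Your added remarks on why the loop at $\mathbf u$ is needed and on the complexity accounting with $W^*=1$ are accurate and make the argument more self-contained than the paper's one-line derivation.
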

    The remainder of this section is devoted to the proof of \cref{thm:counting-weighted-list-homomorphisms}. We assume that the reader is familiar with the approach presented in \cref{subsec:independent-set}, as we will build upon it.
    
    Let now $H$ and $R$ be fixed and let $W^* = \max_{v \in V} \omega(v)$ be the maximum weight in $\omega$.
    Now we show how to adapt our techniques from \cref{subsec:independent-set} to the \textsc{$R$-Weighted \#-List-$H$-Homomorphism} problem.
    We assume that the graph $G$ is provided with a $(d, k)$-model $(T, \mathcal{M}, \cR, \lab)$. 
    There are two main changes: first, we adapt the dynamic programming formulas and second, we show how to apply \cref{thm:chinese-remainder} to polynomials in two variables that will appear in the proof.
    
    We start with dynamic programming.
    Let $a$ be a node of $T$.
    For \textsc{Maximum Independent Set}, our \emph{guess} $S$ was the set of labels occurring in an independent set of the current subgraph $G_a$.
    Now, instead, we guess a subset $S$ of $\mathbf{States} \coloneqq \{ (\mathbf{h},i) \mid \mathbf{h} \in V(H), i \in [k]\}$.
    For each label $i \in [k]$, the set $S$ is intended to reflect to which vertices of $H$ the set $V_a^i$ is mapped by a homomorphism.
    The set $\mathbf{States}$ has size $|V(H)| \cdot k$, i.e., $\mathcal{O}(k)$ for fixed~$H$.
    So as in \cref{subsec:independent-set}, there are still $2^{\mathcal{O}(k)}$ possibilities for $S$ and this will be the reason for the running time of $2^{\mathcal{O}(d k)} \cdot \polyn$ as in that section.
    As before, we then employ guesses of the form $\alpha \colon S \to \{1_=, 2_\geq\}$ and $\beta \colon S \to \{1_\geq, 1_=\}$ to compute the polynomials reflecting the number of $H$-homomorphisms of certain cardinality via inclusion-exclusion.
    Further, we need to forbid that edges of $G$ are mapped to non-edges of $H$.
    For this, the auxiliary graph $F^{a, \beta}$ again has vertex set $\beta^{-1}(1_=)$ but now there is an edge between two vertices $(\mathbf{h},i)$ and $(\mathbf{h'},j)$ whenever $M_a[i, j] = 1$ and $\mathbf{h h'}$ is not an edge of $H$.
    Then, if a homomoprhism maps a vertex $v_1$ with label $i$ to $\mathbf{h}$ and a vertex $v_2$ with label $j$ to $\mathbf{h'}$, our approach from \cref{subsec:independent-set} ensures that $v_1$ and $v_2$ come from the same child of $a$ so that no edge between $v_1$ and $v_2$ is created at $a$. 
    
    In \cref{subsec:independent-set}, all polynomials had only one variable $x$ whose degree reflected the size of an independent set.
    Here, additionally to cardinality we are interested in the weight of vertices mapped to $H$.
    So instead of univariate polynomials from $\ZZ[x]$, we use polynomials in two variables $x$ and $y$ where the degree of $y$ keeps track of the weight.
    The weights of partial solutions are initialized in the leaves of the tree-model, there we also take care of lists~$L$: the polynomial for a guess $S$ and a leaf $v$ is given by $x \cdot y^{\omega(v)}$ provided $S = \{(\mathbf{h},i)\}$ for some $\mathbf{h} \in L(v)$ and $i = \lab(v)$, and otherwise this polynomial is the zero polynomial.
    
    With this adaptations in hand, by a straightforward implementation of the recursion we can already obtain a $2^{\mathcal{O}(d k)} \cdot \polyn$-time algorithm that uses only polynomial space and computes the polynomial 
        $Q(x, y) = \sum_{p \in [n]_0, w \in [n W^*]_0} q_{p, w} x^p y^w$
    where $q_{p, w}$ is the number of list $H$-homomorphisms of $G$ 
    of cardinality $p$ and weight $w$.
    The answer to the problem is then the value $q_{C, W}$.
    To obtain logarithmic dependency on the graph size in space complexity, in \cref{subsec:independent-set} we relied on \cref{thm:chinese-remainder}.
    However, \cref{thm:chinese-remainder} concerns univariate polynomials, while $Q$ has two variables.
    We now explain how to model $Q$ as a univariate polynomial $P \in \ZZ[t]$ in order to apply the theorem.
    
    Let 
    \[
        P(t) = \sum\limits_{j_1 \in [n]_0, j_2 \in [nW^*]_0} q_{j_1, j_2} t^{j_1(nW^* + 1) + j_2}.
    \]
    First, observe that $j_1$ and $j_2$ form a base $nW^* + 1$ representation of the degree of the corresponding monomial. 
    So the coefficient standing by $t^{C(nW^* + 1) + W}$ in $P$ is exactly $q_{C, W}$, i.e., the value we seek.
    Further, it holds
    \[
        P(t) = \sum_{j_1 \in [n]_0, j_2 \in [n W^*]_0} q_{j_1, j_2} (t^{n W^* + 1})^{j_1} t^{j_2},
    \]
    so evaluating $P$ at some value $s \in \ZZ$ modulo a prime number $p$ is equivalent to computing the value $
        Q(s^{n W^* + 1}, s) \bmod p$.
    
    It remains to choose suitable values to apply \cref{thm:chinese-remainder}.
    The degree of $P$ is bounded by $\mathcal{O}(n^2 W^*)$.
    The number of $H$-homomorphisms of $G$, and hence each coefficient of $P$ as well, is bounded by $|V(H)|^n$.
    Since $|V(H)|$ is a problem-specific constant, there is a value $n'$ of magnitude $\mathcal{O}(n^2 W^*)$ satisfying the prerequisities of \cref{thm:chinese-remainder}.
    Then for a prime number $p \leq 2n'+2$, any value from $\FF_p$ is $\mathcal{O}(\log n + \log W^*)$ bits long.
    Now to compute the value $Q(s^{n W^* + 1}, s) \bmod p$ for some $s \in \FF_p$, we proceed similarly to \cref{subsec:independent-set}: during the recursion, instead of storing all coefficients of the polynomials, as a partial result we only store the current result of the evaluation at $x = s^{n W^* + 1}$ and $y = s$ modulo $p$.

    Let us now summarize the time and space complexity of this evaluation similarly to \cref{subsec:independent-set}.
    The depth of $T$ is $d$ and per node of $T$, there are at most $4 + |\textbf{States}|$ recursive calls where $|\textbf{States}|$ reflects that the transformation from $1_\geq$ to $2_\geq$ is carried out for every element of a guess $S \subseteq \textbf{States}$ (recall the tables $T(\cdot, c=0, \cdot), \dots, T(\cdot, c=|\alpha^{-1}(2_\geq)|, \cdot)$ in \cref{subsec:independent-set}).
    Due to $|\textbf{States}| = k \cdot |V(H)|$, the recursion depth is then $\mathcal{O}(kd)$.
    The number of possible guesses $S$ as  well as reasonable $\alpha$ and $\beta$ is bounded by $2^{\mathcal{O}(\textbf{States})} = 2^{\mathcal{O}(k)}$.
    Also, for a node $a$ and a reasonable $\beta$, the auxiliary graph $F^{a, \beta}$ has at most $|\textbf{States}| = \mathcal{O}(k)$ vertices.
    Recall that in \cref{subsec:independent-set}, at some point of the computation we work with a polynomial using a variable $z$.
    For this variable, only coefficients at monomials $z^i$ for $i \leq |V(F^{a, \beta})|$ are relevant.
    Hence, for each query we need to keep only $\mathcal{O}(k)$ coefficients from $\FF_p$ and such a coefficient uses $\Oh(\log n + \log W^*)$ bits.
    The addition and multiplication of two such coefficients can be done in time $\mathcal{O}(\log n + \log W^*)$.
    These properties imply that following the argument from \cref{subsec:independent-set} we obtain the running time of $2^{\mathcal{O}(dk)} \cdot \polyn \cdot \log W^*$ and space complexity of $\mathcal{O}(kd) \cdot k \cdot \mathcal{O}(\log n + \log W^*) = \mathcal{O}(k^2 d (\log n + \log W^*))$.
    
   With that, \cref{thm:chinese-remainder} implies that the coefficients of $P$, and in particular the sought value $q_{C, W}$, can be reconstructed in time $2^{\mathcal{O}(dk)} \cdot \polyn \cdot \log W^*$ and using $\mathcal{O}(k^2 d (\log n + \log W^*))$ space. This concludes the proof of \cref{thm:counting-weighted-list-homomorphisms}.

\medskip

We remark that the result of \cref{thm:counting-weighted-list-homomorphisms} can be combined with the Cut\&Count technique of Cygan et al.~\cite{CyganNPPRW22} in order to incorporate also connectivity constraints to \textsc{List $H$-Homomorphism} and solve problems like \textsc{Connected Vertex Cover} and \textsc{Connected Odd Cycle Transversal}. In essence Cut\&Count provides a randomized reduction from \textsc{List $H$-Homomorphism} with connectivity constraints to \textsc{\#-List $H'$-Homomorphism} for a new pattern graph $H'$ with at most twice as many vertices as $H$. Since in the reduction only the parity of the number of solutions is preserved, in Cut\&Count one typically uses the Isolation Lemma~\cite{ImpagliazzoPZ01} to sample a weight function so that with high probability, there is exactly one (and thus, an odd number) solution of minimum possible weight; then counting the number of solutions mod $2$ for all possible weights reveals the existence of a solution. Note here that the algorithm of \cref{thm:counting-weighted-list-homomorphisms} is already prepared to count weighted solutions. In our setting, the usage of Isolation Lemma necessitates allowing randomization and adds an $\Oh(n\log n)$ factor to the space complexity for storing the sampled weights. We leave the details to the~reader.
\fi

\iflong
\subsection{Max-Cut}
\fi
\ifshort
\smallskip
\noindent \textbf{Max Cut.} \quad
\fi
In the classical \textsc{Max Cut} problem, we are given a graph $G$ and the task is to output $\max_{X\subseteq V(G)} \abs{E(X, V(G) \setminus X)}$.
Towards solving the problem, let us fix a graph $G$ and a $(d,k)$-tree model  $(T, \mathcal{M}, \cR, \lab)$ of $G$.
Recall that for every node $a$ of $T$, $i\in [k]$ and $X\subseteq V_a$, we denote by $X_a(i)$ the set of vertices in $X$ labeled $i$ at $a$, i.e., $X\cap \lambda_a^{-1}(i)$.
Given a child $b$ of $a$, we let $V_{ab}=V_b$ and we denote by $V_{ab}(i)$ the set of vertices in  $V_b$ labeled $i$ at $a$, \ie, $V_b \cap V_a(i)$.
By $X_{ab}(i)$ we denote the set $X\cap V_{ab}(i)$.
Given $c\in\{a,ab\}$, we define the $c$-\textit{signature} of $X\subseteq V_c$ --- denoted by $\signature_c(X)$ --- as the vector $(\abs{X_c(1)},\abs{X_c(2)},\dots,\abs{X_c(k)})$.
We let $\cS(c)$ be the set of $c$-signatures of all the subsets of $V_c$, \ie, $\cS(c)\defeq\{\signature_c(X)\mid X\subseteq V_c\}$.
Observe that $|\cS(c)| \in n^{\Oh(k)}$ holds.
Also, for the children $b_1,\dots,b_t$ of $a$, we define $\cS(ab_1,\dots,ab_t)$ as the set of all tuples $(s^1,\dots,s^t)$ with $s^i\in \cS(ab_i)$ for each $i\in [t]$.
Given $s\in \cS(c)$, we define~$f_c(s)$ as the maximum of $\abs{E(X,V_c\setminus {X})}$ over all the subsets $X\subseteq V_c$ with $c$-signature $s$.
To solve \textsc{Max Cut} on $G$, it suffices to compute $\max_{s\in \cS(r)} f_r(s)$ where $r$ is the root of~$T$.

Let $b$ be a child of $a$. We start explaining how to compute $f_{ab}(s)$ by making at most $n^{\Oh(k)}$ calls to the function $f_{b}$.
Given $s'\in \cS(b)$, we define $\rho_{ab}(s')$ as the vector $s = (s_1, \dots, s_k)\in \cS(ab)$ such that, for each $i\in [k]$, we have $s_i=\sum_{j\in \rho_{ab}^{-1}(i)} s_j'$.
Observe that for every $X\subseteq V_b$, we have $\signature_{ab}(X)= \rho_{ab}(\signature_b(X))$. Consequently, for every $s\in \cS(ab)$, $f_{ab}(s)$ is the maximum of $f_b(s')$ over the $b$-signatures $s'\in \cS(b)$ such that $\rho_{ab}(s')=s$.
It follows that we can compute~$f_{ab}(s)$ with at most $n^{\Oh(k)}$ calls to the function $f_b$.

\begin{observation}\label{obs:maxcut:fab}
    Given a node $a$ of $T$ with a child $b$ and $s\in \cS(ab)$, we can compute $f_{ab}$ in space $\Oh(k\log(n))$ and time $n^{\Oh(k)}$ with $n^{\Oh(k)}$ oracle access to the function $f_b$.
\end{observation}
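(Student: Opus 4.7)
The plan is to implement directly the characterization established in the paragraph preceding the observation, namely that $f_{ab}(s) = \max\{ f_b(s') : s' \in \cS(b),\ \rho_{ab}(s') = s\}$. We simply enumerate the candidate signatures $s' \in \cS(b)$, filter those with $\rho_{ab}(s') = s$, query the oracle for $f_b(s')$, and keep a running maximum.

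More concretely, we iterate over all vectors $s' = (s'_1, \dots, s'_k) \in [n]_0^k$; since $|\cS(b)| \leq (n+1)^k \in n^{\Oh(k)}$, this enumeration performs only $n^{\Oh(k)}$ steps. For each such $s'$ we compute $\rho_{ab}(s')$ on the fly by initialising a $k$-vector to $\mathbf{0}$ and, for each $j \in [k]$, adding $s'_j$ to the $\rho_{ab}(j)$-th coordinate; this runs in time $\Oh(k)$ and uses $\Oh(k \log n)$ bits. We then compare the resulting vector component-wise with $s$, and if they agree we query the oracle for $f_b(s')$ and update a variable storing the current maximum (initialised to $-\infty$, or to $0$ if we know the set of valid $s'$ is non-empty, and otherwise $f_{ab}(s)$ is undefined since no $X$ realises~$s$).

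For the space bound, at any moment we store: the input signature $s$ (size $\Oh(k\log n)$), the currently enumerated vector $s'$ (size $\Oh(k\log n)$), the computed image $\rho_{ab}(s')$ (size $\Oh(k\log n)$), and the running maximum, which is bounded by $\binom{n}{2} \leq n^2$ and therefore uses $\Oh(\log n)$ bits; the oracle call to $f_b(s')$ is treated as returning a value of the same size and does not count towards our space. In total this is $\Oh(k \log n)$ bits, as required. The running time is $n^{\Oh(k)}$ enumeration steps, each costing $\poly(k, \log n)$ arithmetic plus one oracle query, and hence $n^{\Oh(k)}$ oracle queries overall.

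There is essentially no obstacle here: once the formula $f_{ab}(s) = \max_{s' : \rho_{ab}(s') = s} f_b(s')$ is in place, the only nontrivial points are to (i) argue that $\cS(b)$ can be enumerated in logarithmic-in-$n$ additional space (trivial, as each coordinate lies in $[n]_0$), and (ii) observe that we need not materialise the whole table of $f_b$-values but only fold them into a running maximum, which is what keeps the auxiliary space at $\Oh(k \log n)$.
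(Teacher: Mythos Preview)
Your proposal is correct and follows exactly the approach the paper intends: the observation is not given a separate proof in the paper but is stated as an immediate consequence of the formula $f_{ab}(s) = \max\{f_b(s') : s' \in \cS(b),\ \rho_{ab}(s') = s\}$ established just before it, and you have simply spelled out the obvious enumeration-and-running-maximum implementation together with the routine space accounting.
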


In order to simplify forthcoming statements, we fix a node $a$ of $T$ with children $b_1,\ldots, b_t$.
Now, we explain how to compute $f_a(s)$ by making at most $n^{\Oh(k)}$ calls to the functions $f_{ab_1}, \ldots, f_{ab_t}$. 
The first step is to express $f_a(s)$ in terms of $f_{ab_1}, \ldots, f_{ab_t}$.
We first describe $\abs{E(X,V_a\setminus X)}$ in terms of $\abs{E(X\cap V_{b_i},V_{b_i}\setminus X)}$.
We denote by 
$E(V_{b_1},\dots,V_{b_t})$ the set of edges of $G[V_a]$ whose endpoints lie in
different $V_{b_i}$'s, i.e. $E(G[V_{b_1},\dots,V_{b_t}])\setminus (E(G[V_{b_1}] \cup \dots \cup E(G[V_{b_t}])))$.
Given $X\subseteq V_a$, we denote by $E_a(X)$ the intersection of $E(X,V_a\setminus X)$ and $E(V_{b_1},\dots,V_{b_t})$. 
In simple words, $E_a(X)$ is the set of all cut-edges (i.e., between $X$ and $V_a \setminus X$) running between distinct children of $a$.
For $i,j\in [k]$, we denote by $E_a(X,i,j)$ the subset of $E_a(X)$ consisting of the edges whose endpoints are labeled $i$ and $j$.
We capture the size of $E_a(X,i,j)$ with the following notion. For every $c\in \{a,ab_1,\dots,ab_t\}$, $s\in \cS(c)$ and $i,j\in [k]$, we define 
\[ \edgelabel_c(s,i,j) \defeq \begin{cases}
    s_i\cdot (\abs{V_c(j)}-s_j) + s_j\cdot (\abs{V_c(i)}-s_i) \text{ if }i\neq j,\\
    s_i \cdot (\abs{V_c(i)}-s_i) \text{ otherwise.}
\end{cases} \]
It is not hard to check that, for every subset $X\subseteq V_a$ with $a$-signature $s$, $\edgelabel_a(s,i,j)$ is the size of $\pairs_a(X,i,j)$ 
being the set of pairs of distinct vertices in $V_a$ labeled $i$ and $j$ at $a$ such that exactly one of them is in $X$.
Observe that when $M_a[i,j]=1$, then $\abs{E_a(X,i,j)}$ is the number of pairs in $\pairs_a(X,i,j)$ whose endpoints belong to different sets among $V_{b_1},\dots,V_{b_t}$.
Moreover, given a child $b$ of $a$, the number of pairs in $\pairs_a(X,i,j)$ whose both endpoints belong to $V_b$ is exactly $\edgelabel_{ab}(\signature_{ab}(X),i,j)$.
Thus when $M_a[i,j]=1$, we have
\begin{equation}\label{eq:maxcut:edgelabel}
        \abs{E_a(X,i,j)} =  \edgelabel_a(\signature_a(X),i,j) - \sum_{i\in[t]} \edgelabel_{ab_i}(\signature_{ab_i}(X),i,j) \enspace .
\end{equation}
We capture the size of $E_a(X)$ with the following notion. For every $c\in \{a,ab_1,\dots,ab_t\}$, $s\in \cS(c)$ and $(k\times k)$-matrix $M$, we define 
\iflong
 \[ m_c(s,M)\defeq \sum_{\substack{i,j\in [k], i\leq j \\ M[i,j]=1}}   \edgelabel_c(s,i,j). \]
\fi
\ifshort
$m_c(s,M)\defeq \sum_{\substack{i,j\in [k], i\leq j \\ M[i,j]=1}}   \edgelabel_c(s,i,j).$
  \fi
Note that $\abs{E_a(X)} = \sum_{i,j\in [k] \colon i\leq j,  M_a[i,j]=1} \abs{E_a(X,i,j)}$.
Hence, by Equation~\ref{eq:maxcut:edgelabel}, we deduce that $\abs{E_a(X)}=m_a(\signature_a(X),M_a) - \sum_{i\in [t]} m_{ab_i}(\signature_{ab_i}(X),M_a)$.
Since $E(X,V_a\setminus X)$ is the disjoint union of $E_a(X)$ and the sets $E(X\cap V_{b_1},V_{b_1}\setminus X),\dots,E(X\cap V_{b_t},V_{b_t}\setminus X)$ , we deduce:

\begin{observation}\label{lem:maxcut:recursion:anyset}
    For every $X\subseteq V_a$ we have
    \begin{equation*}
        \abs{E(X, V_a \setminus X)} = m_a(\signature_a(X),M_a) + \sum_{i=1}^{t} \left(\abs{E(X_i\cap V_{b_i}, V_{b_i}\setminus X_i)} - m_{ab_i}(\signature_{ab_i}(X_i),M_a)\right) \enspace . 
    \end{equation*}
\end{observation}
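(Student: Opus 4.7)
The plan is to derive the identity as a direct consequence of two facts that are essentially assembled in the paragraphs just above the statement, together with one explicit partition of the cut. First I would record that, since $a$ is an internal node with children $b_1,\dots,b_t$, the set $V_a$ is the disjoint union of $V_{b_1},\dots,V_{b_t}$, so every edge of $G_a$ either lies inside some $G_{b_i}$ or has its two endpoints in distinct $V_{b_i},V_{b_j}$. Consequently,
\[
\abs{E(X,V_a\setminus X)} \;=\; \abs{E_a(X)} \;+\; \sum_{i=1}^{t}\abs{E(X\cap V_{b_i},V_{b_i}\setminus X)},
\]
which is disjoint by construction, with $E_a(X)$ capturing exactly the cross-child contribution to the cut.

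The second step is to establish the closed form $\abs{E_a(X)} = m_a(\signature_a(X),M_a) - \sum_{i=1}^{t} m_{ab_i}(\signature_{ab_i}(X),M_a)$. I would partition $E_a(X)$ according to the (unordered) label pair of the endpoints, writing $E_a(X) = \bigsqcup_{i\le j} E_a(X,i,j)$. When $M_a[i,j]=0$, the tree-model introduces no edges at $a$ between labels $i$ and $j$, so $E_a(X,i,j)=\emptyset$. When $M_a[i,j]=1$, the tree-model installs a complete bipartite pattern between $V_a(i)$ and $V_a(j)$ at $a$, so the edges in $E_a(X,i,j)$ are exactly those pairs of $\pairs_a(X,i,j)$ whose endpoints lie in two distinct children. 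Subtracting from $\edgelabel_a(\signature_a(X),i,j) = \abs{\pairs_a(X,i,j)}$ the in-child contributions $\edgelabel_{ab_\ell}(\signature_{ab_\ell}(X),i,j)$ yields Equation~\ref{eq:maxcut:edgelabel}, and summing it over the pairs $i\le j$ with $M_a[i,j]=1$ gives the promised formula via the definition of $m_c(\cdot,M_a)$.

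The third step is a substitution: plug the expression for $\abs{E_a(X)}$ obtained in step two into the decomposition from step one and collect the $m_{ab_i}$ terms under the same summation over children as the within-child cut sizes. The result is exactly the identity stated in the observation. The only delicate point, which I would argue most carefully, is the complete-bipartite claim at the heart of step two: that whenever $M_a[i,j]=1$, every ordered pair $(u,v)\in V_a(i)\times V_a(j)$ whose least common ancestor in $T$ is $a$ forms an edge of $G$. This is immediate from the definition of a tree-model, but it is what allows one to equate $\abs{E_a(X,i,j)}$ with a purely signature-based count and thereby reduce the rest of the proof to arithmetic bookkeeping.
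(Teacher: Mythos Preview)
Your proposal is correct and follows essentially the same approach as the paper: the paper derives the identity from the disjoint-union decomposition $E(X,V_a\setminus X)=E_a(X)\sqcup\bigsqcup_i E(X\cap V_{b_i},V_{b_i}\setminus X)$ together with the formula $\abs{E_a(X)}=m_a(\signature_a(X),M_a)-\sum_i m_{ab_i}(\signature_{ab_i}(X),M_a)$, the latter obtained by summing Equation~\eqref{eq:maxcut:edgelabel} over label pairs with $M_a[i,j]=1$. Your three steps mirror this exactly, and the ``delicate point'' you single out is precisely the observation the paper makes just before Equation~\eqref{eq:maxcut:edgelabel}.
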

 	
We are ready to express $f_a(s)$ in terms of $f_{ab_1},\dots,f_{ab_t}$ and $m_a,m_{ab_1},\dots,m_{ab_t}$.

 	\begin{lemma} 
 \label{lem:maxcut:recursion:fa(s)}
		For every $s\in \cS(a)$, we have 
		\[ f_a(s) = m_a(s,M_a) + \max_{\substack{(s^1,\dots,s^t)\in \cS(ab_1,\dots,ab_t) \\ s=s^1+\dots+s^t}} \left(\sum_{i=1}^{t} \left(f_{ab_i}(s^i) - m_{ab_i}(s^i,M_a)\right)\right) \enspace .\]
	\end{lemma}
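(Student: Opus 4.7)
The plan is to reduce the claim to Observation~\ref{lem:maxcut:recursion:anyset} and then split the maximization across the children. First I would unfold the definition $f_a(s) = \max_{X \subseteq V_a,\ \signature_a(X) = s} |E(X, V_a \setminus X)|$. For every admissible $X$, Observation~\ref{lem:maxcut:recursion:anyset} rewrites
\[
|E(X, V_a \setminus X)| = m_a(s, M_a) + \sum_{i=1}^t \Bigl(|E(X \cap V_{b_i}, V_{b_i} \setminus X)| - m_{ab_i}(\signature_{ab_i}(X \cap V_{b_i}), M_a)\Bigr),
\]
where I used $\signature_a(X) = s$ in the first term. Since that term depends on $s$ only, it immediately factors outside the max.

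Next I would exploit that $V_{b_1}, \dots, V_{b_t}$ partition $V_a$: selecting $X \subseteq V_a$ is equivalent to independently selecting $X_i := X \cap V_{b_i} \subseteq V_{b_i}$ for each $i \in [t]$. Moreover, since $V_a(j) = V_{ab_1}(j) \cup \dots \cup V_{ab_t}(j)$ is a disjoint union for every $j \in [k]$, the definition of signatures gives $\signature_a(X) = \signature_{ab_1}(X_1) + \dots + \signature_{ab_t}(X_t)$, with coordinate-wise addition. Hence the constraint $\signature_a(X) = s$ rephrases as: there is a tuple $(s^1, \dots, s^t) \in \cS(ab_1, \dots, ab_t)$ with $s = s^1 + \dots + s^t$ such that $\signature_{ab_i}(X_i) = s^i$ for every $i$. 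I would therefore split the maximization into an outer max over such tuples and an inner max over choices of $(X_1, \dots, X_t)$ compatible with the tuple.

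For a fixed tuple, the inner maximization decomposes: the value $m_{ab_i}(\signature_{ab_i}(X_i), M_a) = m_{ab_i}(s^i, M_a)$ depends only on $s^i$, and the $i$-th summand depends only on $X_i$. Consequently the inner max factors across $i$, and each factor equals $\max_{X_i \subseteq V_{b_i},\ \signature_{ab_i}(X_i) = s^i} |E(X_i, V_{b_i} \setminus X_i)| = f_{ab_i}(s^i)$ by definition of $f_{ab_i}$. Substituting back yields exactly the claimed recurrence.

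I do not anticipate a real obstacle here: the formula is essentially the recursive decomposition of Observation~\ref{lem:maxcut:recursion:anyset} combined with the partition of $V_a$ into the $V_{b_i}$'s. The only point worth spelling out is that a tuple $(s^1, \dots, s^t)$ lies in $\cS(ab_1, \dots, ab_t)$ precisely when each $s^i$ is realized by some subset of $V_{b_i}$, so the inner max is never empty and the outer max is attained on some genuine $X$ with $\signature_a(X) = s$.
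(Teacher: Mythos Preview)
Your proposal is correct and follows essentially the same approach as the paper's own proof: unfold $f_a(s)$, apply Observation~\ref{lem:maxcut:recursion:anyset}, factor out $m_a(s,M_a)$, use that $\signature_a(X)=\sum_i \signature_{ab_i}(X\cap V_{b_i})$ to split the maximization over tuples $(s^1,\dots,s^t)$, and then decouple the inner max across children since each summand depends only on $X_i$. Your extra remark that the inner max is non-empty because each $s^i\in\cS(ab_i)$ is realized is a nice touch the paper leaves implicit.
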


\iflong
 	\begin{proof}
		Let $s\in \cS(a)$.
		By \Cref{lem:maxcut:recursion:anyset} we know that
		\[ f_a(s) = \max_{\substack{X\subseteq V_a \\ \signature_a(X)=s}} \abs{E(X, V_a \setminus X)} = m_a(s,M_a) +
		\max_{\substack{X\subseteq V_a \\ \signature_a(X)=s}}  \left( \sum_{i=1}^{t} \abs{E(X_i, V_{b_i}\setminus X_i)} - m_{ab_i}(\signature_{ab_i}(X_i),M_a)\right)  \]
		where $X_i$ is a shorthand for $X\cap V_{b_i}$.
		Observe that for every $X\subseteq V_a$, we have $\signature_a(X)=s$ iff $s=\sum_{i=1}^t \signature_{ab_i}(X\cap V_{b_{i}})$.
		Since $f_{ab_i}(s^i)$ is the maximum $\abs{E(X_i, V_{b_i}\setminus X_i)}$ over all $X_i\subseteq V_{b_i}$ with $ab_i$-signature $s^i$ while $m_{ab_i}(\signature_{ab_i}(X_i),M_a)$ only depends on $s^i$ and not on the concrete choice of $X_i$, we conclude that $f_a(s)$ equals $m_a(s,M_a)$ plus
		\[ \max_{\substack{(s^1,\dots,s^t)\in \cS(ab_1,\dots,ab_t) \\ s=s^1+\dots+s^t}} \left(\sum_{i=1}^{t} f_{ab_i}(s^i) - m_{ab_i}(s^i,M_a)\right).\]
	\end{proof}
 \fi
 	
	To compute $f_a(s)$ we use a twist of Kane's algorithm~\cite{Kane10} for solving the \textsc{$k$-dimensional Unary Subset Sum} in Logspace. 
 \iflong
	The twist relies on using a polynomial, slightly different from  the original work of Kane~\cite{Kane10}, defined in the following lemma.
	
	Given a vector $s=(s_1,\dots,s_k)\in \bZ^k$ and $B\in \bZ$, we denote by $s | B$ the vector $(s_1,\dots,s_k,B)$.
	We denote by $C$ the number $2n^2+1$ and, given a vector $s'\in \bZ^{k+1}$, we denote by $C(s')$ the sum $\sum_{i\in [k+1]} C^{i-1} s'_i$.
	
	\begin{longlemma}
 \label{lem:maxcut:poly}
		Let $s\in \cS(a)$ and $B\in [\abs{E(G[V_a])}]$.
		Let $A(s,B)$ be the number of tuples $(s^1,\dots,s^t)\in \cS(ab_1,\dots,ab_t)$ such that $s=s^1+\dots + s^t$ and 
		\[ B - m_a(s,M_a) = \sum_{j=1}^{t}  f_{ab_j}(s^j) - m_{ab_j}(s^j,M_a). \]
		For every prime number $p > C^{k+1} + 1$,  we have $-A(s,B) \equiv P_{a,s}(B,p)$ (mod $p$) where
		\[ P_{a,s}(B,p) \defeq \sum_{x=1}^{p-1} x^{C(s | B - m_a(s,M_a))}
		\left( \prod_{j=1}^{t} \left(\sum_{s^j\in \cS(ab_j)}  x^{-C(s^j | f_{ab_j}(s^j) - m_{ab_j}(s^j,M_a))}\right)\right).\]
	\end{longlemma}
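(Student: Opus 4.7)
The plan is to interchange the summations in the definition of $P_{a,s}(B,p)$ and apply Fermat's little theorem in $\FF_p$. First, I would distribute the product over $j \in [t]$, obtaining
\[
P_{a,s}(B,p) = \sum_{(s^1,\ldots,s^t) \in \cS(ab_1,\ldots,ab_t)} \sum_{x=1}^{p-1} x^{E(s^1,\ldots,s^t)},
\]
where
\[
E(s^1,\ldots,s^t) = C\bigl(s \mid B - m_a(s,M_a)\bigr) - \sum_{j=1}^{t} C\bigl(s^j \mid f_{ab_j}(s^j) - m_{ab_j}(s^j, M_a)\bigr).
\]
Using that the map $C \colon \bZ^{k+1} \to \bZ$ is $\bZ$-linear, the exponent equals $C(\vec v)$ for the vector $\vec v \in \bZ^{k+1}$ whose first $k$ entries are $s - \sum_{j=1}^t s^j$ and whose last entry is $B - m_a(s,M_a) - \sum_{j=1}^t \bigl(f_{ab_j}(s^j) - m_{ab_j}(s^j, M_a)\bigr)$.

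Next, I would invoke the standard identity that $\sum_{x=1}^{p-1} x^e \equiv -1 \pmod p$ when $(p-1) \mid e$ and $\equiv 0 \pmod p$ otherwise; this extends to negative exponents by interpreting $x^{-1}$ as the multiplicative inverse in $\FF_p$. It then suffices to show that under the hypothesis $p > C^{k+1}+1$, the condition $(p-1) \mid C(\vec v)$ holds if and only if $\vec v = 0$. Granting this, the only tuples surviving modulo $p$ are those with $\vec v = 0$, that is, tuples satisfying both $s = \sum_j s^j$ and $B - m_a(s,M_a) = \sum_j \bigl(f_{ab_j}(s^j) - m_{ab_j}(s^j, M_a)\bigr)$; each contributes $-1$, and by definition these are exactly the $A(s,B)$ tuples, yielding $P_{a,s}(B,p) \equiv -A(s,B) \pmod p$.

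The equivalence between $(p-1) \mid C(\vec v)$ and $\vec v = 0$ would follow from two bounds. A careful inspection of the definitions of $\edgelabel_c$, $m_c$, and $f_c$ should show that every entry of $\vec v$ has absolute value at most $n^2$, hence strictly less than $C/2 = (2n^2+1)/2$. By uniqueness of the balanced base-$C$ representation, this forces $C(\vec v) = 0$ iff $\vec v = 0$, and in all cases $|C(\vec v)| \leq n^2 \sum_{i=0}^{k} C^i = n^2 \cdot \tfrac{C^{k+1}-1}{2n^2} < C^{k+1}/2$. Since $p - 1 > C^{k+1}$ by hypothesis, a nonzero $C(\vec v)$ cannot be divisible by $p-1$.

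The main obstacle I anticipate is the entrywise bounding in the last paragraph: checking that each quantity appearing in $\vec v$ is bounded by $n^2$ in absolute value requires unpacking the definitions and exploiting that both the total number of edges of $G[V_a]$ and the total number of vertex pairs in $V_a$ are at most $n^2$. The rest is a routine interchange of finite sums combined with Fermat's little theorem and the $\bZ$-linearity of $C$.
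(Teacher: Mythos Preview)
Your approach is exactly the one in the paper: expand the product, interchange the sums, apply the character-sum identity $\sum_{x=1}^{p-1} x^e \equiv -1$ or $0 \pmod p$ according to whether $(p-1)\mid e$, and then argue that the exponent vanishes precisely for the tuples counted by $A(s,B)$.

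There is one small but genuine slip in your bounding step. The last coordinate of $\vec v$ is
\[
B - m_a(s,M_a) - \sum_{j=1}^t f_{ab_j}(s^j) + \sum_{j=1}^t m_{ab_j}(s^j,M_a),
\]
and each of the four terms lies in $[0,n^2]$, so this coordinate has absolute value at most $2n^2$, not $n^2$. Consequently the claim that every entry is strictly below $C/2=(2n^2+1)/2$ fails, and with it the ``balanced base-$C$'' uniqueness argument as you phrased it. The paper instead uses only that each entry has absolute value at most $C-1$ and proves $C(\vec v)=0\Leftrightarrow \vec v=0$ by a direct top-digit induction: if $e_{k+1}\neq 0$ then $|C^k e_{k+1}|\geq C^k$ while $\bigl|\sum_{i=1}^k C^{i-1}e_i\bigr|\leq C^k-1$, and so on. The resulting bound $|C(\vec v)|\leq C^{k+1}-1<p-1$ then gives the equivalence with $(p-1)\mid C(\vec v)$. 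With this adjustment your argument matches the paper's proof.
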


 \iflong
	\begin{proof}
		First, note that 
		\begin{equation}\label{eq:poly}
			x^{C(s | B - m_a(s,M_a))}
			\left( \prod_{j=1}^{t} \left(\sum_{s^j\in \cS(ab_j)}  x^{-C(s^j | f_{ab_j}(s^j) - m_{ab_j}(s^j,M_a))}\right)\right)
			= \sum_{s^1,\dots,s^t\in \cS(b_1,\dots,b_t)} x^{\alpha(s^1,\dots,s^t)} 
		\end{equation}
		where 
		\[ \alpha(s^1,\dots,s^t)= C(s | B - m_a(s,M_a)) - \sum\limits_{j = 1}^{t} \left(C(s^j | f_{ab_j}(s^j) - m_{ab_j}(s^j,M_a))\right). \]
		As in \cite{Kane10}, the idea of this proof is to change the order of summation, show that the terms where $\alpha(s^1,\dots,s^t)\neq 0$ cancel out, and prove that the sum of the terms where $\alpha(s^1,\dots,s^t)= 0$ is $-A(s,B)$.
		The latter is implied by the following claim.
		
		\begin{longclaim}\label{claim:cancel}
			For every $(s^1,\dots,s^t)\in \cS(ab_1,\dots,ab_t)$, the absolute value of $\alpha(s^1,\dots,s^t)$ is at most $C^{k+1}$.
			Moreover, $\alpha(s^1,\dots,s^t)=0$ iff $s=s^1+\dots+s^t$ and $B - m_a(s,M_a) = \sum_{i=1}^{t} f_{ab_i}(s^i) - m_{ab_i}(s^i,M_a)$.
		\end{longclaim}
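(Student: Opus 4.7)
My approach is to view $\alpha(s^1,\dots,s^t)$ as a signed $(k+1)$-digit integer in base $C$, show that every digit has absolute value strictly less than $C$, and then invoke uniqueness of such representations. Unpacking the definition of $C(\cdot)\colon \mathbb{Z}^{k+1}\to\mathbb{Z}$ yields
\[
\alpha \;=\; \sum_{i=1}^{k} C^{i-1}\Bigl(s_i - \sum_{j=1}^{t} s^j_i\Bigr) \;+\; C^{k}\Bigl(B - m_a(s,M_a) - \sum_{j=1}^{t}\bigl(f_{ab_j}(s^j) - m_{ab_j}(s^j,M_a)\bigr)\Bigr),
\]
so the lower $k$ ``digits'' measure whether $s=s^1+\dots+s^t$ holds coordinatewise and the top digit measures whether $B$ equals the cut-size contribution assembled from the children via \Cref{lem:maxcut:recursion:anyset}. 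Once each of the $k+1$ digits is confined to $(-C,C)$, the iff-part of the claim is immediate: from $\alpha=0$ we get that the lowest digit is divisible by $C$ and therefore vanishes; peeling it off and iterating shows that all digits vanish, which is exactly the conjunction $s=s^1+\dots+s^t$ and $B - m_a(s,M_a) = \sum_j (f_{ab_j}(s^j) - m_{ab_j}(s^j,M_a))$.

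The first $k$ digits are easy: each $s_i$ and each $\sum_j s^j_i$ is a sum of label-class cardinalities inside $V_a$, so both lie in $[0,n]$, giving $|s_i-\sum_j s^j_i|\leq 2n<C$ for $n\geq 2$ (the case $n\leq 1$ is trivial). The delicate point, which I expect to be the main obstacle, is the top digit: a term-by-term triangle inequality across the $t$ children would blow past $C$ once $t$ is large, so one must keep the four non-negative summands grouped as two differences. Both $B$ and $m_a(s,M_a)$ are bounded by $\binom{n}{2}$ just by counting unordered pairs of vertices of $V_a$. In parallel, $\sum_j f_{ab_j}(s^j)$ and $\sum_j m_{ab_j}(s^j,M_a)$ each count pairs of vertices confined to a single $V_{b_j}$, so they are bounded by $\sum_j\binom{|V_{b_j}|}{2}\leq \binom{|V_a|}{2}\leq \binom{n}{2}$, using the super-additivity of $x\mapsto\binom{x}{2}$ on non-negative integers. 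Applying $|x-y|\leq \max(x,y)$ for $x,y\geq 0$ to each of the two differences and adding, the top digit has absolute value at most $n^2<C$.

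Finally, the quantitative bound $|\alpha|\leq C^{k+1}$ follows by combining the digit bounds with the geometric series: $|\alpha|\leq \sum_{i=0}^{k}(C-1)C^{i}=C^{k+1}-1<C^{k+1}$. Together with the uniqueness argument above, this establishes both halves of the claim simultaneously.
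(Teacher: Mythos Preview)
Your proposal is correct and follows essentially the same approach as the paper: expand $\alpha$ as $\sum_{i=1}^{k+1} C^{i-1}e_i$, bound each $|e_i|<C$ (with $C=2n^2+1$), derive $|\alpha|\leq C^{k+1}-1$ via the geometric series, and use uniqueness of base-$C$ representations with digits in $(-C,C)$ to obtain the iff. The only cosmetic differences are that you peel digits from the bottom whereas the paper peels from the top, and you use the slightly sharper bound $\binom{n}{2}$ in place of the paper's $n^2$ for the edge/pair counts; neither changes the argument.
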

		\begin{claimproof}
			By definition of $C(\cdot|\cdot)$, we have 
			\[ \alpha(s^1,\dots,s^t) = \left(\sum_{i=1}^{k} C^{i-1} \left( s_i - \sum_{j=1}^{t} s_i^j \right)\right) + C^{k+1} \left(B - m_a(s,M_a) - \sum_{j=1}^{t} \left( f_{ab_j}(s^j) - m_{ab_j}(s^j,M_a)\right) \right). \]
			I.e.,
			\[
			     \alpha(s^1,\dots,s^t)  = \sum_{i=1}^{k+1} C^{i-1} e_i
			\]
			with 
			\[
			    e_i = 
			    \begin{cases}
			        s_i - \sum_{j=1}^{t} s_i^j & \text{if } 1 \leq i \leq k \\
			        B - m_a(s,M_a) - \sum_{j=1}^{t} \left(f_{ab_j}(s^j) - m_{ab_j}(s^j,M_a)\right) & \text{if } i = k + 1
			    \end{cases}
			    .
			\]
			We claim that the absolute value of each $e_i$ is at most $C-1$.
			For every $i\in [k]$,  by definition, $s_i$ and $\sum_{j=1}^{t} s_i^j$ are at least $0$ and at most $\abs{V_a(i)}\leq n$.
			Hence, for each $i\in [k]$ the absolute value of $e_i$  is at most $n < C$.
			Both $B$ and $\sum_{j=1}^{t} f_{ab_j}(s^j)$ are upper bounded by $\abs{E(G[V_a])}\leq n^2$.
			Moreover, from the definition of the functions $m_a,m_{ab_1},\dots,m_{ab_t}$, we deduce that both $m_a(s,M_a)$ and $\sum_{j=1}^{t} m_{ab_j}(s^j,M_a)$ are upper bounded by $\abs{V_a}^2 \leq n^2$.
			It follows that the absolute value of $e_{k+1}$ is at most $2n^2 < C$.
			Thus, the absolute value of $\alpha(s^1,\dots,s^t) $ is at most $\sum_{i=1}^{k+1} C^{i-1} e_i  \leq \sum_{i=1}^{k+1} C^{i-1} (C - 1) = C^{k+1} - 1$.
   
			It remains to prove that that $\alpha(s^1,\dots,s^t)=0$ iff $e_j=0$ for every $j\in [k+1]$.
            One direction is trivial.
            For the other direction, observe that if $e_{k+1}\neq 0$, then the absolute value of $C^k e_{k+1}$ is at least $C^k$.
            But the absolute value of $\alpha(s^1,\dots,s^t) -C^k e_{k+1} = \sum_{i=1}^{k} C^{i-1} e_i $ is at most $\sum_{i=1}^k C^{i-1} (C-1) = C^k-1$.
            Hence, if $e_{k+1}\neq 0$, then $\alpha(s^1,\dots,s^t)\neq 0$.
            By induction, it follows that $\alpha(s^1,\dots,s^t)=0$ is equivalent to $e_i=0$ for every $i\in [k+1]$.
		\end{claimproof}
		
		By using \Cref{eq:poly} on $P_{a,s}(B,p)$ and interchanging the sums, we deduce that
		\[ 	P_{a,s}(B,p)= \sum_{s^1,\dots,s^t\in \cS(b_1,\dots,b_t)} \left(\sum_{x=1}^{p-1} x^{\alpha(s^1,\dots,s^t)}\right). \]
		It was proven in the proof of Lemma 1 in \cite{Kane10} that
		\[ \sum_{x=1}^{p-1} x^\ell\, \pmod p = 
		\begin{cases}
			-1 \text{ if } \ell \equiv 0\, \pmod{p-1}\\
			0 \text{ otherwise.} 
		\end{cases} \]
		We infer from the above formula that 
		\[ P_{a,s}(B,p) \pmod p = \sum_{\substack{s^1,\dots,s^t\in \cS(ab_1,\dots,ab_t) \\ \alpha(s^1,\dots,s^t)\equiv 0 \pmod{p - 1}}} -1 \pmod p.  \]
		Observe that, for every $(s^1,\dots,s^t)\in \cS(ab_1,\dots,ab_t)$, we have $\alpha(s^1,\dots,s^t)\equiv 0 \pmod{p - 1}$ iff $\alpha(s^1,\dots,s^t)= 0$ because $C^{k+1} < p - 1$ and the absolute value of $\alpha(s^1,\dots,s^t)$ is at most $C^{k+1}$ by \Cref{claim:cancel}.
		From the equivalence given by \Cref{claim:cancel}, we deduce that there are $A(s,B)$ tuples $(s^1,\dots,s^t)\in \cS(ab_1,\dots,ab_t)$ such that $\alpha(s^1,\dots,s^t)= 0$, i.e.,
		\[
		    P_{a,s}(B,p) \pmod p = -A(s,B) \pmod p.
		\]
	\end{proof}
 \fi
	\fi
	With this, we can prove \ifshort Theorem~\ref{thm:maxcut}.\fi \iflong Theorem~\ref{thm:maxcut} via \Cref{algo:maxcut}. As a subroutine, we use the function \textsf{NextPrime}($p$), which computes the smallest prime larger than $p$.\fi

 \iflong
	\begin{algorithm}[bth]
		\SetAlgoLined
		\DontPrintSemicolon
		\KwIn{A internal node $a$ of $T$ and $s\in \cS(a)$.}
		\KwOut{$f_a(s)$}
		\For{$B=\abs{E(G[V_a])}$ to $0$}
		{
			$c\defeq 0$\;
			$p\defeq\textsf{NextPrime}(C^{k+1})$\;
			\While{$c\leq nk\log( n)$}
			{
				\lIf{$P_{a,s}(B,p)\not\equiv 0 $ (mod $p$)}
				{\Return $B$}
				$c\defeq c + \lfloor\log(p)\rfloor$\;
				$p\defeq \textsf{NextPrime}(p)$\;	
			}
		}
		\caption{Algorithm for computing $f_x(s)$.}
		\label{algo:maxcut}
	\end{algorithm}
	
	\begin{longlemma}\label{lem:maxcut:algo}
		Let $s\in \cS(a)$.
		\Cref{algo:maxcut} computes $f_a(s)$ in space $\Oh(k\log(n))$ and time $n^{\Oh(k)}$ with $n^{\Oh(k)}$ oracle access to the functions $f_{ab_1},\dots,f_{ab_t}$.
	\end{longlemma}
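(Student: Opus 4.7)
My plan is to show that Algorithm~\ref{algo:maxcut} returns the largest $B \in \{0,\ldots,|E(G[V_a])|\}$ for which $A(s,B)>0$. By \cref{lem:maxcut:recursion:fa(s)} and the definition of $A(s,B)$, this value is precisely $f_a(s)$. Fixing $B$, I would invoke \cref{lem:maxcut:poly}, which asserts $P_{a,s}(B,p)\equiv -A(s,B)\pmod{p}$ for every prime $p>C^{k+1}$; hence the test in the inner while-loop succeeds if and only if $p$ does not divide $A(s,B)$. If $A(s,B)=0$ the while-loop exits without returning. If $A(s,B)>0$, the argument reduces to showing that at least one tested prime does not divide $A(s,B)$. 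For this, observe that the while-loop accumulates primes until $c=\sum\lfloor\log p\rfloor$ exceeds $nk\log n$, so the product of tested primes is at least $2^{nk\log n}$. On the other hand, $A(s,B)\le \prod_{j=1}^{t}|\cS(ab_j)|\le \prod_{j=1}^{t}(|V_{ab_j}|+1)^k\le 2^{nk}$ (using $\sum_{j}|V_{ab_j}|\le n$ and discarding empty children), which is strictly less than the product of the tested primes. Hence some tested prime fails to divide $A(s,B)$ and the algorithm returns $B$. Since the outer \texttt{for}-loop proceeds from large $B$ downward, the first returned value is the largest $B$ with $A(s,B)>0$, namely $f_a(s)$.

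For the space bound, I would check each variable separately. The counter $B\in\{0,\ldots,n^2\}$ takes $\Oh(\log n)$ bits; $c\le 2nk\log n$ also uses $\Oh(\log n)$ bits; and the prime $p$ remains of magnitude $n^{\Oh(k)}$ throughout, since it starts at the next prime above $C^{k+1}=n^{\Oh(k)}$ and is advanced at most $\Oh(n)$ times (each step contributes $\Omega(k\log n)$ to $c$), so by standard prime-gap bounds it stays $n^{\Oh(k)}$; hence $p$ fits in $\Oh(k\log n)$ bits. The evaluation of $P_{a,s}(B,p)\bmod p$ is organised ``from the inside out'': the outer sum over $x\in\{1,\ldots,p-1\}$, the middle product over $j\in[t]$, and the inner sums over $s^j\in\cS(ab_j)$ are each computed by maintaining only a single running partial value at the corresponding level. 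All such intermediate values are stored modulo $p$ and therefore take $\Oh(k\log n)$ bits, and the modular exponentiation required for $x^{C(\cdot)}\bmod p$ likewise uses $\Oh(k\log n)$ space. Hence the total space is $\Oh(k\log n)$.

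For time and oracle complexity, the outer \texttt{for}-loop iterates $\Oh(n^2)$ times, the while-loop runs at most $\Oh(n)$ times per $B$, and computing $P_{a,s}(B,p)\bmod p$ enumerates $p-1=n^{\Oh(k)}$ values of $x$; for each $x$ we iterate through $t\le n$ children, each contributing $|\cS(ab_j)|\le n^{\Oh(k)}$ terms that require one oracle call to $f_{ab_j}$ and $\operatorname{poly}(\log p)$ arithmetic cost. The overall number of oracle accesses as well as of arithmetic operations is $n^{\Oh(k)}$.

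The main obstacle I anticipate is the quantitative sub-step in the correctness argument: to guarantee that the inner while-loop finds a non-dividing prime whenever $A(s,B)>0$, one needs a sharp upper bound on $A(s,B)$ (here $2^{nk}$) paired with a sufficient bit-budget (here $nk\log n$) for the enumerated primes. Once this is in place, the remaining analysis is direct complexity accounting on the nested loops in the evaluation of $P_{a,s}(B,p)$.
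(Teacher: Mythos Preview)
Your proposal is correct and follows essentially the same approach as the paper: establish that the algorithm returns the largest $B$ with $A(s,B)\neq 0$ by comparing the product of tested primes against an upper bound on $A(s,B)$, then account for space and time by bounding the magnitude of $p$ and organising the evaluation of $P_{a,s}(B,p)$ with constant-many running partial values modulo $p$. Your bound $A(s,B)\le 2^{nk}$ (via $\sum_j\log_2(|V_{ab_j}|+1)\le \sum_j |V_{ab_j}|\le n$) is slightly sharper than the paper's $n^{nk}$, and your route to $p=n^{\Oh(k)}$ (at most $\Oh(n)$ advances, each contributing $\Omega(k\log n)$ to $c$, plus prime density near $C^{k+1}$) differs cosmetically from the paper's direct count of primes in $[C^{k+1},(C^{k+1}+nk\log n)^{\Oh(1)}]$; both arguments are valid and yield the same conclusion.
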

	\begin{proof}
		The correctness of \Cref{algo:maxcut} follows from the following claims.
		Let $B$ be an integer between 0 and $\abs{E(G[V_a])}$, and let $A(s,B)$ be the integer defined in \Cref{lem:maxcut:poly}.
		
		\begin{longclaim}\label{claim:A(s)notzero}
			If the algorithm returns $B$, then $A(s,B)\neq 0$.
		\end{longclaim}
		\begin{claimproof} 
			Suppose there exists a prime number $p>C^{k+1}$ such that $P_{a,s}(B,p)\not\equiv 0$ (mod $p$).
			As $P_{a,s}(B,p)\equiv A(s,B)$ (mod $p$)	by \Cref{lem:maxcut:poly}, we have $A(s,B)\neq 0$ and thus there exists $(s^1,\dots,s^t)\in \cS(ab_1,\dots,ab_t)$ such that $s=s^1+\dots + s^t$ and 
			$B - m_a(s,M_a) = \sum_{i=1}^{t}  f_{ab_i}(s^i) - m_{ab_i}(s^i,M_a).$
			From \Cref{lem:maxcut:recursion:anyset}, we deduce that there exists $X\subseteq V_x$ such that $\signature(X)=s^1+\dots+s^t=s$ and $\abs{E(X,V_x\setminus X)}=B$.			
		\end{claimproof}
		
		\begin{longclaim}\label{claim:A(s)zero}
			If $P_{a,s}(B,p)\equiv 0$ (mod $p$) for every value taken by the variable $p$, then $A(s,B)=0$.
		\end{longclaim}
		\begin{claimproof}
			Let $d$ be the product of the values taken by $p$.
			Then $d$ is a product of distinct primes $p$ such that $P_{a,s}(B,p)\equiv 0$ (mod $p$).
			By \Cref{lem:maxcut:poly}, we have $P_{a,s}(B,p)\equiv A(s,B)$ (mod $p$) for every prime $p>C^{k+1}$.
			Therefore, $A(s,B)$ is a multiple of $d$. 
			Observe that $d>2^{c}$ and $c> nk\log(n)$.
			Hence, we have $d> n^{nk}$.
			Since $A(s,B)$ corresponds to the number of tuples $(s^1,\dots,s^t)\in \cS(ab_1,\dots,ab_t)$ that satisfy some properties, we have $A(s,B) \leq \prod_{i=1}^{t}\abs{\cS(ab_i)}\leq n^{nk}$.
			As $d$ divides $A(s,B)$ and $d>A(s,B)$, we conclude that $A(s,B)=0$.
		\end{claimproof}
		
		From Claims \ref{claim:A(s)notzero} and \ref{claim:A(s)zero}, we infer that Algorithm \ref{algo:maxcut} returns $B$ where $B$ is the maximum between $0$ and $\abs{E(G[V_a])}$ such that $A(s,B)\neq 0$.
		By definition of $A(s,B)$ and Lemma \ref{lem:maxcut:recursion:fa(s)}, we conclude that $f_a(s)=B$.
		
		\subparagraph{Complexity.}
		We adapt the arguments used in \cite{Kane10} to prove the complexity of our algorithm. 
		\begin{itemize}
			\item First, the variable $p$ is never more than $n^{\Oh(k)}$.
			Indeed, standard facts about prime numbers imply that there are $nk\log(n)$ prime numbers between $C^{k+1}$ and $(C^{k+1}+nk\log(n))^{\Oh(1)}=n^{\Oh(k)}$.
			Each of these primes causes $c$ to increase by at least 1.
			Thus, each value of $p$ can be encoded with $\Oh(k\log(n))$ bits.
			
			\item Secondly, observe that we can compute $P_{a,s}(p,B) \pmod p$ in space $\Oh(k\log(n))$.
			Recall that
			\[ P_{a,s}(B,p) \defeq \sum_{x=1}^{p-1} x^{C(s | B - m_a(s,M_a))}
			\left( \prod_{i=1}^{t} \left(\sum_{s^i\in \cS(b_i)}  x^{-C(s^i | f_{ab_i}(s^i) - m_{ab_i}(s^i,M_a))}\right)\right).\]
			To compute $P_{a,s}(B,p)$, it is sufficient to keep track of the current value of $x$, the current running total (modulo $p$) and enough information to compute the next term, i.e. $x^{C(s | B - m_a(s,M_a))}$ or $x^{-C(s^i | f_{ab_i}(s^i) - m_{ab_i}(s^i,M_a))}$.
			For that, we need only the current values of $i$ (at most $\log n$ bits) and $s^i$ (at most $k \log n$ bits) and the current running total to compute $C(s | B - m_a(s,M_a))$ (or $C(s^i | f_{b_i}(s^i) - m_{b_i}(s^i,M_a)$) modulo $p$. 
			
			\item Finally, primality testing of numbers between $C^{k+1}$ and $n^{\Oh(k)}$ can be done in space $\Oh(k\log(n))$ via $n^{\Oh(k)}$ divisions, and thus each call to $\textsf{NextPrime}(\cdot)$ can be computed in $n^{\Oh(k)}$ time and $\Oh(k\log(n))$ space. \qedhere
		\end{itemize}
	\end{proof}

We are now ready to prove that one can solve \textsc{Max-Cut}  in time $n^{\Oh(dk)}$ using $\Oh(dk\log(n))$ space. 

    \spaceMC*
	\begin{proof}
		Given $r$ the root of $T$, we solve \textsc{Max-Cut} by computing $\max_{s\in \cS(r)}f_r(s)$.
		For every internal node of $a$ of $T$ with children $b_1,\dots,b_t$, we use \Cref{algo:maxcut} to compute each call of $f_a$ from calls to $f_{ab_1},\dots,f_{ab_t}$. For every internal node $a$ with child $b$, we use \Cref{obs:maxcut:fab} to compute each call of $f_{ab}$ from calls to $f_{b}$.
        Finally, for every leaf $\ell$ of $T$, we simply have $f_\ell(s)=0$ for every $s\in \cS(\ell)$ because $V_\ell$ is a singleton.
		
		First, we prove the running time.
		By \Cref{lem:maxcut:algo}, for each node $a$ with children $b_1,\dots,b_t$ and $s\in \cS(a)$, we compute $f_a(s)$ by calling at most $n^{\Oh(k)}$ times the functions $f_{ab_1},\dots,f_{ab_t}$.
        By \Cref{obs:maxcut:fab}, for each node $b$ with parent $a$ and $s\in\cS(ab)$, we compute $f_{ab}(s)$ by calling at most $n^{\Oh(k)}$ times the function $f_b$.
		Consequently, we call each of these functions at most $n^{\Oh(dk)}$ times in total.
		Since $T$ has $\Oh(n)$ nodes, we conclude that computing $\max_{s\in \cS(r)}f_r(s)$ this way takes $n^{\Oh(dk)}$ time.
		
		Finally, observe that the stack storing the calls to these functions is of size at most $\Oh(d)$.
		Our algorithm solves \textsc{Max Cut} in space $\Oh(dk\log(n))$.
	\end{proof}
	\fi

\iflong
\subsection{Dominating Set}\label{subsec:dominating-set}
\fi
\ifshort
\medskip
\noindent
\textbf{Dominating Set.}\quad
\fi
\iflong
In this section we prove \cref{thm:domset}, which we recall for convenience.

\spaceDS*

The remainder of this section is devoted to the proof of \cref{thm:domset}.
\fi
\ifshort
We now prove Theorem~\ref{thm:domset}.
\fi
Note that \textsc{Dominating Set} cannot be directly stated in terms of $H$-homomorphisms for roughly the following reason.
For $H$-homomorphisms, the constraints are \emph{universal}: every neighbor of a vertex with a certain state must have one of allowed states.
For \textsc{Dominating Set}, there is an \emph{existential} constraint: a vertex in state ``dominated'' must have at least one neighbor in the dominating set.
Also, the state of a vertex might change from ``undominated'' to ``dominated'' during the algorithm.
The techniques we used for $H$-homomorphisms cannot capture such properties.

The problem occurs for other parameters as well.
One approach that circumvents the issue is informally called {\em{inclusion-exclusion branching}}, and was used by Pilipczuk and Wrochna~\cite{PilipczukW18} in the context of {\sc{Dominating Set}} on graphs of low treedepth.
Their dynamic programming uses the states \emph{Taken} (i.e., in a dominating set), \emph{Allowed} (i.e., possibly dominated), and \emph{Forbidden} (i.e., not dominated).
These states reflect that we are interested in vertex partitions into three groups such that there are no edges between \emph{Taken} vertices and \emph{Forbidden} vertices; these are constraints that can be modelled using $H$-homomorphisms for a three-vertex pattern graph $H$.
Crucially, for a single vertex $v$, if we fix the states of the remaining vertices, the number of partitions in which $v$ is dominated is given by the number of partitions where $v$ is possibly dominated minus the number of partitions where it is not dominated, i.e., informally ``\emph{Dominated} = \emph{Allowed} - \emph{Forbidden}''.
\iflong
We will come back to this state transformation later to provide more details. We also remark that the transformed formulation of dynamic programming is exactly what one gets by applying the zeta-transform to the standard dynamic programming for {\sc{Dominating Set}}.
\fi

For technical reasons explained later, our algorithm uses the classic Isolation Lemma:

\begin{theorem}[Isolation lemma, \cite{MulmuleyVV87}]\label{thm:isolation-lemma}
    Let $\mathcal{F} \subseteq 2^{[n]}$ be a non-empty set family over the universe $[n]$.  For each $i \in [n]$, choose a weight $\omega(i) \in [2n]$ uniformly and independently at random. Then with probability at least $1/2$ there exists a unique set of minimum weight in $\mathcal{F}$.
\end{theorem}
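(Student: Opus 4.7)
The plan is to reduce \textsc{Dominating Set} to parity-counting of weighted list-$H$-homomorphisms and then invoke Theorem~\ref{thm:counting-weighted-list-homomorphisms} together with the Isolation Lemma (Theorem~\ref{thm:isolation-lemma}). The starting point is the Hegerfeld--Kratsch three-state inclusion--exclusion trick: for any $T \subseteq V(G)$, expanding
\[
[T\text{ is dominating}] \;=\; \prod_{v \notin T}\bigl(1-\mathbf{1}[v \notin N(T)]\bigr)
\]
and summing over $T$ yields $\sum_{T,F \colon T\cap F=\emptyset,\, E(T,F)=\emptyset}(-1)^{|F|}$. Working modulo $2$ every $(-1)^{|F|}$ collapses to $1$, so the parity of the number of dominating sets equals the parity of ordered triples $(T,A,F)$ that partition $V(G)$ with no edge between $T$ and $F$. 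Such triples are exactly homomorphisms of $G$ into the three-vertex pattern graph $H$ on $\{T,A,F\}$ carrying all loops and all pairs as edges except $TF$.

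To recover the minimum size, I would first sample weights $\omega(v) \in [2n]$ independently and uniformly at random for every $v \in V(G)$; storing $\omega$ costs $\Oh(n\log n)$ bits. Then, taking $R = \{T\}$, I invoke Theorem~\ref{thm:counting-weighted-list-homomorphisms} to compute, for every pair $(C,W) \in [n]\times [2n^2]$, the parity of the number of $R$-weighted list-$H$-homomorphisms of cardinality $C$ and weight $W$; by the reformulation above this is the parity of the number of dominating sets of size $C$ and weight $W$. Since $W^*\le 2n$ gives $\log W^*=\Oh(\log n)$, each invocation runs in $2^{\Oh(dk)}\cdot \polyn$ time and uses $\Oh(dk^2\log n)$ extra space. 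The algorithm loops through $C = 1, 2, \dots$ and outputs the smallest $C$ for which some $W$ yields an odd parity.

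Correctness follows from Theorem~\ref{thm:isolation-lemma} applied to the (non-empty) family $\mathcal{F}^\ast$ of dominating sets of the true minimum size $C^\ast$: with probability at least $1/2$ the family $\mathcal{F}^\ast$ has a unique minimum-weight element $T^\ast$, so the count at $(C^\ast, \omega(T^\ast))$ is odd. For every $C<C^\ast$ the family of dominating sets of size $C$ is empty, so all parities are $0$, and the algorithm reports $C^\ast$. The total running time is $\Oh(n^3)$ invocations of the subroutine, still $2^{\Oh(dk)}\cdot\polyn$, and the total space is $\Oh(n\log n)$ for $\omega$ plus $\Oh(dk^2\log n)$ for the subroutine plus $\Oh(\log n)$ for the outer counters, matching the claimed bound.

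The main conceptual move is the inclusion--exclusion step that turns the existential constraint ``every vertex outside $T$ has a neighbor in $T$'' into the universal constraint ``no $TF$-edge'', which is exactly the kind of constraint captured by $H$-homomorphisms; this is precisely what makes the space-efficient framework of Theorem~\ref{thm:counting-weighted-list-homomorphisms} applicable. The only price is that the identity only survives modulo $2$, which forces both the randomisation via the Isolation Lemma and the $\Oh(n \log n)$ storage of the weight vector $\omega$.
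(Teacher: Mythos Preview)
You are not proving the stated theorem. The statement is the Isolation Lemma of Mulmuley--Vazirani--Vazirani, a purely probabilistic fact about random weight functions: if weights are drawn uniformly and independently from $[2n]$, then with probability at least $1/2$ a non-empty family $\mathcal{F}\subseteq 2^{[n]}$ has a unique minimum-weight member. Your proposal makes no attempt to establish this; instead you \emph{use} the Isolation Lemma as a black box to sketch a proof of Theorem~\ref{thm:domset} (the space-efficient algorithm for \textsc{Dominating Set}). That is a different theorem.

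The paper does not prove the Isolation Lemma either---it is simply cited from~\cite{MulmuleyVV87}---so there is no ``paper's own proof'' to compare against. For the record, the standard argument runs as follows: call an element $i\in[n]$ \emph{ambiguous} if there exist minimum-weight sets $S,S'\in\mathcal{F}$ with $i\in S\setminus S'$. Conditioned on all weights $\omega(j)$ for $j\neq i$, there is at most one value of $\omega(i)$ that can make $i$ ambiguous (the threshold at which including $i$ versus excluding it ties), so $\Pr[i\text{ ambiguous}]\le 1/(2n)$. A union bound over $i\in[n]$ gives probability at most $1/2$ that some element is ambiguous, and non-uniqueness of the minimum forces at least one ambiguous element.

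As a sketch of Theorem~\ref{thm:domset}, what you wrote is essentially the paper's own argument and is correct; but it is not what was asked.
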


Consequently,
we pick a weight function $\omega$ that assigns every vertex a weight from $1, \dots, 2n$ uniformly and independently at random. Storing $\omega$ takes $\mathcal{O}(n \log n)$ space. 
\iflong
By \cref{thm:isolation-lemma}, with probability at least $1/2$ among dominating sets with the smallest possible cardinality there will be a unique one of minimum possible weight.
\fi
\ifshort
The remainder of the algorithm uses only $\mathcal{O}(dk^2 \log n)$ space.
\fi

To implement the above idea, we let the graph $H$ have vertex set $\{\mathbf{T}, \mathbf{A}, \mathbf{F} \}$
standing for \emph{Taken}, \emph{Allowed}, and \emph{Forbidden}.
This graph~$H$ has a loop at each vertex as well as the edges $\mathbf{TA}$ and $\mathbf{AF}$. Further, let $R \coloneqq  \{\mathbf{T}\}$.
Following our approach for $H$-homomorphisms,
for every set $S \subseteq \mathbf{States}$ with $\textbf{States} \coloneqq \{(\mathbf T, 1), (\mathbf F, 1), \dots, (\mathbf T, k), (\mathbf F, k)\}$, every cardinality $c \in [n]_0$, and every weight $w \in [2 n^2]_0$, in time $2^{\mathcal{O}(d k)} \cdot \polyn$ and space $\mathcal{O}(dk^2 \log n)$ (recall that here for the maximum weight $W^*$ we have $W^* \leq 2n$) we can compute the value $a_{S, c, w}$ being the number of ordered partitions $(\widehat T, \widehat F, \widehat A)$ of $V(G)$ satisfying the following properties:
    \begin{enumerate}
        \item there are no edges between $\widehat T$ and $\widehat F$;
        \item $|\widehat T| = c$ and $\omega(\widehat T) = w$; and
        \item for every $i \in [k]$ and $ I \in \{T, F\}$, we have $(\mathbf I, i) \in S$ iff $\widehat I \cap V(i) \neq \emptyset$.
    \end{enumerate}
    Note that we do not care whether vertices of some label $i$ are mapped to $A$ or not. 
        
    After that, we aim to obtain the number of dominating sets of cardinality $c$ and weight~$w$ from values $a_{S, c, w}$. 
    For this we need to transform the ``states'' \emph{Allowed} and \emph{Forbidden} into \emph{Dominated}.
    Above we have explained how this transformation works if we know the state of a single vertex.
    However, now the set $S$ only captures for every label $i$, which states occur on the vertices of label $i$.
    First, the vertices of this label might be mapped to different vertices of $H$.
    And even if we take the partitions where all vertices of label $i$ are possibly dominated and subtract the partitions where all these vertices are not dominated, then we obtain the partitions where \emph{at least one vertex} with label $i$ is dominated.
    However, our goal is that \emph{all vertices} of label $i$ are dominated. 
    So
    the \emph{Dominated = Allowed - Forbidden} equality is not directly applicable here.
    
    Recently, Hegerfeld and Kratsch~\cite{HegerfeldK23} showed that when working with label sets, this equality is in some sense still true modulo $2$.
    On a high level, they show that if we fix a part~$\widehat T$ of a partition satisfying the above properties,  
    then any undominated vertex might be put to any of the sides $\widehat A$ and $\widehat F$.
    Thus, if $\widehat T$ is not a dominating set of $G$, then there is an even number of such partitions and they cancel out modulo $2$.

\ifshort
    We can apply the same transformation to obtain from $a_{S,c,w}$'s the number of dominating sets of size $c$ and weight $w$ modulo 2.
    Isolation lemma implies that with probability at least~$1/2$ for some $w$ this number if non-zero if a dominating set of size $c$ exists.
\fi
\iflong
    Now we follow their ideas to formalize this approach and conclude the construction of the algorithm.
    For $i \in [k]$ and $S \subseteq \{(\mathbf T, 1), (\mathbf F, 1), \dots, (\mathbf T, i), (\mathbf F, i)\}$ we define the value $D^w_i(S)$ as the number of ordered partitions $(\widehat T, \widehat F, \widehat X)$ of $V(G)$ with the following properties:
    \begin{enumerate}
        \item there are no edges between $\widehat T$ and $\widehat F$;
        \item $|\widehat T| = c$ and $\omega(\widehat T) = w$;
        \item for every $j \in [i]$ and $I \in \{T, F\}$, we have $(\mathbf I, j) \in S$ iff $\widehat I \cap V(j) \neq \emptyset$; and
        \item $(V({i+1}) \cup \dots \cup V(k)) \setminus \widehat T$ is dominated by $\widehat T$.
    \end{enumerate}
    
   The following observation is obvious.

   \begin{longclaim}\label{obs:D_ell_table}
        For every $S \subseteq \textbf{States}$, we have $D^{c, w}_k(S) = a_{S, c, w}$.
   \end{longclaim}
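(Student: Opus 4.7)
The plan is to verify that when $i = k$, the defining conditions of $D^{c,w}_k(S)$ collapse exactly to those of $a_{S,c,w}$, so the two quantities count the very same family of ordered partitions of $V(G)$. This is a pure unfolding of definitions; no combinatorial work is required.

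Concretely, I will first address condition~4 in the definition of $D^{c,w}_i(S)$, which demands that $(V(i+1) \cup \dots \cup V(k)) \setminus \widehat T$ be dominated by $\widehat T$. For $i = k$ this set is $\emptyset$, so the condition is vacuous and can be dropped. Next I will observe that condition~3 of $D^{c,w}_k(S)$ quantifies over all $j \in [k]$ and over $I \in \{T, F\}$, demanding $(\mathbf I, j) \in S$ iff $\widehat I \cap V(j) \neq \emptyset$; this is literally condition~3 in the definition of $a_{S,c,w}$ (after renaming the bound index). Conditions~1 and~2 (the absence of edges between $\widehat T$ and $\widehat F$, and the cardinality/weight constraints on $\widehat T$) appear verbatim in both definitions.

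The only cosmetic discrepancy is that the third part of the ordered partition is named $\widehat X$ in $D^{c,w}_k(S)$ and $\widehat A$ in $a_{S,c,w}$. Since in both definitions the underlying tuple is an \emph{ordered partition} of $V(G)$, this third set is forced to equal $V(G) \setminus (\widehat T \cup \widehat F)$ in either case, giving a trivial bijection between the two families. I therefore conclude $D^{c,w}_k(S) = a_{S,c,w}$, as claimed. I do not anticipate any obstacle here; the real work lies in the subsequent step, where the equality $D^{c,w}_0(\emptyset) = a_{S,c,w}$-aggregated quantities must be related modulo $2$ to the number of dominating sets of cardinality $c$ and weight $w$ via the Hegerfeld--Kratsch transformation.
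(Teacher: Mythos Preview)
Your proposal is correct and matches the paper's treatment: the paper simply declares the claim ``obvious'' without giving any argument, and your unfolding of the definitions (condition~4 becomes vacuous at $i=k$, conditions~1--3 coincide verbatim, and the name $\widehat X$ versus $\widehat A$ is immaterial) is exactly the routine verification one would expect.
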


   Next, we observe that it suffices to compute values $D^{c,w}_i(S)$ for $i=0$ and $S=\emptyset$.

    \begin{longclaim}
        $D^{c,w}_0(\emptyset)$ is the number of dominating sets of size $c$ and total weight $w$.
    \end{longclaim}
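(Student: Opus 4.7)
The plan is to exhibit a direct bijection between the ordered partitions counted by $D^{c,w}_0(\emptyset)$ and the dominating sets of $G$ of cardinality $c$ and weight~$w$. Unlike the subsequent transformation of Hegerfeld and Kratsch, which relies on a parity-based cancellation argument, this particular specialization turns out to match the desired quantity on the nose, because condition~3 in the definition of $D^{c,w}_i(S)$ collapses when $i=0$ and $S=\emptyset$.

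First, I would unfold the definition: $D^{c,w}_0(\emptyset)$ counts the ordered triples $(\widehat T, \widehat F, \widehat X)$ that form an ordered partition of $V(G)$ and that satisfy the following three properties, since property~3 is vacuous for $i=0$ (there are no indices $j \in [0]$):
(i) no edges of $G$ run between $\widehat T$ and $\widehat F$;
(ii) $|\widehat T| = c$ and $\omega(\widehat T) = w$; and
(iii) with $V(1) \cup \dots \cup V(k) = V(G)$, the set $V(G) \setminus \widehat T$ is dominated by $\widehat T$, meaning every vertex outside $\widehat T$ has a neighbor inside~$\widehat T$.

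Next, I would argue that property~(iii) forces $\widehat T$ to be a dominating set of $G$, and combined with~(i), forces $\widehat F = \emptyset$. Indeed, suppose for contradiction that $v \in \widehat F$. Since $\widehat T$, $\widehat F$, $\widehat X$ are pairwise disjoint, we have $v \notin \widehat T$, so by~(iii) there is a neighbor $u \in \widehat T$ of $v$. But then $uv$ is an edge between $\widehat T$ and $\widehat F$, contradicting~(i). Hence $\widehat F = \emptyset$ and necessarily $\widehat X = V(G) \setminus \widehat T$.

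Finally, I would set up the bijection. In the forward direction, the map $(\widehat T, \widehat F, \widehat X) \mapsto \widehat T$ sends every triple counted by $D^{c,w}_0(\emptyset)$ to a dominating set of $G$ of size $c$ and weight $w$, and by the above this map is injective since $(\widehat F, \widehat X)$ is determined by $\widehat T$. In the reverse direction, given any dominating set $D$ of $G$ with $|D|=c$ and $\omega(D)=w$, the triple $(D, \emptyset, V(G)\setminus D)$ is an ordered partition of $V(G)$ that trivially satisfies~(i) (because $\widehat F=\emptyset$), satisfies~(ii) by the choice of $D$, and satisfies~(iii) because $D$ is dominating. Thus the forward map is a bijection, and $D^{c,w}_0(\emptyset)$ equals the number of dominating sets of $G$ of size $c$ and total weight $w$, as claimed. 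No step here is a real obstacle; the only point that could go wrong would be misinterpreting the degenerate instantiation of condition~3, which is why I would spell out $i=0$ explicitly before invoking the bijection.
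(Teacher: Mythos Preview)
Your proposal is correct and follows essentially the same approach as the paper: both arguments unfold the definition of $D^{c,w}_0(\emptyset)$, use the domination condition to force $\widehat F=\emptyset$ via the no-edges constraint between $\widehat T$ and $\widehat F$, and then exhibit the obvious bijection with dominating sets of the right size and weight. The paper's proof is terser, but the logical content is identical.
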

    \begin{proof}
        Consider a partition $(\widehat T, \widehat F, \widehat X)$ counted in $D^{c,w}_0(\emptyset)$.
        Recall that $V(1) \cup \dots \cup V(k) = V(G)$.
        So the fourth property implies that $V(G) \setminus \widehat T$ is dominated by $\widehat T$, i.e., $\widehat T$ is a dominating set of $G$.
        The first property then implies that $\widehat F$ is empty and $\widehat X = V(G) \setminus \widehat T$.
        Finally, by definition of $D^{c,w}_0(\emptyset)$, we know that the size of $\widehat T$ is $c$ and its weight is $w$.
        On the other hand, every dominating set $\widehat T$ of cardinality $c$ and weight $w$ defines a partition $(\widehat T, \emptyset, V(G) \setminus \widehat T)$ counted in $D^{c,w}_0(\emptyset)$.
    \end{proof}

    Finally, we prove that modulo $2$, $D_i^{c,w}(S)$ can be computed from $D_{i+1}^{c,w}(S)$.

    \begin{longclaim}
    \label{claim:ds-inclusion-exclusion}
        For every $i \in [k - 1]_0$ and every $S \subseteq \{(\mathbf T, 1), (\mathbf F, 1), \dots, (\mathbf T, i), (\mathbf F, i)\}$, it holds that 
        \[
            D^{c,w}_i(S) \equiv \sum_{B \subseteq \{(\mathbf T, i+1), (\mathbf F, i+1)\}} D^{c,w}_{i+1}(S \cup B) \mod 2.
        \]
    \end{longclaim}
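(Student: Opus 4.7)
\textbf{Proof plan for Claim~\ref{claim:ds-inclusion-exclusion}.} The plan is a standard inclusion-exclusion/involution argument on the right-hand side, with fixed points being precisely the partitions counted by the left-hand side.

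First I would observe that each partition $(\widehat T, \widehat F, \widehat X)$ counted by $D^{c,w}_{i+1}(S \cup B)$ for some $B$ contributes to \emph{exactly one} term of the sum on the right-hand side, namely the one with $B = \{(\mathbf I, i+1) \mid I \in \{T,F\}, \widehat I \cap V(i+1) \neq \emptyset\}$. Thus the right-hand side counts, without multiplicity, all ordered triples $(\widehat T, \widehat F, \widehat X)$ satisfying properties~1,~2, the label-condition~3 restricted to $j \in [i]$, and the weaker domination condition~4 stating that $(V(i+2) \cup \dots \cup V(k)) \setminus \widehat T$ is dominated by $\widehat T$. Call such partitions \emph{admissible}; the partitions counted by $D^{c,w}_i(S)$ are exactly the admissible ones in which, additionally, every vertex of $V(i+1) \setminus \widehat T$ is dominated by $\widehat T$.

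Next I would define an involution $\iota$ on the set of admissible partitions whose fixed points are precisely the partitions counted by $D^{c,w}_i(S)$. Fix an arbitrary linear ordering on $V(G)$. For an admissible $(\widehat T, \widehat F, \widehat X)$, let $U$ be the set of vertices in $V(i+1) \setminus \widehat T$ that are not dominated by $\widehat T$; if $U = \emptyset$ the partition is a fixed point, otherwise let $v^\star$ be the minimum vertex of $U$ and define $\iota$ to toggle $v^\star$ between $\widehat F$ and $\widehat X$ (leaving $\widehat T$ unchanged). The key step is to verify that $\iota$ sends admissible partitions to admissible partitions: property~2 is unaffected since $\widehat T$ does not change; property~1 is preserved because $v^\star$ has no neighbor in $\widehat T$ (it is undominated), so moving it into or out of $\widehat F$ cannot create an edge between $\widehat T$ and $\widehat F$; property~3 for $j \in [i]$ is preserved since $v^\star \in V(i+1)$ is disjoint from $V(j)$ for $j \leq i$; and property~4 (in its weaker RHS-form) is preserved because it only constrains labels $j \geq i+2$. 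Moreover $\widehat T$, hence $U$, hence $v^\star$, is the same in $(\widehat T, \widehat F, \widehat X)$ and in $\iota(\widehat T, \widehat F, \widehat X)$, so $\iota$ is an involution.

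Finally, I would conclude by pairing up the non-fixed admissible partitions via $\iota$: they contribute in pairs to the right-hand side and therefore cancel modulo $2$. Only the fixed points survive, and these are exactly the partitions counted by $D^{c,w}_i(S)$. This yields $D^{c,w}_i(S) \equiv \sum_{B \subseteq \{(\mathbf T, i+1),(\mathbf F, i+1)\}} D^{c,w}_{i+1}(S \cup B) \pmod 2$, as required. The only step requiring care is checking that toggling $v^\star$ between $\widehat F$ and $\widehat X$ never violates the ``no edges between $\widehat T$ and $\widehat F$'' condition, which relies crucially on choosing $v^\star$ to be undominated by $\widehat T$; the rest is bookkeeping.
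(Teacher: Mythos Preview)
Your proposal is correct and follows essentially the same route as the paper: both arguments first observe that the right-hand side counts (without multiplicity) the ``admissible'' partitions, then show that those admissible partitions not lying in $D^{c,w}_i(S)$ cancel modulo~$2$ because undominated vertices of $V(i+1)$ can be freely moved between $\widehat F$ and $\widehat X$. The only cosmetic difference is that the paper groups such partitions into blocks of size $2^{|U|}$ (where $U = V(i+1)\setminus N[\widehat T]$), whereas you package the same cancellation as an explicit fixed-point-free involution toggling the least element of $U$; these are two standard presentations of the same parity argument.
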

    \begin{proof}
        We follow the proof idea of Hegerfeld and Kratsch.
        For $B \subseteq \{(\mathbf T, i+1), (\mathbf F, i+1)\}$, let $\AAA_{i+1}(S \cup B)$ be the set of partitions counted in $D^{c,w}_{i+1}(S \cup B)$ (see the definition above).
        Note that we have $\AAA_{i+1}(S \cup B_1) \cap \AAA_{i+1}(S \cup B_2) = \emptyset$ for any $B_1 \neq B_2 \subseteq \{(\mathbf T, i+1), (\mathbf F, i+1)\}$.
        So 
        \[
            \sum\limits_{B \subseteq \{(\mathbf T, i+1), (\mathbf F , i+1)\}} D^w_{i+1}(S \cup B) = 
            \left|\bigcup\limits_{B \subseteq \{(\mathbf T, i+1), (\mathbf F, i+1)\}} \AAA_{i+1}(S \cup B)\right|.
        \]
        Let $\mathcal{L}$ be the set of partitions counted in $D^{c,w}_i(S)$ and let
        $\mathcal{R} = \cup_{B \subseteq \{(\mathbf T, i+1), (\mathbf F, i+1)\}} \AAA_{i+1}(S \cup B)$.
        The goal is to prove $|\mathcal{L}| \equiv |\mathcal{R}| \bmod{2}$.

        By definition of these values we have $\mathcal{L} \subseteq \mathcal{R}$. 
        We claim that the size of $\mathcal{R} \setminus \mathcal{L}$ is even.
        To see this, consider some fixed partition $(\widehat T, \widehat F, \widehat X) \in \mathcal{R} \setminus \mathcal{L}$.
        This is exactly the case if the following properties hold:
        \begin{enumerate}
            \item there are no edges between $\widehat T$ and $\widehat F$;
            \item $|\widehat T| = c$ and $\omega(\widehat T) = w$;
            \item for every $j \in [i]$ and $I \in \{T, F\}$, we have $(I, j) \in S$ iff $\widehat I \cap V(j) \neq \emptyset$; and
            \item the set $(V({i+2}) \cup \dots \cup V(k)) \setminus \widehat T$ is dominated by $\widehat T$ while the set $(V({i+1}) \cup \dots \cup V(k)) \setminus \widehat T$ is not dominated by $\widehat T$,
        \end{enumerate}
        Let $U = V(i+1) \setminus N[\widehat T]$.
        The last property implies that $U$ is non-empty.
        Also let $X' = \widehat X \setminus V(i+1)$ and $F' = \widehat F \setminus V({i+1})$.
        Observe that $N(\widehat T) \cap V({i+1}) \subseteq \widehat X$ due to the first property.
        We claim that if we fix the first set $\widehat T$ of the partition as well as the partition of $V \setminus V({i+1})$ (by fixing $X'$ and $F'$), then the extensions of $(\widehat T, F', X')$ to a partition in $\mathcal{R} \setminus \mathcal{L}$ are exactly the partitions of form
        \begin{equation}\label{eq:ds-extensions}
            \bigl(\widehat T, F' \cup (U \setminus U'), X' \cup (N(\widehat T) \cap V({i+1})) \cup U'\bigr)
        \end{equation}
        for $U' \subseteq U$.
        So informally speaking, if we fix $\widehat T, X', F'$, every vertex of $U$ can be put to either $\widehat X$ or $\widehat F$ thus giving rise to an even number $2^{|U|}$ of such extensions.

        Now we prove this claim following the idea of Hegerfeld and Kratsch.
        First, consider a partition of form \eqref{eq:ds-extensions} for an arbitrary $U' \subseteq U$.
        Since $\widehat T$ is fixed and the partition on $V \setminus V({i+1})$ is fixed as well, the last three properties defining $\mathcal{R} \setminus \mathcal{L}$ trivially hold.
        Next, due to $F' \subseteq \widehat F$, 
        there are no edges between $\widehat T$ and $F'$. 
        And since $U \setminus U' \subseteq U$ is not dominated by $\widehat T$, there are no edges between $\widehat T$ and $U \setminus U'$ as well, so the first property holds too.

        For the other direction, if we consider an extension $(\widehat T, \widetilde F, \widetilde X) \in \mathcal{R} \setminus \mathcal{L}$ of $(\widehat T, F', X')$, then by the first property we know that $\widetilde F \cap V(i+1)$ has no edges to $\widehat T$ and hence, it is a subset of $U$.

        So, for any fixed $(\widehat T, F', X')$, either there is no extension to a partition from $\mathcal{R} \setminus \mathcal{L}$ at all or there are $2^{|U|}$ of them where $U$ is a non-empty set.
        So the size of $\mathcal{R} \setminus \mathcal{L}$ is even and this concludes the proof.
     \end{proof}

     The application of \cref{claim:ds-inclusion-exclusion} for $i = 0, \dots, k-1$ implies \begin{align*}
        D^{c,w}_0(\emptyset) \equiv &\sum_{B_1 \subseteq \{(\mathbf T, 1), (\mathbf F, 1)\}} D^{c,w}_1(B_1) \equiv \\
        & \sum_{B_1 \subseteq \{(\mathbf T, 1), (\mathbf F, 1)\}} \sum_{B_2 \subseteq \{(\mathbf T, 2), (\mathbf F, 2)\}} D^{c,w}_2(B_1 \cup B_2) \equiv \\
        & \dots \\
        & \sum_{B_1 \subseteq \{(\mathbf T, 1), (\mathbf F, 1)\}} \sum_{B_2 \subseteq \{(\mathbf T, 2), (\mathbf F, 2)\}} \dots \sum_{B_k \subseteq \{(\mathbf T, k), (\mathbf F, k)\}} D^{c,w}_k(B_1 \cup B_2 \dots \cup B_k) \equiv \\
        & \sum\limits_{S \subseteq \{(\mathbf T, 1), (\mathbf F, 1), \dots, (\mathbf T, k), (\mathbf F, k)\}} D^w_k(S) \mod{2}.
     \end{align*}
     By \cref{obs:D_ell_table}, the parity of the number of dominating sets of size $c$ and weight $w$ can be expressed as
     \[
        D^{c,w}_0(\emptyset) \equiv \sum_{S \subseteq \{(\mathbf T, 1), (\mathbf F, 1), \dots, (\mathbf T, k), (\mathbf F, k)\}} a_{S, c, w} \mod{2}.
     \]
     Recall that every $a_{S, c, w}$ can be computed in time $2^{\mathcal{O}(d k)} \cdot \polyn$ and space $\mathcal{O}(dk^2 \log n)$, hence this is also the case for their sum modulo 2.
     We compute the value $D^{c, w}_0(\emptyset)$ for all cardinalities $c \in [n]_0$ and all weights $w \in [2n^2]_0$ and output the smallest value $c$ such that for some $w$ the value $D^{c, w}_0(\emptyset)$ is non-zero (or it outputs $n$ if no such value exists).
     
     Now we argue the correctness of our algorithm.
     Let $C$ denote the size of the smallest dominating set of $G$.
     First, this implies that for any $c < C$ and any $w \in [2n^2]_0$, the value $D^{c, w}_0(\emptyset)$ is zero.
     And second, Isolation Lemma (\cref{thm:isolation-lemma}) implies that with probability at least $1/2$, the weight function $\omega$ isolates the family of dominating sets of $G$ of size $C$, i.e., there exists a weight $W$ such that there is exactly one dominating set of size $C$ and weight~$W$, and therefore $D^{c,w}_0(\emptyset) = 1$.
     In this case, the algorithm outputs $C$.
     So with probability at least $1/2$ our algorithm outputs the minimum size of a dominating set of $G$.
     
     The iteration over all $c$ and $w$ increases the space complexity by an additive $\mathcal{O}(\log n)$ and it increases the running time by a factor of $\mathcal{O}(n^2)$.
     Recall that in the beginning, to sample the weight function we have used space $\mathcal{O}(n \log n)$.
     So all in all, the running time of the algorithm is $2^{\Oh(dk)}\cdot n^{\Oh(1)}$ and the space complexity is $\Oh(dk^2\log n + n\log n)$. This concludes the proof of \cref{thm:domset}.

     \medskip
     Note that in our algorithm, the only reason for super-logarithmic dependency on $n$ in the space complexity is the need to sample and store a weight function in order to isolate a minimum-weight dominating set. We conjecture that this can be avoided and ask:

\fi
    \begin{question}
    Is there an algorithm for \textsc{Dominating Set} of $n$-vertex graphs provided with a $(d, k)$-tree-model that runs in time $2^{\mathcal{O}(k d)} \cdot \polyn$ and uses  $(d+k)^{\Oh(1)} \log n$ space?
    \end{question}

\section{The Lower Bound}\label{sec:lb}
    \def\Inf{\mathsf{Inf}}
    \def\Match{\mathsf{Match}}
    \def\sfM{\mathsf{M}}
    \def\sfS{\mathsf{S}}
    \def\IJ{\mathcal{M}}
    \def\obj{\mathsf{goal}}
    
    In this section, we prove \Cref{thm:lcs-lb}.
    This lower bound is based on a reasonable conjecture on the complexity of the problem \textsc{Longest Common Subsequence (LCS)}.
    
    An instance of \textsc{LCS} is a tuple $(N,t,\Sigma,s_1,\dots,s_r)$ where $N$ and $t$ are positive integers, $\Sigma$ is an alphabet and  $s_1,\dots,s_r$ are $r$ strings over $\Sigma$ of length $N$. 
    The goal is to decide whether there exists a string $s \in \Sigma^t$ of length $t$ appearing as a subsequence in each $s_i$.
    There is a standard dynamic programming algorithm for \textsc{LCS} that has time and space complexity~$\Oh(N^{r})$.
    \iflong From the point of view of parameterized	complexity, \textsc{LCS} is 
    $\mathsf{W}[p]$-hard for every level $p$ when parameterized by $r$~\cite{BodlaenderDFW95}. It remains $W[1]$-hard when the size of the alphabet is constant~\cite{Pietrzak03}, and it is $\mathsf{W}[1]$-complete when parameterized by
    $r+t$~\cite{Guillemot11}.\fi
    Abboud et al.~\cite{AbboudBW15} proved that the existence of an algorithm with running time $\Oh(N^{r-\varepsilon})$ for any $\varepsilon > 0$ would contradict the Strong Exponential-Time Hypothesis. As observed by Elberfeld et al.~\cite{ElberfeldST15}, \textsc{LCS} parameterized by $r$ is complete for the class $\mathsf{XNLP}$: parameterized problems solvable by a nondeterministic Turing machine using $f(k)\cdot n^{\Oh(1)}$ time and $f(k)\log n$ space, for a computable function $f$. \iflong See also~\cite{BodlaenderDFW95,BodlaenderGJJL22,BodlaenderGJPP22,BodlaenderGNS21,BodlaenderCP22} for further research on $\mathsf{XNLP}$ and related complexity classes.\fi
    The only known progress on the space complexity is due to Barsky et al.\ with an algorithm running in $\Oh(N^{r-1})$ space \cite{BarskySTU07}.
    This motivated Pilipczuk and Wrochna to formulate the following conjecture~\cite{PilipczukW18}.
    
    \begin{conjecture}[\cite{PilipczukW18}]\label{conj:LCS}
        There is no algorithm  that solves the \textsc{LCS} problem in time $M^{f(r)}$ and using $f(r)M^{\Oh(1)}$ space for any computable function $f$, where $M$ is the total bitsize of the instance and $r$ is the number of input strings.
    \end{conjecture}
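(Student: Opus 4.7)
We prove Theorem~\ref{thm:lcs-lb} by contrapositive. Fix an unbounded computable function $\delta'$ and suppose, for contradiction, that there is an algorithm $A$ that, given an $n$-vertex graph together with a $(d,k)$-tree-model satisfying $d\leq\delta'(k)$, solves \textsc{Independent Set} in time $2^{\Oh(k)}\cdot n^{\Oh(1)}$ and space $n^{\Oh(1)}$. Our plan is to construct, for a suitable unbounded computable function $\delta$ derived from $\delta'$, a polynomial-time and polynomial-space reduction from \textsc{LCS} instances $(N,t,\Sigma,s_1,\ldots,s_r)$ with $t\leq\delta(N)$ to graphs equipped with $(d,k)$-tree-models satisfying the precondition of $A$. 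Feeding the output to $A$ then yields an \textsc{LCS} algorithm running in time $M^{f(r)}$ and space $f(r)\cdot M^{\Oh(1)}$, contradicting the stated \textsc{LCS} hypothesis.

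\textbf{The reduction.} Without loss of generality, we reduce to a binary alphabet. We build a graph $G$ whose vertex set is $\{v_{i,j,p}:i\in[r],\ j\in[t],\ p\in[N]\}$, where $v_{i,j,p}$ represents the statement ``position $p$ of string $s_i$ is the $j$-th letter of the common subsequence''. Edges of $G$ come in three families: (i) \emph{uniqueness}, which for each pair $(i,j)$ turns $\{v_{i,j,p}\}_{p\in[N]}$ into a clique so that at most one position is chosen; (ii) \emph{monotonicity}, which for each $i\in[r]$ and $j\in[t-1]$ adds the edge $v_{i,j,p}v_{i,j+1,p'}$ whenever $p\geq p'$, enforcing strictly increasing positions; and (iii) \emph{matching}, which for each $j\in[t]$ and each $i\neq i'$ adds the edge $v_{i,j,p}v_{i',j,p'}$ whenever $s_i[p]\neq s_{i'}[p']$, enforcing character agreement across strings. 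A direct check shows that $G$ has an independent set of maximum possible size $rt$ iff the \textsc{LCS} instance admits a common subsequence of length $t$, and $n=|V(G)|=rtN=\mathrm{poly}(M)$.

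\textbf{Tree-model construction.} The core of the argument is to equip $G$ with a $(d,k)$-tree-model whose parameters scale as $k=\Theta(r\log N)$ and $d=\Theta(t)$. We plan the following structure: the top $\Oh(t)$ levels of the tree handle ``sequencing'', so that the least common ancestor of any two vertices $v_{i,j,p}$ and $v_{i',j',p'}$ involved in a monotonicity or matching edge is sufficiently shallow and tagged with enough labelling information that the matrix at that ancestor can decide the edge solely from labels. Labels encode, in a compressed manner, the string index $i$, the character $s_i[p]$, and $\Oh(\log N)$ auxiliary bits; the bits of the binary representation of $p$ are exposed in a staged fashion by the renaming functions along the path from a leaf up to the ancestor, so that at the relevant LCA the matrix can implement both the test $p\geq p'$ (bit-by-bit, deciding at the most significant bit of disagreement) and the test $s_i[p]\neq s_{i'}[p']$ using only $\Oh(r\log N)$ labels in total.

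\textbf{Parameter balancing and main obstacle.} With these parameters, $2^{\Oh(k)}\cdot n^{\Oh(1)}=N^{\Oh(r)}\cdot\mathrm{poly}(M)\leq M^{f(r)}$ for a suitable computable $f$, and the space used by $A$ is polynomial in $n$, hence in $M$. The constraint $d\leq\delta'(k)$ reduces to $\Oh(t)\leq\delta'(\Oh(r\log N))$; since $\delta'$ is unbounded, for any fixed $r$ the right-hand side is an unbounded function of $N$, and we can set $\delta(N)$ to be this function (with hidden constants absorbed), which is itself unbounded and computable. The principal technical obstacle is engineering the tree-model so that the monotonicity comparison $p\geq p'$---which inherently depends on all $\log N$ bits of both positions---can be implemented by a single matrix at the LCA while using only $\Oh(r\log N)$ labels rather than the $\Omega(N)$ labels a naive construction would demand. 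We plan to achieve this via a staged bit-encoding gadget in which the renaming functions along each root-to-leaf path progressively expose higher-order position bits, drawing inspiration from but significantly refining the pathwidth-based gadgetry of Pilipczuk and Wrochna~\cite{PilipczukW18} to fit the richer NLC-style operations underlying tree-models.
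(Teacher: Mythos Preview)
The statement you were asked to address is \emph{Conjecture~\ref{conj:LCS}}, which the paper does not prove: it is an open hypothesis, quoted from~\cite{PilipczukW18}, that serves as the premise of Theorem~\ref{thm:lcs-lb}. There is therefore no ``paper's proof'' to compare against. Your write-up instead attempts Theorem~\ref{thm:lcs-lb} (the conditional lower bound), so at the very least there is a mismatch between the target statement and what you argue.

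Setting that aside and evaluating your sketch as an attempt at Theorem~\ref{thm:lcs-lb}, there is a genuine gap at exactly the point you flag as the ``principal technical obstacle''. Your graph $G$ contains, between the layers $\{v_{i,j,p}:p\in[N]\}$ and $\{v_{i,j+1,p'}:p'\in[N]\}$, a half-graph pattern: the adjacency depends on the order relation $p\geq p'$, so the $N$ vertices in one layer have $N$ pairwise distinct neighbourhoods in the next. At the node $a$ of any tree-model where these two layers meet, the adjacency between a vertex from one side and a vertex from the other is determined solely by $(\lambda_a(u),\lambda_a(v))$; with only $k=\Oh(r\log N)$ labels available, a single matrix $M_a$ cannot realise $N$ distinct neighbourhood types. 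Your proposed fix---``exposing bits progressively'' via renaming functions on the root-to-leaf path---does not help, because the relevant edges are all decided at \emph{one} LCA, where only the final label pair matters, not the history of renamings. In short, the monotonicity constraint cannot be pushed into the tree-model; it has to be pushed into the \emph{graph}.

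This is precisely what the paper does differently. Rather than encoding positions as vertex identities and comparisons as edges, it encodes each position $I\in[N]$ as a choice of a maximum independent set in a matching on $\log N$ edges (the selection gadget $\sfS_p^q$), and then builds an explicit \emph{inferiority gadget} $\Inf(p,q)$ whose independence number is $\log N$ and which can be extended to a $3\log N$-vertex independent set together with $\sfS_p^q|I$ and $\sfS_p^{q+1}|J$ if and only if $I<J$. Likewise, character agreement is enforced by a separate matching gadget rather than by direct edges. Because each gadget is locally simple (matchings, small cliques, a few designated neighbourhoods), the paper can exhibit a concrete $(d,k)$-tree-model with $d=2\log t+4$ and $k=14r\log N-3$ (Lemma~\ref{lem:reduc:shrubdepth}); note that the depth is $\Oh(\log t)$, not $\Theta(t)$. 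Your plan, by contrast, leaves the hard step---turning the order comparison into something a bounded-label matrix can express---unresolved, and the half-graph obstruction indicates it cannot be resolved without changing the graph itself.
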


    Note that in particular, the existence of an algorithm with time and space complexity as in \cref{conj:LCS} implies the existence of such algorithms for all problems in the class $\mathsf{XNLP}$.
    
    Our lower bound is based on the following stronger variant of \Cref{conj:LCS}, in which we additionally assume that the sought substring is short.
    
    \begin{conjecture}\label{conj:main}
        For any unbounded and computable function $\delta$, \Cref{conj:LCS} holds even when $t \leq \delta(N)$.
    \end{conjecture}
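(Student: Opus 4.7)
The final statement is a \emph{conjecture} rather than a theorem, so strictly speaking no unconditional proof is on offer; any genuine proof would resolve open problems in structural complexity. What I would pursue is therefore a plan to reduce this strengthened conjecture to the original one (\Cref{conj:LCS} of Pilipczuk--Wrochna) via a length-compression gadget, together with a plausibility argument that isolates where a fully unconditional proof would have to do new work.

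First, I would try the direct reduction route. Given an instance $(N,t,\Sigma,s_1,\dots,s_r)$ of \textsc{LCS} with arbitrary $t \leq N$, I would design a blocking scheme that packs every $b := \lceil N/\delta(N) \rceil$ consecutive positions of each string into a single ``super-symbol'' over an enlarged alphabet $\Sigma' \subseteq \Sigma^b$, and produce new strings $s_1',\dots,s_r'$ of length $N' := N/b \approx \delta(N)$ together with a new target length $t' \leq \delta(N')$. The key point to verify is that a common subsequence of length $t$ in the original instance survives as a common subsequence of length $\approx t/b$ in the blocked instance, under a carefully chosen ``consistency'' padding that enumerates all admissible splittings of a block across the subsequence boundary. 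The total bitsize $M'$ would remain polynomial in $M$ provided we are careful with the alphabet blow-up, and the reduction itself should be doable in time $\mathrm{poly}(M)$ and space $O(\log M)$. If this works, any hypothetical algorithm of the type forbidden by \Cref{conj:main} would, composed with the reduction, violate \Cref{conj:LCS}.

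The main obstacle I anticipate is precisely this consistency padding: a naive blocking loses information about which symbols inside a block belong to the subsequence, and repairing this typically forces $t'$ back up to $\Theta(N)$. An alternative, more promising avenue would be to bypass \textsc{LCS} altogether and redo the original $\mathsf{XNLP}$-completeness argument of Elberfeld, Stockhusen, and Tantau~\cite{ElberfeldST15} while enforcing a short witness. Concretely, starting from an $\mathsf{XNLP}$-complete ``short tableau'' problem (accepting computations of a nondeterministic machine running in time $n^{O(1)}$ and space $f(k)\log n$), one could encode each configuration-to-configuration transition as a long ``verification gadget'' inside the strings $s_i$, so that a common subsequence of length only $\delta(N)$---one symbol per nondeterministic choice point, of which there are at most $f(k) \log n$---suffices to witness acceptance. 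The quantitative challenge is to ensure that the gadgets do not inflate $N$ too much while still encoding a polynomially long computation.

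The truly hard part, and the reason this statement is left as a conjecture, is that even after such a reduction one still has to believe \Cref{conj:LCS} itself: separating $\mathsf{XNLP}$ from joint time-space classes of the form $\bigcup_f \mathsf{DTISP}(M^{f(r)}, f(r)\cdot M^{O(1)})$ is not known unconditionally and would imply, among other things, strong forms of $\mathsf{NL} \neq \mathsf{L}$. Thus any realistic write-up would be conditional: it would show \Cref{conj:main} \emph{modulo} \Cref{conj:LCS}, leaving the underlying belief about the time-space tradeoff for $\mathsf{XNLP}$-hard problems as a hypothesis rather than a theorem.
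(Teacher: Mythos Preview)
You are right that this is a conjecture, and indeed the paper offers no proof of it. It is stated as a strengthening of \Cref{conj:LCS} and then used as a hypothesis in the lower bound theorem; the paper makes no attempt to derive it from \Cref{conj:LCS} or from anything else. So your opening sentence already matches the paper's treatment, and you could have stopped there.

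The remainder of your write-up goes well beyond the paper by sketching two possible reductions of \Cref{conj:main} to \Cref{conj:LCS}. Neither of these is claimed or attempted in the paper, and as you yourself observe, both face real obstacles. The blocking reduction does not preserve the subsequence structure: a common subsequence of $s_1,\dots,s_r$ need not align with block boundaries, and the ``consistency padding'' you mention would have to encode, for each block and each string, which subset of positions participates in the subsequence---this is exactly the information whose compression you were hoping to avoid, and it forces either $t'$ or the alphabet size back up. The second route, re-proving $\mathsf{XNLP}$-hardness with a short witness, is more plausible as a research direction but is not a proof sketch; it is an outline of a project. So while your discussion is thoughtful, it should be read as motivation for why \Cref{conj:main} is reasonable, not as an argument that it follows from \Cref{conj:LCS}. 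The paper, for its part, simply asserts the stronger conjecture and leaves it at that.
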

    
    \iflong
    Thus, we may rephrase \Cref{thm:lcs-lb} as follows.
    
    \begin{theorem}\label{thm:lowerbound}
        Unless \Cref{conj:main} fails, for any unbounded and computable function~$\delta$, 
        there is no algorithm that solves the {\sc{Independent Set}} problem in graphs supplied with $(d,k)$-tree-models satisfying $d\leq \delta(k)$ that would run in time $2^{\Oh(k)}\cdot n^{\Oh(1)}$ and use $n^{\Oh(1)}$ space.
    \end{theorem}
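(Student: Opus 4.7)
The proof proceeds by contrapositive: assuming the hypothesised \textsc{Independent Set} algorithm exists, I will derive an algorithm for \textsc{LCS} that refutes \Cref{conj:main}. Given the adversary's unbounded computable $\delta$, I must choose an unbounded computable $\delta^{\sharp}$ (the growth rate of $t$ with $N$ that I will use in the assumed \textsc{LCS} lower bound) and a polynomial-time many-one reduction from \textsc{LCS} instances with $t \leq \delta^{\sharp}(N)$ to \textsc{Independent Set} instances supplied with a $(d,k)$-tree-model satisfying $d \leq \delta(k)$; running the hypothetical algorithm on the output must give \textsc{LCS} in time $M^{f(r)}$ and space $f(r)\cdot M^{\Oh(1)}$ for a computable $f$.

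The graph $G$ is the standard conflict encoding. I put $V(G)=\{v_{i,j,p}:(i,j,p)\in[r]\times[t]\times[N]\}$ and add three families of edges: (i) a clique on each slot $\{v_{i,j,p}:p\in[N]\}$, enforcing at most one chosen position per slot; (ii) order-violation edges $v_{i,j,p}v_{i,j+1,q}$ whenever $p\geq q$, encoding strict monotonicity within each string; and (iii) symbol-mismatch edges $v_{i,j,p}v_{i',j,q}$ whenever $i\neq i'$ and $s_i(p)\neq s_{i'}(q)$. A routine argument shows that $G$ has an independent set of size $rt$ iff the \textsc{LCS} instance has a common subsequence of length $t$, and $|V(G)|=rtN=M^{\Oh(1)}$.

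The technical heart is the tree-model of $G$. Positions are encoded through a balanced $B$-ary ``digit tree'' of depth $d_0$ with $B^{d_0}\geq N$: at an internal node of this tree the $B$ subtrees are labelled by the digit they represent, so the least common ancestor of two positions $p,q$ lands exactly at the level of their first differing digit, and the matrix there compares those two digits to decide the order edge. On top of this, I insert $\Oh(1)$ levels to separate strings, and I arrange the $t$ slots along a caterpillar spine so that the LCA of vertices from consecutive slots sits in the spine, not at the root—this keeps the relevant matrices small. The renaming functions propagate the string index, the slot (or a compressed ``this slot vs later'') indicator, the current digit, and the symbol $s_i(p)$, so that at every node the matrix can simultaneously express the three edge families. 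The resulting parameters are $d=\Oh(d_0+t)$ and $k=\Oh(r\,t\,B\,|\Sigma|)$.

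\textbf{The main obstacle} is balancing $B$, $d_0$, and $\delta$ against the target bounds. The time budget $2^{\Oh(k)}\cdot n^{\Oh(1)}\leq M^{f(r)}$ forces $k=\Oh(f(r)\log M)$, while the half-graph substructure implicit in the order edges imposes $d_0\cdot\log k\geq\log N$. Since $\delta$ is unbounded and computable, I can choose $B$ (depending on $N$ and $\delta$) so that $\delta(B)\log B\geq\log N$, and then set $d_0=\lceil\log N/\log B\rceil\leq\delta(B)\leq\delta(k)$. The maximal growth rate of $t$ that this argument tolerates (subject to keeping $k$ under $f(r)\log M$) defines the $\delta^{\sharp}$ I feed into the assumption of \Cref{conj:main}; a padding on $N$ handles regimes where $\delta$ dips below what the inequality requires. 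Applying the hypothesised IS algorithm then solves \textsc{LCS} within $M^{f(r)}$ time and $M^{\Oh(1)}$ space, contradicting \Cref{conj:main} and completing the proof.
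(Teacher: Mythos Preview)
Your reduction graph is correct, but the tree-model you propose cannot simultaneously satisfy $d\leq\delta(k)$ and the time budget $2^{\Oh(k)}\leq M^{f(r)}$ when $\delta$ grows slowly. You yourself note that the order-violation edges between slot $j$ and slot $j+1$ form a half-graph of order $N$ and that this forces $d_0\cdot\log k\gtrsim\log N$. Now the time budget demands $k\leq c\,f(r)\log M=\Oh_r(\log N)$, which in turn bounds $B=\Oh_r(\log N)$ and hence $d_0\geq c'\log N/\log\log N$. But $\delta(k)=\delta(\Oh_r(\log N))$ may be far smaller than this: for $\delta(x)=\log\log x$ one gets $\delta(k)=\Oh(\log\log\log N)$, so $d_0\leq\delta(k)$ fails for all large $N$. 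Your ``padding on $N$'' cannot help, since the conflict is asymptotic rather than a boundary artifact. A second, independent obstruction is that your labels carry the symbol $s_i(p)$, so $k$ has a $|\Sigma|$ factor; the \textsc{LCS} conjecture puts no bound on $|\Sigma|$ (it may be $\Theta(N)$), so already this alone makes $2^{\Oh(k)}$ super-polynomial in $M$.

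The paper circumvents both obstacles by \emph{not} representing positions as single vertices. Each selection gadget $\sfS_p^q$ is a matching on $\log N$ edges whose $2^{\log N}$ maximum independent sets encode the position in binary; a bit-by-bit \emph{inferiority gadget} $\Inf(p,q)$ compares two such encodings without creating a half-graph on $N$ vertices, and a \emph{matching gadget} $\Match(p,q)$ with $\Oh(N^2)$ vertices but only $\Oh(\log N)$ label classes certifies symbol equality without $|\Sigma|$ entering $k$. The outcome is $k=\Oh(r\log N)$ and, crucially, $d=\Oh(\log t)$ with \emph{no dependence on $N$ in the depth}; then $d\leq\delta(k)$ holds for any unbounded $\delta$ simply by bounding $t$ through a suitable $\delta'(N)$. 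Your one-vertex-per-position encoding cannot reach these parameters because the half-graph it contains is an intrinsic obstruction to small shrubdepth.
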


    The remainder of this section is devoted to the proof of \cref{thm:lowerbound}. Not surprisingly, we provide a reduction from {\sc{LCS}} to {\sc{Independent Set}} on graphs provided with suitable tree-models.
    \fi

    \ifshort
    Let $(N,t,\Sigma,s_1,\dots,s_r)$ be an instance of \textsc{LCS}. We assume, without loss of generality, that~$N$ is a power of $2$.  
    We provide a reduction from $(N,t,\Sigma,s_1,\dots,s_r)$ to an equivalent instance of \textsc{Independent Set} consisting of a graph $G$ with $(r+t+N)^{\Oh(1)}$ vertices which admits a $(d,k)$-tree-model where $d=\Oh(\log t)$ and $k=\Oh(r \log N)$.
    This implies \Cref{thm:lcs-lb} since for every unbounded and computable function $\delta$ there exists an unbounded and computable function $\delta'$  such that if $t \leq \delta'(N)$, then $d \leq  \delta(k)$ for all sufficiently large $N,r\in \bN$.    
    \fi

    \iflong
    Let $(N,t,\Sigma,s_1,\dots,s_r)$ be an instance of \textsc{LCS}.
    For the sake of clarity, we assume without loss of generality that $N$ is a power of $2$.
    Indeed, we can always obtain an equivalent instance $(2^{\lceil \log N \rceil}, t + t', \Sigma', s_1',\dots,s_r')$ where $t'=2^{\lceil \log N \rceil} - N$, $\Sigma'$ is obtained from $\Sigma$ by adding a new letter $\spadesuit$ and each $s_i'$ is obtained by adding $t'$ times $\spadesuit$ at the end of $s_i$.
    
    For every $I\in [N]$, we denote the $I$-th letter of $s_p$ by $s_p[I]$.
    In the following, we present our reduction from $(N,t,\Sigma,s_1,\dots,s_r)$ to an equivalent instance of \textsc{Independent Set} consisting of a graph $G$ with $(r+t+N)^{\Oh(1)}$ vertices and a $(d,k)$-tree-model where $d=\Oh(\log t)$ and $k=\Oh(r \log N)$.
    This implies \Cref{thm:lowerbound} since for every unbounded and computable function $\delta$ there exists an unbounded and computable function $\delta'$  such that if $t \leq \delta'(N)$, then $d \leq  \delta(k)$ (we explain this in more details at the end of this section).
    \fi

    \ifshort
    To outline the main idea of the reduction, let $s^\star$ be a potential common substring of $s_1, \dots, s_r$ of length $t$.
    We use matchings to represent the binary encoding of the positions of the letters of $s^\star$
    in each string.
    \fi

    \iflong
    In the intuitions along the construction, we denote by $s^\star$ a potential common substring of $s_1, \dots, s_r$ of length $t$.
    The main idea is to use matchings to represent the binary encoding of the positions of the letters of $s^\star$
    in each string.
    \fi
      
    For every string $s_p$ and $q\in [t]$, we define the \textbf{selection gadget} $\sfS_{p}^{q}$ which contains, for every $i\in [\log N]$, an edge called the \textit{$i$-edge} of $\sfS_{p}^{q}$. One endpoint of this edge is called the {\em{0-endpoint}} and the other is called the {\em{1-endpoint}}; i.e., a selection gadget induces a matching on $\log N$ edges.
    This results in the following natural bijection between $[N]$ and the maximal independent sets of $\sfS_{p}^{q}$.
    For every $I\in [N]$, we denote by $\sfS_{p}^{q}|I$ the independent set that contains, for each $i\in [\log N]$, the $x$-endpoint of the $i$-edge of $\sfS_{p}^{q}$ where $x$ is the value of the $i$-th bit of the binary representation of $I-1$ (we consider the first bit to be the most significant one and the $\log N$-th one the least significant).
    Then the vertices selected in $\sfS_p^q$ encode the position of the $q$-th letter of $s^\star$ in $s_p$.
      
    We need to guarantee that the selected positions in the gadgets $\sfS_{p}^{1},\dots,\sfS_{p}^{t}$ are coherent, namely, for every $q\in [t]$, the position selected in $\sfS_{p}^{q}$ is strictly smaller than the one selected in~$\sfS_{p}^{q+1}$.
    For this, we construct an \textbf{inferiority gadget} denoted by $\Inf(p,q)$ for every string~$s_p$ and every $q\in [t-1]$.
    The idea behind it is to ensure that the only possibility for an independent set to contain at least $3\log N$ vertices from $\sfS_{p}^{q},\sfS_{p}^{q+1}$, and their inferiority gadget, is the following: there exist $I < J\in [N]$ such that the independent set contains $\sfS_{p}^{q}|I \cup \sfS_{p}^{q+1}|J$.
    The maximum solution size in the constructed instance of \textsc{Independent Set}---which is the sum of the independence number of each gadget---will guarantee that only such selections are possible.
    \ifshort
    We refer to the full version of this paper for the construction of these inferiority gadgets and the arguments proving the following observation.
    \fi
        
    \iflong
    \begin{figure}[th]
        \centering
        \includegraphics[width=0.85\linewidth]{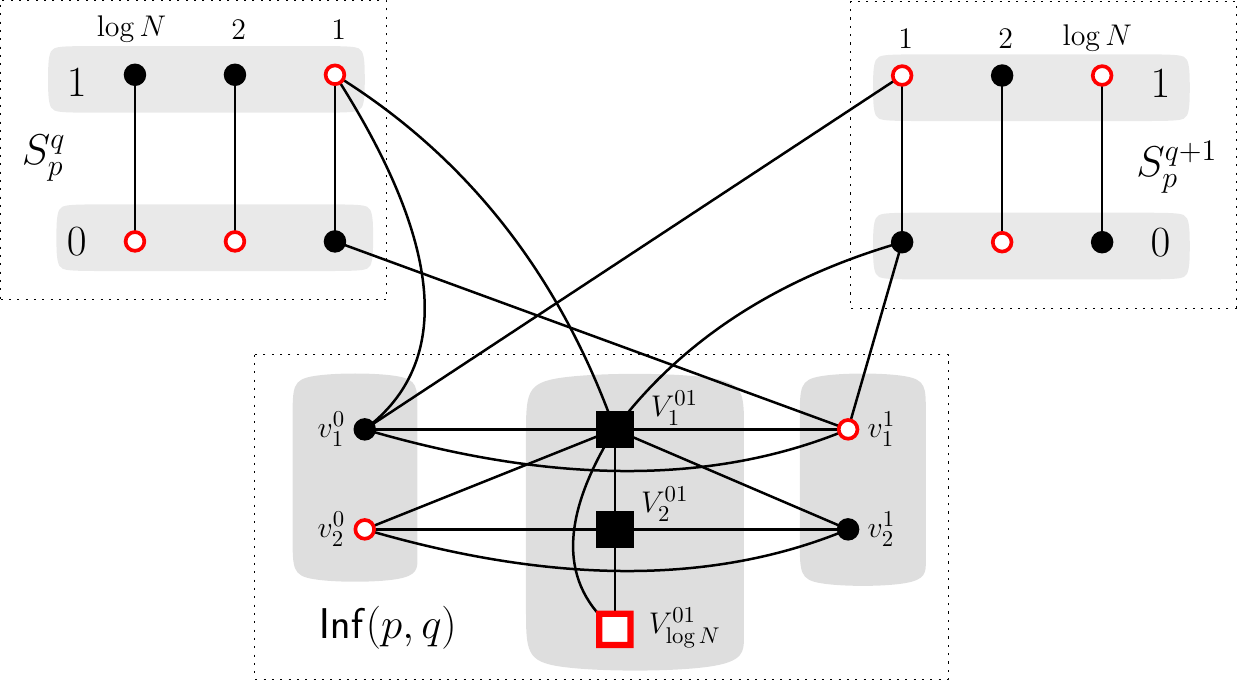}
        \caption{Example of an inferiority gadget with $\log N= 3$. The squares represent the sets of vertices $V^{01}_{i}$. For legibility, among the edges going out of the inferiority gadget, we only represent those incident to $v_{1}^{0},v_{1}^{1}$ and $V^{01}_{1}$. 
            The independent set consisting of the white filled vertices has $3\log N$ vertices, the position selected for $\sfS_{p}^{q}$ is 5 and for $\sfS_{p}^{q+1}$ it is 6.
        }
        \label{fig:infgadget}
    \end{figure}
    
    \Cref{fig:infgadget} provides an example of the following construction.
    The vertex set of $\Inf(p,q)$ consists of the following vertices: for each $i\in [\log N-1]$, there are two vertices $v_{i}^{0,p,q}$ and $v_{i}^{1,p,q}$.
    Moreover, for each $i\in [\log N]$, there is a set $V^{01}_{i,p,q}$ of $\log N -i +1$ vertices (we drop $p,q$ from the notation when they are clear from the context).		
    We now describe the edges incident to the inferiority gadget:
    \begin{itemize}
        \item For every $i\in [\log N-1]$, $v_i^{0}$ and $v_i^{1}$ are adjacent and for each $x\in\{0,1\}$, $v_i^{x}$ is adjacent to the $(1-x)$-endpoints of the $i$-edges from $\sfS_{p}^{q}$ and $\sfS_{p}^{q+1}$. 
        
        \item For every $i\in[\log N]$, all the vertices in $V^{01}_{i}$ are adjacent to (1)~the $1$-endpoint of the $i$-edge from $\sfS_{p}^{q}$, (2)~the $0$-endpoint from the $i$-edge of $\sfS_{p}^{q+1}$, (3)~all the vertices $v_j^{0},v_j^{1}$ for every $j\geq i$ and (4)~all the vertices in $V_{j}^{01}$ for every $j> i$.
    \end{itemize}
    On a high level, an inferiority gadget reflects that for values $I < J \in [N]$, if we go from high-order to low-order bits, then the binary encodings of $I$ and $J$ first contains the same bits and then there is an index, where $I$ has a zero-bit and $J$ has a one-bit.
    If such a difference first occurs at some position $\ell \in [\log N]$, then the corresponding independent set first takes $\ell-1$ vertices of the form $v_{\ell'}^{0}$ or $v_{\ell'}^{1}$ (for $\ell' < \ell$) and then takes $\log N - (\ell - 1)$ vertices from $V^{01}_\ell$ -- this results in $\log N$ vertices taken in the inferiority gadget. The following statement follows from 
    \fi
    
    \begin{observation}\label{obs:infgadget}
        Let $p\in [r]$ and $q\in [t-1]$. The independence number of $\Inf(p,q)$ is $\log N$ and for every $I, J\in [N]$, we have $I < J$ iff there exists a set of $\log N$ vertices $S$ from $\Inf(p,q)$ such that the union of $S$, $\sfS_{p}^{q}|I$ and $\sfS_{p}^{q+1}|J$ induces an independent set.
    \end{observation}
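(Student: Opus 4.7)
The plan is to prove the two claims of Observation~\ref{obs:infgadget} separately: (a) the independence number of $\Inf(p,q)$ equals $\log N$, and (b) the bit-encoding characterization of $I<J$ via the existence of a compatible extension.

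For (a), I would run a case analysis on any independent set $S \subseteq V(\Inf(p,q))$. The sets $V^{01}_i$ have an ``upper-triangular'' adjacency structure: if $S$ contains a vertex from $V^{01}_{i^*}$ for the smallest such index $i^*$, then by construction $S$ contains no $v_j^x$ with $j \geq i^*$ and no vertex of $V^{01}_j$ with $j > i^*$. The remaining room is filled by the $i^*-1$ edge-pairs $\{v_j^0,v_j^1\}$ for $j < i^*$ (each contributing at most one vertex) together with at most $|V^{01}_{i^*}| = \log N - i^* + 1$ vertices of $V^{01}_{i^*}$, bounding $|S|$ by $\log N$. If instead $S$ is disjoint from every $V^{01}_i$, then $|S| \leq \log N - 1$ from the $\log N - 1$ matching-pairs alone. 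The bound is tight, for example by taking all of $V^{01}_1$.

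For (b), forward direction: given $I < J$, let $\ell$ be the most significant bit-position at which the binary encodings of $I-1$ and $J-1$ disagree, so the $\ell$-th bit is $0$ in $I-1$ and $1$ in $J-1$. For each $j < \ell$, both encodings share the same bit $x_j$, so both $\sfS_p^q|I$ and $\sfS_p^{q+1}|J$ contain the $x_j$-endpoint of their respective $j$-edges. I then take $S$ to consist of $v_j^{x_j}$ for $j < \ell$ together with all of $V^{01}_\ell$. Using the adjacency rules of $\Inf(p,q)$, it is routine to verify that $S$ is independent, has size exactly $\log N$, and has no edge to $\sfS_p^q|I \cup \sfS_p^{q+1}|J$: each $v_j^{x_j}$ is adjacent only to $(1-x_j)$-endpoints (which are not selected), and $V^{01}_\ell$ is adjacent to the $1$-endpoint of the $\ell$-edge in $\sfS_p^q$ and to the $0$-endpoint in $\sfS_p^{q+1}$, neither of which is selected by $I$ respectively $J$.

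For the converse, let $S$ be any size-$\log N$ subset of $V(\Inf(p,q))$ extending both selections to an independent set. The structural analysis from part (a) forces $S$ to contain all of $V^{01}_{i^*}$ for some minimal $i^*$ and exactly one vertex $v_j^{x_j}$ from each pair with $j < i^*$. Compatibility of $V^{01}_{i^*}$ with $\sfS_p^q|I$ forces the $i^*$-th bit of $I-1$ to equal $0$, and compatibility with $\sfS_p^{q+1}|J$ forces the $i^*$-th bit of $J-1$ to equal $1$. Compatibility of each chosen $v_j^{x_j}$ with both selection gadgets forces the $j$-th bits of $I-1$ and $J-1$ to both equal $x_j$, hence to agree for all $j < i^*$. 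Thus $I-1$ and $J-1$ share their $i^*-1$ most significant bits and then differ with $0$ versus $1$, giving $I < J$. The main obstacle throughout is simply careful bookkeeping of the asymmetric adjacency pattern between the $V^{01}_i$, the matching pairs, and the two selection gadgets; no deeper argument is needed once the bit-by-bit correspondence is in place.
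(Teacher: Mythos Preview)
Your proof is correct and follows exactly the approach sketched in the paper: the paper only gives the high-level intuition (picking the $v_j^{x_j}$ for $j<\ell$ and all of $V^{01}_\ell$ at the first differing bit position $\ell$) and leaves the verification implicit, so your write-up is simply a careful fleshing out of that same argument. The case analysis in part~(a) and the tight-counting step in the converse of part~(b) are precisely what is needed and match the intended reasoning.
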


    Next, we need to ensure that the $t$ positions chosen in $s_1, \dots, s_r$ indeed correspond to a common subsequence, i.e., for every $q \in [t]$, the $q$-th chosen letter must be the same in every $s_1, \dots, s_r$. 
    For $p\in [r-1]$, let $\IJ_{p}$ denote the set of all ordered pairs $(I,J)\in [N]^2$ such that the $I$-th letter of $s_p$ and the $J$-th of $s_{p+1}$ are identical.
    For each $p\in [r-1]$ and $q\in [t]$, we create the \textit{matching gadget} \textit{$\Match(p,q)$} as follows:
    \begin{itemize}
        \item For every pair $(I,J)\in \IJ_{p}$ and for each $p^\star\in \{p,p+1\}$, we create a copy $\sfM_{p^\star}^{p,q,I,J}$ of $\sfS_{p^\star}^{q}$ and for every $\ell\in[\log N]$ and $x\in\{0,1\}$, we add an edge between the $x$-endpoint of the $\ell$-edge of $\sfS_{p^\star}^{q}$ and the $(1-x)$-endpoint of the $\ell$-edge of $\sfM_{p^\star}^{p,q,I,J}$.
        
        \item For every pair $(I,J)\in \IJ_{p}$, we add a new vertex $v^{q}_{p,I,J}$ adjacent to (1)~all the vertices from $\sfM_{p}^{p,q,I,J}$ that are not in $\sfM_{p}^{p,q,I,J}|I$ and (2)~all the vertices from $\sfM_{p+1}^{p,q,I,J}$ that are not in $\sfM_{p+1}^{p,q,I,J}|J$.
    \end{itemize}

    Finally, we turn $\{v^{q}_{p,I,J} \mid (I,J)\in \IJ_{p}\}$ into a clique.
    Observe that, for each $p^\star\in \{p,p+1\}$, an independent set $S$ contains $(\abs{\IJ_{p}}+1)\log N$ vertices from $\sfS_{p^\star}^{q}$ and its copies $\sfM_{p^\star}^{p,q,I,J}$ if and only if there exists a value $I\in [N]$ such that $S$ contains $\sfS_{p^\star}^{q}|I$ and $\sfM_{p^\star}^{p,q,I,J}|I$ for each copy.
    This leads to the following observation.
    
    \begin{observation}\label{obs:matching:gadget}
        Let $p\in [r-1]$ and $q\in [t]$. The independence number of $\Match(p,q)$ is $1+2\cdot\abs{\IJ_{p}} \cdot \log N$ and for every $I,J\in [N]$, we have $(I,J)\in \IJ_p$ iff there exists an independent set $S$ of $\Match(p,q)$ with $1+2\abs{\IJ_{p}} \cdot \log N$ vertices such that the union of $S$, $\sfS_{p}^{q}|I$ and $\sfS_{p+1}^{q}|J$ is an independent set.
    \end{observation}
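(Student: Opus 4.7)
My plan is to establish the independence number by a simple counting upper bound matched by an explicit construction, and then to prove the characterization by tracking the edges between $\sfS_{p^\star}^{q}$ and its copies $\sfM_{p^\star}^{p,q,I',J'}$ bit by bit.

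For the upper bound $\alpha(\Match(p,q)) \le 1 + 2|\IJ_p|\log N$, I would decompose $\Match(p,q)$ into $2|\IJ_p|$ pairwise disjoint matchings of $\log N$ edges each (the copies $\sfM_{p^\star}^{p,q,I,J}$), together with the clique on $\{v^{q}_{p,I,J}\mid (I,J)\in\IJ_p\}$. Any independent set contains at most $\log N$ vertices from each matching and at most one vertex from the clique, yielding the bound.

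For the direction $(I,J)\in\IJ_p \Rightarrow \exists S$ (which, combined with the upper bound, also pins down the independence number exactly), I would use the explicit construction
$$S \;=\; \{v^{q}_{p,I,J}\} \;\cup\; \bigcup_{(I',J')\in\IJ_p}\!\left(\sfM_{p}^{p,q,I',J'}|I \,\cup\, \sfM_{p+1}^{p,q,I',J'}|J\right),$$
whose size is exactly $1 + 2|\IJ_p|\log N$. Checking that $S\cup\sfS_{p}^{q}|I\cup\sfS_{p+1}^{q}|J$ is independent reduces to two observations: (a)~$v^{q}_{p,I,J}$ is adjacent only to the vertices of $\sfM_{p}^{p,q,I,J}$ outside $\sfM_{p}^{p,q,I,J}|I$ and to the vertices of $\sfM_{p+1}^{p,q,I,J}$ outside $\sfM_{p+1}^{p,q,I,J}|J$, none of which lie in $S$; (b)~by the crossing edges between each $\sfS_{p^\star}^{q}$ and any of its copies, the $x$-endpoint of the $\ell$-edge of $\sfS_{p^\star}^{q}$ is only adjacent to the $(1-x)$-endpoint of the $\ell$-edge of $\sfM_{p^\star}^{p,q,I',J'}$, so picking same-bit endpoints on both sides of that interface avoids all conflicts.

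For the converse direction, suppose $S$ of size $1 + 2|\IJ_p|\log N$ exists with $S\cup\sfS_{p}^{q}|I\cup\sfS_{p+1}^{q}|J$ independent. Tightness of the upper bound forces $S$ to take exactly $\log N$ vertices from every copy and exactly one clique vertex, say $v^{q}_{p,I_0,J_0}$. The central step is a bit propagation argument: if $\sfS_{p}^{q}|I$ contains the $x_\ell$-endpoint of the $\ell$-edge of $\sfS_{p}^{q}$ (with $x_\ell$ the $\ell$-th bit of $I-1$), then in every copy $\sfM_{p}^{p,q,I',J'}$ the $(1-x_\ell)$-endpoint of the $\ell$-edge is forbidden, so the only way to reach $\log N$ vertices in that copy is to pick $\sfM_{p}^{p,q,I',J'}|I$; symmetrically, every $\sfM_{p+1}^{p,q,I',J'}$ contributes $\sfM_{p+1}^{p,q,I',J'}|J$ to $S$. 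Applied to the copies $\sfM_{p}^{p,q,I_0,J_0}$ and $\sfM_{p+1}^{p,q,I_0,J_0}$, this gives $\sfM_{p}^{p,q,I_0,J_0}|I,\sfM_{p+1}^{p,q,I_0,J_0}|J\subseteq S$; since $v^{q}_{p,I_0,J_0}$ is adjacent to every vertex of those two copies outside $\sfM_{p}^{p,q,I_0,J_0}|I_0$ and $\sfM_{p+1}^{p,q,I_0,J_0}|J_0$, compatibility forces $I=I_0$ and $J=J_0$, so $(I,J)\in\IJ_p$. The only delicate step is the bit propagation argument just described, which hinges on the XOR structure of the crossing edges between each $\sfS_{p^\star}^{q}$ and its copies; all the remaining verifications are routine bookkeeping.
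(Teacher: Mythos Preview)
Your proposal is correct and follows exactly the approach the paper sketches in the sentence preceding the observation (``an independent set $S$ contains $(\abs{\IJ_{p}}+1)\log N$ vertices from $\sfS_{p^\star}^{q}$ and its copies $\sfM_{p^\star}^{p,q,I,J}$ if and only if there exists a value $I\in [N]$ such that $S$ contains $\sfS_{p^\star}^{q}|I$ and $\sfM_{p^\star}^{p,q,I,J}|I$ for each copy''); the paper itself does not spell out a full proof, and your decomposition into the matching upper bound, the explicit construction, and the bit-propagation argument forcing $I=I_0$, $J=J_0$ is precisely the intended expansion.
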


    This concludes the construction of the graph $G$. See
    \Cref{fig:overview:reduction} below for an overview.
    
    \begin{figure}[th]
        \centering
        \includegraphics[width=0.85\linewidth]{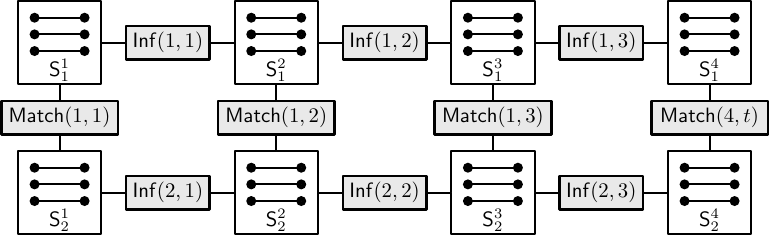}
        \caption{Overview of the graph $G$ with $\log N = 3$, $r=2$ and $t=4$. There are some edges between two gadgets if and only if there are some edges between their vertices in $G$.}
        \label{fig:overview:reduction}
    \end{figure}

    \ifshort
    We prove correctness of the reduction in the following lemma which follows mostly from Observations~\ref{obs:infgadget} and~\ref{obs:matching:gadget}.
    \fi
    \iflong
    We prove the correctness of the reduction in the following lemma.
    \fi
    
    \begin{lemma}\label{lem:reduc:correctness}
        There exists an integer $\obj$ such that $G$ admits an independent set of size at least $\obj$ iff the strings $s_1,\dots,s_r$ admit a common subsequence of length $t$.
    \end{lemma}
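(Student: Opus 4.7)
The plan is to take $\obj$ to be the sum of the independence numbers of all gadgets in $G$, namely
\[
\obj \ := \ rt\log N \ + \ r(t-1)\log N \ + \ \sum_{p=1}^{r-1}\sum_{q=1}^{t}\bigl(1 + 2\,|\IJ_p|\log N\bigr).
\]
Since the gadgets $\sfS_p^q$, $\Inf(p,q)$, and $\Match(p,q)$ are pairwise vertex-disjoint, no independent set of $G$ exceeds $\obj$ in size, so it suffices to show that size exactly $\obj$ is achievable if and only if $s_1,\dots,s_r$ admit a common subsequence of length $t$.

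For the ($\Leftarrow$) direction, I would fix a common subsequence $s^\star$ of length $t$ together with witnessing positions $I_p^1<I_p^2<\dots<I_p^t$ in each $s_p$ satisfying $s_p[I_p^q]=s^\star[q]$. Then I build an independent set $S$ of $G$ by including, for every $p\in[r]$ and $q\in[t]$, the $\log N$ vertices of $\sfS_p^q|I_p^q$; for every $\Inf(p,q)$, the $\log N$ vertices provided by Observation~\ref{obs:infgadget} applied to $I=I_p^q<J=I_p^{q+1}$; and for every $\Match(p,q)$, the $1+2|\IJ_p|\log N$ vertices provided by Observation~\ref{obs:matching:gadget} applied to $(I,J)=(I_p^q,I_{p+1}^q)\in\IJ_p$ (membership in $\IJ_p$ holds because $s_p[I_p^q]=s_{p+1}[I_{p+1}^q]$). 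By construction $|S|=\obj$, and independence reduces to checking that the pieces certified by the two observations glue together globally.

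For the ($\Rightarrow$) direction, assume $|S|\ge\obj$; then in fact $|S|=\obj$ and $S$ must attain the independence number of every gadget separately. Because each selection gadget is a matching on $\log N$ edges, there is a unique $I_p^q\in[N]$ with $S\cap V(\sfS_p^q)=\sfS_p^q|I_p^q$. Applying Observation~\ref{obs:infgadget} to the $\log N$ vertices of $S$ inside each $\Inf(p,q)$, together with $\sfS_p^q|I_p^q$ and $\sfS_p^{q+1}|I_p^{q+1}$, yields $I_p^q<I_p^{q+1}$; applying Observation~\ref{obs:matching:gadget} in an analogous manner to each $\Match(p,q)$ yields $(I_p^q,I_{p+1}^q)\in\IJ_p$, i.e.\ $s_p[I_p^q]=s_{p+1}[I_{p+1}^q]$. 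Consequently the string $s^\star\in\Sigma^t$ defined by $s^\star[q]:=s_1[I_1^q]=s_2[I_2^q]=\dots=s_r[I_r^q]$ is a subsequence of every $s_p$, witnessed by the increasing positions $I_p^1<\dots<I_p^t$.

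The only nontrivial obstacle is the global independence argument in the forward direction, since the two observations only guarantee independence within one inferiority or matching gadget together with the at most two selection gadgets it meets. I would resolve this by a direct inspection of the construction, which shows that all edges leaving an inferiority or matching gadget go into its adjacent selection gadgets, and no two distinct inferiority/matching gadgets share any edge directly; thus consistency of the selections $\sfS_p^q|I_p^q$ across the local certificates is enough to ensure that the union remains independent in $G$.
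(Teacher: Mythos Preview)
Your proposal is correct and follows essentially the same approach as the paper: define $\obj$ as the sum of the gadget independence numbers, build the independent set from a common subsequence via Observations~\ref{obs:infgadget} and~\ref{obs:matching:gadget}, and conversely use the fact that an independent set of size $\obj$ must saturate every gadget to recover the positions $I_p^q$ and verify they form a common subsequence. Your explicit remark that cross-gadget edges only run between an inferiority/matching gadget and its adjacent selection gadgets is exactly the argument the paper uses to ensure global independence.
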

    \iflong
    \begin{proof}
        Let $\obj=  (rt + r(t-1) )\log N +  \sum_{p\in [r-1]} t ( 1 + 2\cdot \abs{\IJ_{p}} \cdot \log N )$.
        
        \subparagraph{($\Rightarrow$)} Assume that $s_1,\dots,s_r$ admit a common subsequence $s^\star$ of length $t$.
        Then, for every string $s_p$, there exist $I_{p}^{1},\dots,I_{p}^{t}\in [N]$ such that $I_{p}^{1} < \dots < I_{p}^{t}$ and $s_p[I_p^{q}]= s^\star[q]$ for every $q\in [t]$.
        We construct an independent set $S$ as follows.
        For every selection gadget $\sfS_{p}^{q}$, we add $\sfS_{p}^{q}| I_{p}^{q}$ to $S$.
        Note that, at this point, $S$ is an independent set because there is no edge between the selection gadgets $\sfS_{p}^{q}$ in $G$.
        For every inferiority gadget $\Inf(p,q)$, since $I_{p}^{q} < I_{p}^{q+1}$, we can use \Cref{obs:infgadget} and add a set of $\log N$ vertices from $\Inf(p,q)$ to $S$.
        Note that $S$ remains an independent set because the added vertices are not adjacent to $\sfS_{p}^{q}| I_{p}^{q}$ and $\sfS_{p}^{q}| I_{p}^{q+1}$ by \Cref{obs:infgadget} and the only edges going out of $\Inf(p,q)$ are incident to $\sfS_{p}^{q}$ and $\sfS_{p}^{q+1}$.
        At this point, we have $(rt + r(t-1))\log N$ vertices in $S$.
        
        Observe that for every $p\in [r-1]$ and $q\in [t]$, we have $s_p[I_p^{q}]= s^\star[q] = s_{p+1}[I_{p+1}^{q}]$.
        Thus, we have $(I_{p}^{q},I_{p+1}^{q})\in \IJ_p$ and by \Cref{obs:matching:gadget}, there exists an independent set $S_{p,q}$ of $\Match(p,q)$ with  $1+2\abs{\IJ_{p}} \cdot \log N$ vertices such that the union of $S_{p,q}$, $\sfS_{p}^{q}| I_{p}^{q}$ and $\sfS_{p+1}^{q}| I_{p+1}^{q}$ is an independent set.
        We add $S_{p,q}$ to $S$ and note that $S$ remains an independent set since the only edges going out of $\Match(p,q)$ are incident to $\sfS_{p}^{q}$ and $\sfS_{p+1}^{q}$.
        As we do this for every $p\in [r-1]$ and $q\in [t]$, the union of the $S_{p,q}$'s contains $\sum_{p\in [r-1]} t ( 1 + 2\abs{\IJ_{p}} \cdot \log N )$ vertices.
        We conclude that $G$ admits an independent set of size $\obj$.
        
        \subparagraph{($\Leftarrow$)} Assume $G$ admits an independent set $S$ of size at least $\obj$.
        The independence number of each selection gadget $\sfS_{p}^{q}$ is $\log N$ and, by \Cref{obs:infgadget}, this is also the case for each inferiority gadget $\Inf(p,q)$.
        Hence, $S$ contains at most $(rt + r(t-1) )\log N $ vertices from selection and inferiority gadgets.
        By \Cref{obs:matching:gadget}, the independence number of each matching gadget $\Match(p,q)$ is $ 1 + 2\abs{\IJ_{p}} \cdot \log N$, thus $S$ contains at most $\sum_{p\in [r-1]} t ( 1 + 2\abs{\IJ_{p}} \cdot \log N )$ vertices from the matching gadgets.
        From the definition of $\obj$, we obtain that $S$ contains exactly $\log N$ vertices from each selection and inferiority gadget, and it contains exactly $1 + 2\abs{\IJ_{p}} \cdot \log N$ vertices from each matching gadget.
        We make the following deductions:
        \begin{itemize}
            \item For each $p\in [r]$, there exist $I_{p}^{1},\dots,I_{p}^{t}\in [N]$ such that $S$ contains $\sfS_{p}^{q}|I_{p}^{q}$ for every $q\in [t]$.
            
            \item For each $p\in [r]$ and $q\in[t-1]$, the independent set $S$ contains the vertices in $\sfS_{p}^{q}|I_{p}^{q}$ and $\sfS_{p}^{q}|I_{p}^{q+1}$ as well as $\log N$ vertices from $\Inf(p,q)$. \Cref{obs:infgadget} implies that $I_{p}^{q}< I_{p}^{q+1}$. Thus, $s_{p}[I_{p}^{1}]\dots s_{p}[I_{p}^{t}]$ is a subsequence of $s_p$.
            
            \item For every $p\in [r-1]$ and $q\in [t]$, the independent set $S$ contains $\sfS_{p}^{q}|I_{p}^{q}$ and $\sfS_{p+1}^{q}|I_{p+1}^{q}$ as well as  $ 1 + 2\abs{\IJ_{p}} \cdot \log N $ vertices from $\Match(p,q)$.
            We deduce from \Cref{obs:matching:gadget} that $(I_{p}^{q},I_{p+1}^{q})\in \IJ_p$ and consequently, $s_{p}[I_{p}^{q}]=s_{p+1}[I_{p+1}^{q}]$.
            Hence, for every $q\in [t]$, we have $s_{1}[I_{1}^{q}]=\dots= s_{r}[I_{r}^{q}]$.
        \end{itemize}
        We conclude that $s_{1}[I_{1}^{1}]\dots s_{1}[I_{1}^{t}]$ is a common subsequence of $s_1,\dots,s_r$.
    \end{proof}
    \fi
    
    The next step is to construct a tree-model of $G$.
    
    \begin{lemma}\label{lem:reduc:shrubdepth}
        We can compute in polynomial time a $(d,k)$-tree-model of $G$ where $d=2\log t + 4$ and $k=14r\log N-3$.
    \end{lemma}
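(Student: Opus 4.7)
The plan is to construct $T$ whose global shape is a balanced binary tree $T_{\mathrm{pos}}$ on the $t$ positions $q\in[t]$, of depth $\log t$, with constant-depth pockets attached at its leaves and internal nodes to accommodate the gadgets. I start at the bottom by assembling each selection gadget $\sfS_p^q$ as a depth-two subtree: each of the $\log N$ matching edges is created by its own intermediate node carrying a $2\times 2$ matrix, and labels at the root of $\sfS_p^q$ encode a triple $(p,i,x)$ recording the string, bit-index and bit-value of every endpoint, ready to be referred to by a higher matrix.

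At each leaf of $T_{\mathrm{pos}}$ corresponding to a position $q$, I attach a constant-depth pocket containing $\sfS_1^q,\dots,\sfS_r^q$ together with the matching gadgets $\Match(1,q),\dots,\Match(r-1,q)$. Each mini-gadget $(\sfM_p^{p,q,I,J},\sfM_{p+1}^{p,q,I,J},v^q_{p,I,J})$ is built in a constant-depth sub-pocket whose root matrix realises the $(I,J)$-specific edges involving the selector $v^q_{p,I,J}$; at the root of $\Match(p,q)$, a single matrix then creates the clique among the selector vertices and the cross-edges between the copies and $\sfS_p^q,\sfS_{p+1}^q$. For each $(p,q)$ with $q\in[t-1]$, the inferiority gadget $\Inf(p,q)$ is attached via a constant-depth pocket at the LCA in $T_{\mathrm{pos}}$ of the leaves for positions $q$ and $q+1$; the top matrix of this pocket produces both the internal edges of $\Inf(p,q)$ and its edges to $\sfS_p^q$ and $\sfS_p^{q+1}$, which requires that the endpoint labels of those two selection gadgets are still available at this LCA.

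The key to both bounds is a careful renaming scheme on the way up $T_{\mathrm{pos}}$: at every internal node I retain labels only for the endpoints of the \emph{leftmost} and the \emph{rightmost} $\sfS_p^\cdot$ in the subtree, for each string $p$ -- these are the only ones that can still participate in an inferiority gadget placed above -- while all interior selection gadgets, whose incident edges have been completed lower, are renamed into a single ``dead'' label. The pockets for the various inferiority gadgets attached at a common internal node of $T_{\mathrm{pos}}$ (one per string, at most $r$ of them) are combined by a balanced substructure whose depth adds an extra additive $O(\log t)$ to the tree; together with the $\log t$ depth of $T_{\mathrm{pos}}$ and the constant depth of the leaf pockets this yields the bound $2\log t + 4$. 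An explicit count -- $4r\log N$ labels for the leftmost/rightmost endpoint pairs, additional labels for the two copies inside each mini-gadget, for the selector vertices, and for the inferiority-gadget vertices currently in transit -- yields the claimed budget of $14r\log N - 3$ labels.

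Correctness reduces to a case analysis over the edges of $G$, verifying that each edge (inside a selection, matching, or inferiority gadget, or between such a gadget and a selection gadget) arises at exactly one ancestor via the matrix there, and that no spurious edge is generated by a label collision at a higher node. The main technical obstacle is precisely the renaming bookkeeping: the labels at each node must remain rich enough to realise every still-pending edge at its correct LCA, yet be pruned aggressively enough to never exceed the $14r\log N - 3$ budget, and every matrix must be specified so that it produces exactly the desired edges among the vertices whose LCA it is. Once the renaming scheme is pinned down, the depth and label bounds are immediate from the construction, and correctness follows by inspecting the matrices placed at each internal node against Observations~\ref{obs:infgadget} and~\ref{obs:matching:gadget}.
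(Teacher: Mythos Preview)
Your plan is essentially the paper's: divide-and-conquer on the interval $[t]$ via a balanced tree, constant-depth pockets for the selection and matching gadgets at the leaves, inferiority gadgets hooked in at the split points, and at every internal node only the leftmost and rightmost selection gadgets retain their distinguishing labels while everything else is collapsed to a dead label. The paper uses a ternary split $[x,q-1],\{q\},[q+1,y]$ rather than your binary one, and its explicit label blocks ($L_{\sfS},L_{\sfM},L_{\min},L_{\max}$ of size $2r\log N$ each, $L_{\Inf}$ of size $6r\log N-4$, and one dead label $\ell_0$) line up with your informal count.

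The one genuine gap is the depth accounting for the inferiority pockets. You write that the $r$ gadgets at a common internal node ``are combined by a balanced substructure whose depth adds an extra additive $O(\log t)$''. Balancing over $r$ items gives depth $O(\log r)$, not $O(\log t)$, and doing this at each of the $\log t$ levels on a root-to-leaf path would inject a $\log r$ factor into $d$, contradicting $d=2\log t+4$. The fix---and what the paper actually does---is not to balance over $r$ at all: at each split, all vertices of all $r$ inferiority gadgets are attached as \emph{direct leaves} of a single extra node, distinguished purely by the $L_{\Inf}$ labels (this is why $|L_{\Inf}|$ scales with $r$). Then each recursive level contributes exactly two nodes to any root-to-leaf path (the new root $\alpha$ and one intermediate node $\alpha^{<q}$ or $\alpha^{>q}$), so $\log t$ levels give $2\log t$, plus $4$ from the depth-$3$ base pocket for $G^q$ together with one extra node above it.
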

    \ifshort
    \begin{proof}[Sketch of proof]
        First, we prove that the union of the gadgets associated with a position $q\in [t]$ admits a simple tree-model.
        For every $q\in [t]$, we denote by $G^{q}$ the union of the selection gadgets $\sfS_{p}^{q}$ with $p\in [r]$ and the matching gadgets $\Match(p,q)$ with $p\in [r-1]$.
        
        For each $q\in [t]$, we prove that $G^q$ admits a $(3,k)$-tree-model $(T^q,\cM^q,\cR^q,\lambda^q)$ where the tree $T^q$ is constructed as follows.        
        We create the root $a^{q}$ of $T^q$ and we attach all the vertices in the selection gadgets $\sfS_{p}^{q}$ with $p\in [r]$ as leaves adjacent to $a^{q}$.
        Then, for every $p\in [r-1]$, we create a node~$a^{q}_{p}$ adjacent to $a^q$ and for every $(I,J)\in \IJ_p$, we create a node~$a^{q}_{p,I,J}$ adjacent to $a^{q}_{p}$. 
        For each $(I,J)\in\IJ_p$, we make $a^{q}_{p,I,J}$ adjacent to the vertex $v^{q}_{p,I,J}$ and all the vertices in $\sfM^{p,q,I,J}_{p}$ and $\sfM^{p,q,I,J}_{p+1}$. 
        Note that all the vertices in $\Match(p,q)$ are the leaves of the subtree rooted at $a^{q}_{p}$, and the leaves of $T^q$ are exactly the vertices in $G^q$. 
        See \Cref{fig:tree-model-reduction} for an illustration of $T^{q}$.

        \begin{figure}[bht]
            \centering
            \includegraphics[width=0.6\linewidth]{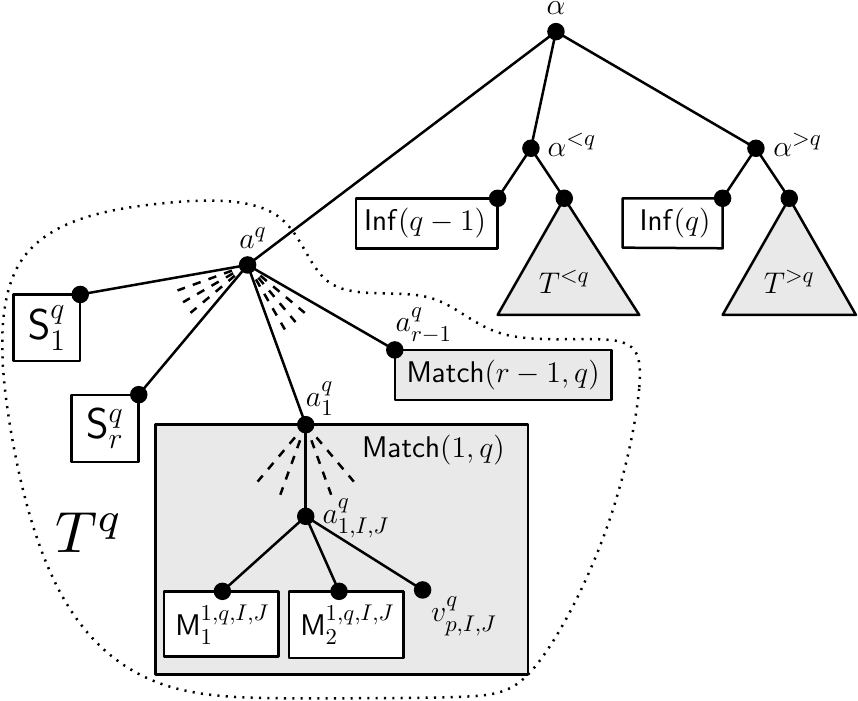}
            \caption{Illustration of the tree $T$ and its subtree $T^{q}$ for the tree-model constructed in \Cref{lem:reduc:shrubdepth}.
                An edge between a white filled rectangle labeled $\mathsf{X}$ and a node $a$ of the tree means that all the vertices in $\mathsf{X}$ are leaves adjacent to $a$. }
            \label{fig:tree-model-reduction}
        \end{figure}

        For every $q\in[t-1]$, we denote by $\Inf(q)$ the union of $\Inf(1,q),\dots,\Inf(r,q)$.
        Moreover, for every interval $[x,y]\subseteq  [t]$, we denote by $G^{x,y}$, the union of the graphs $G^q$ over $q\in [x,y]$, and the inferiority gadgets in $\Inf(q)$ over $q\in [x,y]$ such that $q+1\in [x,y]$.
        
        For every interval $[x,y]$, we prove by induction on $y-x$ that $G^{x,y}$ admits a $(2\log(y-x+1)+4,k)$-tree-model.
        In particular, it implies that $G^{1,t}=G$ admits a $(d,k)$-tree-model. 
        It is also easy to see from our proof that this $(d,k)$-tree-model is computable in polynomial time. For the base case of the induction, when $y=x$, we have $G^{x,y}=G^x$ and we have proved above that it admits a $(3,k)$-tree-model. 
        When $x<y$, let $q = \lfloor (y-x)/2 \rfloor$.
        We use the induction hypothesis to obtain:
        \begin{itemize}
            \item A $(2\log(q - x)+4,k)$-tree model $(T^{<q},\cR^{<q},\cM^{<q},\lambda^{<q})$ for $G^{x,q-1}$.
            \item A $(2\log(y-q)+4,k)$-tree-model $(T^{>q},\cR^{>q},\cM^{>q},\lambda^{>q})$ for $G^{q+1,y}$.
        \end{itemize} 
        Then, we construct a $(4+2\log(y-x+1),k)$-tree-model $(T,\cR,\cM,\lambda)$ of $G^{x,y}$ from the tree-models of $G^{x,q-1}, G^{q+1,y}$, but also the $(3,k)$-tree-model $(T^q,\lambda^q,\cR^q,\cM^q)$ of $G^{q}$.
        To obtain $T$, we create the root $\alpha$ of $T$ and we make it adjacent to $a^{q}$, the root of $T^{q}$, and two new vertices: $\alpha^{<q}$ and $\alpha^{>q}$.
        We make $\alpha^{<q}$ adjacent to the root of $T^{<q}$ and to all the vertices in $\Inf(q-1)$.
        Symmetrically, we make $\alpha^{>q}$ adjacent to the root of $T^{>q}$ and to all the vertices in $\Inf(q)$.
        See \Cref{fig:tree-model-reduction} for an illustration of $T$.
    \end{proof}
    \fi
    \iflong
    \begin{proof}
        Let $L_{\sfS},L_{\sfM},L_{\min},L_{\max},L_{\Inf}$ and $\{\ell_0\}$ be disjoint subsets of $[k]$ such that each set among $L_{\sfS},L_{\sfM},L_{\min},L_{\max},$ has size $2r\log N$ and $L_{\Inf}$ has size $6r\log N -4$. 
        First, we prove that the union of the gadgets associated with a position $q\in [t]$ admits a simple tree-model.
        For every $q\in [t]$, we denote by $G^{q}$ the union of the selection gadgets $\sfS_{p}^{q}$ with $p\in [r]$ and the matching gadgets $\Match(p,q)$ with $p\in [r-1]$.
        
        \begin{claim}\label{claim:reduction}
            For every $q\in [t]$, we can construct in polynomial time a $(3,k)$-tree-model $(T^q,\cM^q,\cR^q,\lambda^q)$ for $G^{q}$.
        \end{claim}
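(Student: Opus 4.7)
The plan is to explicitly exhibit the tree $T^q$ together with a labeling, renaming functions, and adjacency matrices, and verify that the resulting tuple is a valid $(3,k)$-tree-model of $G^q$. The tree $T^q$ will have root $a^q$ at depth $0$; for each $p\in[r]$ the $2\log N$ vertices of the selection gadget $\sfS_p^q$ are attached as leaf-children of $a^q$; for each $p\in[r-1]$ there is an internal node $a_p^q$ which is a child of $a^q$; and under each $a_p^q$, for every $(I,J)\in\IJ_p$, there is an internal node $a_{p,I,J}^q$ whose leaf-children are the $4\log N$ vertices of $\sfM_p^{p,q,I,J}$ and $\sfM_{p+1}^{p,q,I,J}$ together with $v_{p,I,J}^q$. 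This yields a tree of depth exactly $3$ whose leaves are precisely $V(G^q)$.

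For the labeling, I would assign to each leaf of $\sfS_p^q$ a label from a set $L_\sfS$ of size $2r\log N$ that encodes the triple $(p,\ell,x)$ identifying it as the $x$-endpoint of the $\ell$-edge. Each leaf of $\sfM_{p^\star}^{p,q,I,J}$ would receive an ``auxiliary'' label encoding only the local side $p^\star-p\in\{0,1\}$ and the pair $(\ell,x)$, and $v_{p,I,J}^q$ would receive a single dedicated label $v$. Crucially, these $4\log N+1$ auxiliary labels can be reused across all pairs $(I,J)\in\IJ_p$ because the matrix attached to node $a_{p,I,J}^q$ is allowed to depend on the node: I would set $M_{a_{p,I,J}^q}$ to activate (i)~the $\log N$ matching edges inside each of $\sfM_p^{p,q,I,J}$ and $\sfM_{p+1}^{p,q,I,J}$, and (ii)~an edge between $v$ and a leaf labeled $(\text{side}=0,\ell,x)$ iff $x$ differs from the $\ell$-th bit of $I-1$, and symmetrically for side~$1$ with $J$. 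The matrix $M_{a_p^q}$ only needs to encode the clique on $\{v_{p,I,J}^q:(I,J)\in\IJ_p\}$, so I would set $M_{a_p^q}[v,v]=1$ and every other entry to $0$.

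The renaming functions along the edges $a_p^q a^q$ are the critical piece of bookkeeping. Along each such edge I would relabel the auxiliary ``side/$\ell$/$x$'' labels bubbling up from the subtree rooted at $a_p^q$ into fresh labels in a set $L_\sfM$ that now additionally records the index $p$ and the absolute side $p^\star\in\{p,p+1\}$, using $4(r-1)\log N$ labels. The $v$-labels are all renamed to a single inert label whose row in $M_{a^q}$ is zero, ensuring no spurious edges from $v$-vertices appear at the root. With this in place, the matrix $M_{a^q}$ can (i)~turn on the matching edges within each $\sfS_p^q$ by setting $M_{a^q}[(\sfS,p,\ell,0),(\sfS,p,\ell,1)]=1$, and (ii)~connect each $(\sfS,p,\ell,x)$-labeled leaf to exactly the bubbled-up vertices whose absolute side equals $p$ and which correspond to the $(1-x)$-endpoint of the $\ell$-edge, whether these come from the subtree $a_{p-1}^q$ or $a_p^q$; no spurious edges appear between different $a_p^q$-subtrees, since their $\sfM$-labels disagree on the encoded index $p$. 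The total label count is $2r\log N+4(r-1)\log N+\mathcal{O}(1)$, which fits comfortably within $k=14r\log N-3$.

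The main obstacle is the meticulous verification that at every internal node of $T^q$ the chosen matrix, combined with the labels obtained from the descendants via the prescribed renamings, reproduces exactly the edges of $G^q$ whose least common ancestor in $T^q$ is that node, and no others. This reduces to a case analysis node-by-node: at $a_{p,I,J}^q$, we check the internal matching structure of each copy and the adjacencies of $v_{p,I,J}^q$ via the explicit bit-patterns of $I,J$; at $a_p^q$, we check the $v$-clique; and at $a^q$, we check the intra-$\sfS_p^q$ matching edges and the cross-edges $\sfS_{p^\star}^q\leftrightarrow\sfM_{p^\star}^{p,q,\cdot,\cdot}$. Polynomial-time constructibility of the entire tree-model is immediate, since $|V(T^q)|$ and the sizes of all matrices are bounded by $(r+N)^{\mathcal{O}(1)}$.
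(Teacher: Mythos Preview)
Your proposal is correct and uses exactly the same depth-$3$ tree $T^q$ as the paper. The only difference is in the bookkeeping of labels: you use a two-stage scheme (local ``side/$\ell$/$x$'' labels at the leaves, then renamed along the edge $a_p^q a^q$ into global labels recording both $p$ and $p^\star$) and write the matrices out explicitly, whereas the paper assigns each $\sfM$-vertex its global label $(p^\star,\ell,x)$ from the outset, takes every renaming function to be the identity, and then merely argues that at each internal node any two vertices sharing a label have identical neighborhoods outside their own sibling subtree, which guarantees the existence of suitable matrices without spelling them out. Your scheme costs a few more labels (you should also count the $4\log N$ local labels used at depth~$3$, not just the $4(r-1)\log N$ renamed ones), but the budget $k=14r\log N-3$ absorbs this easily; note also that encoding $p$ in addition to $p^\star$ is unnecessary, since the $\sfM$-to-$\sfM$ entries of $M_{a^q}$ can simply all be set to zero.
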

        \begin{claimproof}
            Let $q\in [t]$. 
            We create the root $a^{q}$ of $T^q$ and we attach all the vertices in the selection gadgets $\sfS_{p}^{q}$ with $p\in [r]$ as leaves adjacent to $a^{q}$.
            Then, for every $p\in [r-1]$, we create a node $a^{q}_{p}$ adjacent to $a^q$ and for every $(I,J)\in \IJ_p$, we create a node $a^{q}_{p,I,J}$ adjacent to $a^{q}_{p}$. 
            For each $(I,J)\in\IJ_p$, we make $a^{q}_{p,I,J}$ adjacent to the vertex $v^{q}_{p,I,J}$ and all the vertices in $\sfM^{p,q,I,J}_{p}$, $\sfM^{p,q,I,J}_{p+1}$. 
            Note that all the vertices in $\Match(p,q)$ are the leaves of the subtree rooted at $a^{q}_{p}$, and the leaves of $T^q$ are exactly the vertices in $G^q$. 
            See \Cref{fig:tree-model-reduction} for an illustration of $T^{q}$.
            
            We define $\lambda^q$ as follows:
            \begin{itemize}
                \item $\lambda^q$ maps each vertex in $\sfS_{1}^{q},\dots,\sfS_{r}^{q}$ to a unique label in $L_{\sfS}$.
                
                \item For every $(p,i,x)\in [r]\times [\log N] \times \{0,1\}$, $\lambda^q$ maps all the $x$-endpoints of the $i$-edges from the different copies $\sfM_{p}^{p',q,I,J}$ of $\sfS_{p}^{q}$ to a unique label in $L_{\sfM}$.
                
                \item We have $\lambda^{q}(v_{p,I,J}^{q})=\ell_0$ for every $p\in[r-1]$ and $(I,J)\in \IJ_p$.
            \end{itemize}
            We define $\cR^q=\{\rho_{ab} \mid ab\in E(T^q)\}$ such that $\rho_{ab}$ is the identity function for every $ab\in E(T^q)$.
            It follows that $\lambda^q_a=\lambda^q$ for every node $a$ of $T^q$.
            We finish the construction of the tree-model of $G^q$ by proving that there exists a family of matrices $\cM^q$ such that $(T^q,\cM^q,\cR^q,\lambda^q)$ is a $(3,k)$-tree-model of $G^q$.
            For doing so, we simply prove that the property $\varphi(a,\ell)$ holds for every label $\ell\in [k]$ and internal node $a$ of $T^q$ with children $b_1,\dots,b_c$,  where $\varphi(a,\ell)$ is true if:
            \begin{itemize}
                \item\label{item:property:model} For every $u\in V_{b_i}$ and $v\in V_{b_j}$ with $\lambda^q_a(u)=\lambda^q_a(v)=\ell$, we have $N(u)\cap ( V_a\setminus ( V_{b_i}\cup V_{b_j}))=N(v)\cap ( V_a\setminus ( V_{b_i}\cup V_{b_j}))$.
            \end{itemize}
            Observe that $\varphi(a,\ell)$ trivially holds when there is at most one vertex labeled $\ell$ in $V_a$.
            Consequently, $\varphi(a^{q}_{p,I,J},\ell)$ is true for every node $a^{q}_{p,I,J}$ and $\ell \in [k]$. Moreover, $\varphi(a, \ell)$ is true for every internal node $a$ and every $\ell\in L_{\sfS}\cup L_{\min}\cup L_{\max}\cup L_{\Inf}$.
            Recall that $\lambda^q_a=\lambda^q$ for every node $a$ of $T^q$.
            For every pair $(u,v)$ of distinct vertices in $V(G^q)$, if $\lambda^q(u)=\lambda^q(v)$ then either:
            \begin{itemize}
                \item There exists $(p,i,x)\in [r]\times [\log N]\times \{0,1\}$ such that $u$ and $v$ are the $x$-endpoints of the $i$-edges in respectively $\sfM_{p}^{p^\star,q,I,J}$ and $\sfM_{p}^{p',q,I',J'}$ for some $p^\star,p'\in \{p-1,p\}$, $(I,J)\in \IJ_{p^{\star}}$ and $(I',J')\in \IJ_{p'}$. 
                Observe that the parent of $u$ is $a^{q}_{p^\star,I,J}$ and the neighbors of $u$ in $V_{a^q_{p^\star}}$ are all children of $a^{q}_{p^\star,I,J}$.
                Indeed, the only neighbors of $u$ in $V_{a^q_{p^\star}}$ are the $(1-x)$-endpoint of the $i$-edge of $\sfM_{p}^{p^\star,q,I,J}$ and potentially  $v_{p^\star,I,J}^{q}$.			 
                Symmetrically, $v$ belongs to $V_{a^{q}_{p',I',J'}}$, its parent is $a^{q}_{p',I',J'}$  and the only neighbors of $v$ in $V_{a^q_{p'}}$ are children of $a^{q}_{p',I',J'}$.
                Moreover, observe that $u$ and $v$ have both only one neighbor in $V_{a^q} \setminus V_{a^q_p}$ which is the $(1-x)$-endpoint of the $i$-edge of $\sfS^{q}_{p}$.
                We deduce that $\varphi(a^q,\ell)$ and $\varphi(a^q_{p},\ell)$ and $\varphi(a^q_{p,I,J},\ell)$ are true for every $\ell\in L_{\sfM}$.
                
                \item We have $u=v^{q}_{p,I,J}$ and $v=v^{q}_{p',I',J'}$.
                In that case, $u$ is a child of $a^{q}_{p,I,J}$ and $v$ a child of $a^{q}_{p',I',J'}$.
                The only neighbors of $u$ and $v$ that are not children of $a^{q}_{p,I,J}$ nor $a^{q}_{p',I',J'}$ are all the other vertices of label $\ell_0$.
                Thus, $\varphi(a,\ell_0)$ holds for every internal node $a$ of $T^q$.
            \end{itemize}
            We conclude that $\varphi(a,\ell)$ holds for every internal node $a$ 	of $T^q$ and every $\ell\in [k]$.
            Hence, there exists a family of matrices $\cM^q$ such that $(T^q,\cR^q,\cM^q,\lambda^q)$ is a $(3,k)$-tree model of $G^q$ for every $q\in [t]$.
        \end{claimproof}
        
        \begin{figure}[bht]
            \centering
            \includegraphics[width=0.7\linewidth]{tree-model-reduction}
            \caption{Illustration of the tree $T$ and its subtree $T^{q}$ for the tree-model constructed in \Cref{lem:reduc:shrubdepth}.
                An edge between a white filled rectangle labeled $\mathsf{X}$ and a node $a$ of the tree means that all the vertices in $\mathsf{X}$ are leaves adjacent to $a$. }
            \label{fig:tree-model-reduction}
        \end{figure}
        
        For every $q\in[t-1]$, we denote by $\Inf(q)$ the union of $\Inf(1,q),\dots,\Inf(r,q)$.
        Moreover, for every interval $[x,y]\subseteq  [t]$, we denote by $G^{x,y}$, the union of the graphs $G^q$ over $q\in [x,y]$ and the inferiority gadgets in $\Inf(q)$ over $q\in [x,y]$ such that $q+1\in [x,y]$.
        We prove by induction that for every interval $[x,y]\subseteq  [t]$, there exists a $(2\log(y-x+1)+4,k)$-tree-model $(T,\cR,\cM,\lambda)$ of $G^{x,y}$ such that given the root $\alpha$ of $T$, the following properties are satisfied:
        \begin{enumerate}[(A)]
            \item\label{item:minlabel} $\lambda_{\alpha}$ maps each vertex from $\sfS^{x}_{1},\dots,\sfS^{x}_{r}$ to a unique label in $L_{\min}$.
            \item\label{item:maxlabel} $\lambda_{\alpha}$ maps each vertex from $\sfS^{y}_{1},\dots,\sfS^{y}_{r}$ to a unique label in $L_{\min}$. 
        \end{enumerate}
        When $x=y$, we only require that exactly one property among (\ref{item:minlabel}) and (\ref{item:maxlabel}) is  satisfied (we can choose which one as it is symmetric).
        The induction is on $y-x$. 
        The base case is when $x=y$, in which case $G^{x,y}=G^{x}$ and we simply modify the $(3,k)$-tree-model $(T^x,\lambda^x,\cR^x,\cM^x)$ for $G^{x}$ as follows. 
        We add to $T^{x}$ a new root $\alpha$ adjacent to the former root $a^{x}$.
        We add to $\cR^x$ the function $\rho_{\alpha a^{x}}$ that bijectively maps $L_{\sfS}$ to $L_{\min}$ (or $L_{\max}$ if we want to satisfy (\ref{item:maxlabel}) rather than (\ref{item:minlabel})) and every label not in $L_{\sfS}$ to $\ell_0$. Finally, we add $M_\alpha$ the zero $k\times k$-matrix to $\cM^x$. After these modifications, it is easy to see that $(T^x,\lambda^x,\cR^x,\cM^x)$ is a $(4,k)$-tree model of $G^{x,y}$ that satisfies (\ref{item:minlabel}) or (\ref{item:maxlabel}).        
        
        Now assume that $x< y$ and that $G^{x',y'}$ admits the desired tree-model for every $[x',y']$ strictly included in $[x,y]$.
        Let $q = \lfloor (y-x)/2 \rfloor$.
        By induction hypothesis, there exist:
        \begin{itemize}
            \item A $(2\log(q - x)+4,k)$-tree model $(T^{<q},\cR^{<q},\cM^{<q},\lambda^{<q})$ for $G^{x,q-1}$ with the desired properties (if $x=q -1$, we require (\ref{item:minlabel}) to be satisfied).
            \item A $(2\log(y-q)+4,k)$-tree-model $(T^{>q},\cR^{>q},\cM^{>q},\lambda^{>q})$ for $G^{q+1,y}$ with the desired properties (if $y=q +1$, we require (\ref{item:maxlabel}) to be satisfied).
        \end{itemize} 
        For the sake of legibility, we assume that $x$ is different from $q$, which implies that $G^{x,q-1}$ is not empty graph (note that $G^{q+1,y}$ is not empty as $x<y$ and $q = \lfloor (y-x)/2 \rfloor$). 
        We lose some generality with this assumption, but we can easily deal with the case $x=q$ with some simple modifications on the following construction (i.e. removing some nodes and changing some renaming functions).
        
        In the following, we construct a $(4+2\log(y-x+1),k)$-tree-model $(T,\cR,\cM,\lambda)$ of $G^{x,y}$ from the above tree-models of $G^{x,q-1}, G^{q+1,y}$, but also the $(3,k)$-tree-model $(T^q,\lambda^q,\cR^q,\cM^q)$ of $G^{q}$ given by \Cref{claim:reduction}.
        To obtain $T$, we create the root $\alpha$ of $T$ and we make it adjacent to $a^{q}$ the root of $T^{q}$ and two new vertices: $\alpha^{<q}$ and $\alpha^{>q}$.
        We make $\alpha^{<q}$ adjacent to the root of $T^{<q}$ and to all the vertices in $\Inf(q-1)$.
        Symmetrically, we make $\alpha^{>q}$ adjacent to the root of $T^{>q}$ and to all the vertices in $\Inf(q)$.
        See \Cref{fig:tree-model-reduction} for an illustration of $T$.
        
        We define $\lambda$ as follows:
        \[ \lambda(v)= \begin{cases}
            \lambda^{<q}(v) & \text{ if } v\in V(G^{x,q-1}),\\
            \lambda^{>q}(v) & \text{ if } v\in V(G^{q+1,y}),\\
            \lambda^q(v) & \text{ if } v\in V(G^{q}),\\
            \lambda'(v) &\text{ otherwise (when $v$ belongs to $\Inf(q-1)$ or $\Inf(q))$}
        \end{cases} \]
        where $\lambda'$ maps the vertices in $G^{x,y}$ from $\Inf(q-1)$ and $\Inf(q)$ to $L_{\Inf}$ such that for each label $\ell$ of $L_{\Inf}$, there exists $q'\in \{q-1,q\}$ and $p\in [r]$ such that $\ell$ is associated with either: (1)~$v^{x,p,q'}_{i}$ for some $i\in [\log N-1]$ and $x\in \{0,1\}$ or (2)~all the vertices in $V^{01,p,q'}_{i}$ for some $i\in [\log N]$.
        Since $\abs{L_{\Inf}}= 6r\log( N) -4$, we have enough labels for doing so.
        The family of renaming function $\cR$ is obtained from the union of $\cR^{<q}\cup \cR^{>q} \cup \cR^{q}$ by adding for every edge $e$ in $T$ that is not in $T^{q}, T^{<q}$ or $T^{>q}$ a function $\rho_e$ defined as follows:
        \begin{itemize}
            \item $\rho_e$ is the identify function when $e$ is an edge adjacent to a leaf from $\Inf(q-1)$ or $\Inf(q)$.
            
            \item $\rho_{e}$ maps every label in $L_{\min}\cup L_{\max}$ to itself and every other label to $\ell_0$, when $e$ is the edge between $a^{\circledast q}$ and the root of $T^{\circledast q}$ for $\circledast\in \{<,>\}$.
            
            \item $\rho_e$ maps every label in $L_{\sfS}$ to itself and every other label to $\ell_0$ when $e=\alpha a^{q}$.
            
            \item $\rho_{e}$ maps every label in $L_{\min}\cup L_{\Inf}$ to itself and every other label to $\ell_0$, when $e=\alpha \alpha^{<q}$.
            
            \item $\rho_{e}$ maps every label in $L_{\max}\cup L_{\Inf}$ to itself and every other label to $\ell_0$, when $e=\alpha \alpha^{>q}$.
        \end{itemize}
        Observe that $\lambda_{\alpha}$ satisfies Properties (\ref{item:minlabel}) and (\ref{item:maxlabel}).
        As $\lambda_{\alpha^{<q}}$ satisfies (\ref{item:minlabel}), this function maps every vertex from $\sfS^{x}_{1},\dots,\sfS^{x}_{r}$ to a unique label in $L_{\min}$.
        The above renaming functions guarantee that the only vertices mapped to a label in $L_{\min}$ by $\lambda_{\alpha}$ are from $V_{\alpha^{<q}}$. We deduce that Property~(\ref{item:minlabel}) holds and symmetrically, Property~(\ref{item:maxlabel}) holds too.
        
        Now we prove that a family of matrices $\cM$ exists such that $(T,\cR,\cM,\lambda)$ is a tree model of $G^{x,y}$.
        As before, we prove that $\varphi(a,\ell)$ holds for every internal node $a$ of $T$ and every label $\ell\in [k]$.
        Since our construction is based on tree-models for $G^{q}, G^{x,q-1}$ and $G^{q+1,y}$, we only need to prove that $\varphi(a,\ell)$ holds for every $a\in\{\alpha^{<q},\alpha^{>q},\alpha\}$ and $\ell\in [k]$.
        
        We first deal with $\alpha^{<q}$.
        Let us describe the labeling function $\lambda_{\alpha^{<q}}$.
        Remember that $(T^{<q},\cR^{<q},\cM^{<q},\lambda^{<q})$ satisfies Properties (\ref{item:minlabel}) and (\ref{item:maxlabel}), or just (\ref{item:minlabel})  when $x=q-1$. Moreover, the renaming function associated with the edge between $\alpha^{<q}$ and $T^{<q}$ preserves the labels in $L_{\min}\cup L_{\max}$ and maps the other labels to $\ell_0$.
        Thus, $\lambda_{\alpha^{<q}}$ assign every vertex in $\sfS^{x}_{1},\sfS^{q-1}_{1},\dots,\sfS^{x}_{r},\sfS^{q-1}_{r}$ to a unique label in $L_{\min}\cup L_{\max}$.
        We deduce that for every a pair $(u,v)$ of distinct vertices in $V_{\alpha^{<q}}$, if $\lambda_{\alpha^{<q}}$ assigns $u$ and $v$ to the same label $\ell\in [k]$, then either:
        \begin{itemize}
            \item $u,v\in V^{01,p,q-1}_{i}$ for some $p\in [r]$ and $i\in [\log N]$. In this case, $u$ and $v$ are false twins by construction of $\Inf(p,q-1)$---i.e. $N(u)=N(v)$---and we deduce that $\varphi(\alpha^{<q}, \ell)$ holds.
            
            \item $\ell=\ell_0$ and $u,v$ are in $V(G^{x,q-1})$ and not from $\sfS^{x}_{1},\sfS^{q-1}_{1},\dots,\sfS^{x}_{r},\sfS^{q-1}_{r}$.
            Then, all the neighbors of $u$ and $v$ are in $G^{x,q-1}$ and thus $N(u)\setminus V(G^{x,q-1}) = N(v) \setminus  V(G^{x,q-1})$. We deduce that $\varphi(\alpha^{<q}, \ell)$ holds in this case too.
        \end{itemize}
        We conclude that $\varphi(\alpha^{<q},\ell)$ holds for every $\ell\in [k]$ and with symmetric arguments, we can prove that $\varphi(\alpha^{>q},\ell)$ holds also for every $\ell\in [k]$.
        
        For $\alpha$, notice that for every $a\in\{a^{q},\alpha^{<q},\alpha^{>q}\}$, the vertices in $V_a$ labeled $\ell_0$ by $\lambda_{\alpha}$ have neighbors only in $V_{a}$, hence $\varphi(a,\ell_0)$ holds.
        Furthermore, every  label $\ell$ in $L_{\min}\cup L_{\max} \cup L_{\sfS}$ is mapped by $\lambda_{\alpha}$ to a unique vertex in $V_{\alpha}$, so $\varphi(\alpha,\ell)$ holds. 
        Finally, each label in $L_{\Inf}$ is mapped by $\lambda_{\alpha}$ to a unique vertex or to all the vertices in $V^{01,p,q'}_{i}$ for some $p\in[r], q'\in \{q-1,q\}$ and $i\in [\log N]$. Since the vertices in $V^{01,p,q'}_{i}$ are false twins, we deduce that $\varphi(\alpha,\ell)$ holds for every $\ell\in L_{\Inf}$.
        We conclude that $\varphi(\alpha,\ell)$ holds for every $\ell\in [k]$ and thus there exists a family $\cM$ of matrices such that $(T,\cM,\cR,\lambda)$ is a tree-model of $G^{x,y}$.
        
        It remains to prove that the depth of $T$ is at most $d=2\log(y-x+1)+4$.
        By definition of $q$, both $q-x$ and $y-q$ are smaller than $(y-x+1)/2$.
        Thus, $\log(q-x)$ and $\log(y-q)$ are smaller than $\log(y-x+1)-1$.
        Now observe that the depth of $T$ is the maximum between (i)~the depth of $T^{q}$ plus 1 which is 4, (ii)~the depth of $T^{<q}$ plus 2, and (iii)~the depth of $T^{>q}$ plus 2.
        The depth of $T^{<q}$ is at most $2\log(q-x)+4$.
        Since $\log(q-x)\leq \log(y-x+1)-1$, the depth of $T^{<q}$ plus 2 is at most $2\log(y-x+1)+4$.
        Symmetrically, the depth of $T^{>q}$ plus 2 is also upper bounded by $2\log(y-x+1)+4$.
        It follows that the depth of $T$ is at most $2\log(y-x+1)+4$. 
        
        We conclude that for every interval $[x,y]$, $G^{x,y}$ admits a $(2\log(y-x+1)+4,k)$-tree-model.
        In particular, it implies that $G^{1,t}=G$ admits a $(d,k)$-tree-model.
        It is easy to see from our proof that this $(d,k)$-tree-model is computable in polynomial time.
    \end{proof}

    We are now ready to prove \Cref{thm:lowerbound}.
    
    \begin{proof}[Proof of \Cref{thm:lowerbound}]
        Let $\delta$ be an unbounded and computable function.
        Assume towards a contradiction that there exists an algorithm $\cA$ that solves the {\sc{Independent Set}} problem in graphs supplied with $(d,k)$-tree-models satisfying $d\leq \delta(k)$ that runs in time $2^{\Oh(k)}\cdot n^{\Oh(1)}$ and uses $n^{\Oh(1)}$ space.
        
        Since $\delta$ is unbounded and computable and $\log$ is monotone, there exists an unbounded and computable function $\delta'$ such that for all sufficiently large $N,r\in \bN$, we have \[2\log(\delta'(N))+4 \leq \delta( 14r\log N-3).\]
        Let $(N,t,\Sigma,s_1,\dots,s_r)$ be an instance of \textsc{LCS} such that $t\leq \delta'(N)$.
        Our reduction provides us with a graph $G$ and an integer $\obj$ such that the following holds:
        \begin{itemize}
            \item $G$ has $\Oh(rtN^2\log N)$ vertices and thus it can be constructed in $M^{\Oh(1)}$ time where $M$ is the total bitsize of $(N,t,\Sigma,s_1,\dots,s_r)$.
            Indeed, the selection gadgets are made of $2rt\log N$ vertices, the inferiority gadgets have exactly $r(t-1)(2\log N + \log N(1+\log N)/2)$ vertices and the matching gadgets consist of $\sum_{p\in [r-1]} t \cdot 2\abs{\IJ_{p}} \cdot(1 + 2\log N)$ vertices.
            \item By \Cref{lem:reduc:correctness}, $G$ admits an independent set of size at least $\obj$ iff $s_1,\dots,s_r$ admits a common subsequence of size $t$.
            \item By \Cref{lem:reduc:shrubdepth}, we can construct in polynomial time a $(d,k)$-tree-model of $G$ with $d=2\log t + 4$ and $k=14r\log N-3$.
        \end{itemize}
        Observe that we have 
        \[ d=2\log t+4 \leq 2\log(\delta'(N))+4 \leq \delta( 14r\log N-3 )=\delta(k). \]
        Consequently, we can run $\cA$ to check whether $G$ admits an independent set of size at least $\obj$ in time $2^{\Oh(k)}\cdot n^{\Oh(1)}$ and space $n^{\Oh(1)}$.
        Since $k=14r\log N-3$ and $n=\Oh(rtN^2\log N)\leq M^{\Oh(1)}$, it follows that we can solve $(N,t,\Sigma,s_1,\dots,s_r)$ in time $N^{\Oh(r)}\cdot M^{\Oh(1)} \leq M^{\Oh(r)}$ and space $M^{\Oh(1)}$.
        As this can be done for every instance $(N,t,\Sigma,s_1,\dots,s_r)$ where $t\leq \delta'(N)$, it contradicts \Cref{conj:main}.
    \end{proof}
    \fi

\section{Fixed-Parameter Algorithms for Metric Dimension and Firefighting}

\ifshort
Theorem~\ref{thm:FPTproblems}---and in particular the fixed-parameter tractability of \textsc{Metric Dimension} and \textsc{Firefighter} parameterized by shrub-depth---can be obtained by combining known results about these problems~\cite{BazganCCFFL14,GimaHKKO22} with a bound on the maximum length of induced paths in graph classes of bounded shrubdepth~\cite[Theorem~3.7]{GanianHNOM19}.
These results contrast the \NP-hardness of both problems on graphs of bounded pathwidth~\cite{ChlebikovaC17,LiP22}.
\fi

\iflong
    The {\sc Firefighter} problem on a graph $G$ is the following. At time $0$, a vertex $r \in V(G)$ catches fire. Then at each time step $i \geq 1$, first a firefighter is permanently placed on a vertex that is not currently on fire. This vertex is now permanently \emph{protected}. Then the fire spreads to all unprotected neighbors of all vertices currently on fire. This process ends in the time step when the fire no longer spreads to new vertices. All vertices that do not catch fire during this process (including the protected vertices) are called \emph{saved}; the rest are called \emph{burned}. The goal is to maximize the number of saved vertices.
    
    Bazgan et al.~\cite{BazganCCFFL14} showed that the {\sc Firefighter} problem is fixed-parameter tractable when parameterized by the treewidth of the input graph and the number $k$ of vertices that may be protected during the process. In this result, one first writes an $\mathsf{MSO}_2$ formula $\varphi(X)$ that expresses that a set of vertices $X$ can be saved assuming that $k$ vertices can be protected, and then applies the optimization variant of Courcelle's Theorem, due to Arnborg et al.~\cite{ArnborgLS91}, to find the largest vertex subset $A$ for which $\varphi(A)$ is satisfied. By inspection of the formula, we can see that it does not quantify over edge sets, hence $\varphi(X)$ is in fact an $\mathsf{MSO}_1$ formula. Then, by replacing the usage of the algorithm of Arnborg et al.\ with the algorithm of Courcelle et al.~\cite{CourcelleMR00}, we conclude that the {\sc Firefighter} problem is fixed-parameter tractable when  parameterized by the cliquewidth of the input graph and the number of vertices that may be protected.

    We now recall that in graphs with a $(d,k)$-tree model, any induced path has length at most $\Oh(2^{k^{d+1}})$ (this follows from~\cite[Theorem~3.7]{GanianHNOM19}; the bound accounts for our slightly different definition of a tree model). This implies that the firefighting game has at most $\Oh(2^{k^{d+1}})$ time steps and the same amount of vertices can be protected. 
    Hence, recalling that a graph with a $(d,k)$-tree model has bounded cliquewidth~\cite[Proposition~3.4]{GanianHNOM19}, we immediately obtain the following result.
    
    \begin{theorem}
    The {\sc Firefighter} problem is fixed-parameter tractable when parameterized by $d$ and $k$ on graphs provided with a $(d, k)$-tree-model.
    \end{theorem}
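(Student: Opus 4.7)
The plan is to exploit two structural facts about graphs that admit $(d,k)$-tree-models. First, by~\cite[Proposition~3.4]{GanianHNOM19}, such graphs have cliquewidth bounded in terms of $d$ and $k$. Second, by~\cite[Theorem~3.7]{GanianHNOM19}, every induced path in such a graph has length at most some function $L(d,k)$; since every shortest path is induced, the diameter of the graph is also bounded by $L(d,k)$. These two facts together strongly suggest that both problems can be tackled by writing an $\mathsf{MSO}_1$ formula of size depending only on $d$ and $k$, and then invoking the optimization variant of the model-checking algorithm of Courcelle, Makowsky and Rotics~\cite{CourcelleMR00} on graphs of bounded cliquewidth.

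For {\sc{Firefighter}}, I would first observe that the process must terminate within at most $L(d,k)$ rounds: the fire halts as soon as it cannot spread farther, and no vertex is at graph-distance more than $L(d,k)$ from the initial burning vertex. Consequently, at most $L(d,k)$ firefighters are ever placed. I would then take the formula $\varphi_k(X)$ of Bazgan et al.~\cite{BazganCCFFL14} stating that ``the vertex set $X$ can be saved when the budget is $k$ firefighters'', observe (as the authors themselves acknowledge when discussing $\mathsf{MSO}_2$ quantification) that it never actually quantifies over edge sets and is thus an $\mathsf{MSO}_1$ formula, and maximize $|X|$ subject to $\varphi_{L(d,k)}(X)$ using the optimization variant of~\cite{CourcelleMR00}.

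For {\sc{Metric Dimension}}, I would turn the diameter bound into a short $\mathsf{MSO}_1$ encoding. For each integer $i \in \{0,1,\dots,L(d,k)\}$ the predicate $\mathrm{dist}_i(u,w)$, stating that $u$ and $w$ are at graph-distance exactly $i$, is expressible in first-order logic by asserting existence of a walk of length $i$ and absence of one of length strictly smaller; the size of this formula depends only on $i$. Then a set $S$ is a resolving set iff
\[
\forall u\, \forall v \Bigl[u \neq v \rightarrow \exists w \bigl(w \in S \wedge \bigvee_{0 \leq i < j \leq L(d,k)} \mathrm{dist}_i(u,w) \wedge \mathrm{dist}_j(v,w)\bigr)\Bigr],
\]
which is an $\mathsf{MSO}_1$ formula $\psi(S)$ whose size is bounded by a function of $d$ and $k$ only. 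Minimizing $|S|$ subject to $\psi(S)$ via the optimization variant of the Courcelle--Makowsky--Rotics algorithm yields a fixed-parameter algorithm.

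The main obstacle is merely bookkeeping: one has to verify cleanly that the $\mathrm{dist}_i$ predicates and the Bazgan et al.\ formula are genuinely $\mathsf{MSO}_1$, and that their combined size indeed depends only on $d$ and $k$ through $L(d,k)$. Once this is done, the argument is a direct composition of the diameter bound, the cliquewidth bound, and the Courcelle--Makowsky--Rotics meta-theorem, so no further conceptual difficulty is expected.
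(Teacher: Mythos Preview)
Your proposal for {\sc Firefighter} is correct and follows essentially the same route as the paper: bound the number of rounds (hence the number of protected vertices) via the induced-path-length bound from~\cite{GanianHNOM19}, observe that the Bazgan et al.\ formula is in fact $\mathsf{MSO}_1$, and invoke the optimization meta-theorem of Courcelle--Makowsky--Rotics on graphs of bounded cliquewidth. The {\sc Metric Dimension} discussion is extraneous here (it belongs to a separate theorem), though your argument for it is also fine and just unpacks what the paper obtains by citing Gima et al.~\cite{GimaHKKO22}.
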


    We observe that this is in contrast to the complexity of the {\sc Firefighter} problem on graphs of bounded treewidth. The {\sc Firefighter} problem is in fact already NP-hard on trees of maximum degree~$3$ (which are graphs of treewidth~$1$) \cite{FinbowKMR07} and trees of pathwidth~$3$~\cite{ChlebikovaC17}.

    A similar situation arises for the {\sc Metric Dimension} problem. In {\sc Metric Dimension}, given a graph $G$, we are asked to find a smallest set $Z \subseteq V(G)$ such that for any pair $u,v \in V(G)$, there is a vertex $z \in Z$ such that the distance between $u$ and $z$ and the distance between $v$ and $z$ are distinct. Gima et al.~\cite{GimaHKKO22} observed that {\sc Metric Dimension} is fixed-parameter tractable when parameterized by the cliquewidth and the diameter of the input.
    Since in graphs with a $(d,k)$-tree model, any induced path has length at most $\Oh(2^{k^{d+1}})$ (the bound accounts for our slightly different definition of a tree model), and any such graph has bounded cliquewidth~\cite{GanianHNOM19}, we immediately obtain the~following.

    \begin{theorem}
    The {\sc Metric Dimension} problem is fixed-parameter tractable when parameterized by $d$ and $k$ on graphs provided with a $(d, k)$-tree-model.
    \end{theorem}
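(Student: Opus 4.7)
The plan is to observe that both problems already admit known fixed-parameter algorithms parameterized by cliquewidth together with a suitable auxiliary graph measure, and then to exploit the fact that on graphs provided with a $(d,k)$-tree-model the cliquewidth is bounded by a function of $d$ and $k$, while the auxiliary measure is bounded by a function of $d$ and $k$ as well. The key ingredient on the shrubdepth side is the bound on the length of induced paths: in any graph admitting a $(d,k)$-tree-model, every induced path has length bounded by some function $g(d,k)$, as can be derived from the structural results of Ganian et al.~\cite{GanianHNOM19}. Since the diameter of a connected graph equals the length of its longest shortest path (which is induced), and any simple path in a monotone process such as firefighting is also induced, this single combinatorial bound will drive both applications.

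For \textsc{Firefighter}, I would invoke the result of Bazgan et al.~\cite{BazganCCFFL14}, who express the problem by an MSO formula $\varphi(X)$ stating that the vertex set $X$ can be saved assuming at most $k$ firefighter placements. A careful inspection shows that $\varphi(X)$ does not quantify over edge sets, hence it is in fact MSO$_1$; therefore, instead of the theorem of Arnborg et al., we may apply the optimization variant of Courcelle's theorem for cliquewidth due to Courcelle, Makowsky, and Rotics~\cite{CourcelleMR00}. This yields fixed-parameter tractability parameterized by cliquewidth plus the number of firefighter placements. Now in a graph with a $(d,k)$-tree-model, the fire can spread for at most $g(d,k)$ steps (since any shortest path from the source to an eventually burning vertex is induced), so the number of placements to consider is bounded by $g(d,k)$; combined with the cliquewidth bound from~\cite{GanianHNOM19}, this gives the desired FPT algorithm.

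For \textsc{Metric Dimension}, the plan is analogous: Gima et al.~\cite{GimaHKKO22} proved that the problem is fixed-parameter tractable parameterized by the cliquewidth together with the diameter of the input graph. Since the diameter of a graph admitting a $(d,k)$-tree-model is again bounded by $g(d,k)$ (as it is realized by an induced shortest path), and the cliquewidth is bounded by a function of $d$ and $k$, we immediately obtain fixed-parameter tractability when parameterized by $d$ and $k$.

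The main obstacle---or rather the only non-routine check---is twofold: verifying that the formula used in Bazgan et al.'s reduction genuinely lies in MSO$_1$ so that the cliquewidth-based model checker applies (this requires reading through the formula and eliminating any edge-set quantification, which is straightforward given that the dynamics of firefighting only reference adjacency), and confirming that the bound on the length of induced paths one extracts from~\cite[Theorem~3.7]{GanianHNOM19} is compatible with our (slightly different, because relabelling is permitted) definition of a $(d,k)$-tree-model. The latter is handled by the standard translation between the two variants of tree-models discussed in Section~\ref{sec:prelims}, which inflates the number of labels by an exponential in $d$ but preserves the qualitative bound.
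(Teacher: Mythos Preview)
Your proposal is correct and follows essentially the same route as the paper: invoke the result of Gima et al.~\cite{GimaHKKO22} that \textsc{Metric Dimension} is FPT parameterized by cliquewidth plus diameter, then bound both quantities by functions of $d$ and $k$ using the induced-path bound of~\cite[Theorem~3.7]{GanianHNOM19} and the cliquewidth bound from~\cite{GanianHNOM19}. Your discussion of \textsc{Firefighter} is extraneous to this particular statement (it is a separate theorem in the paper), but it also matches the paper's argument for that result.
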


    This is again in contrast to the complexity of the {\sc Metric Dimension} problem on graphs of bounded treewidth. The {\sc Metric Dimension} problem is in fact already $\mathsf{NP}$-hard on graphs of pathwidth~$24$ \cite{LiP22}.
    \fi

\bibliographystyle{plain}
\bibliography{biblio}

\begin{thebibliography}{10}

\bibitem{AbboudBW15}
Amir Abboud, Arturs Backurs, and Virginia~Vassilevska Williams.
\newblock Tight hardness results for {LCS} and other sequence similarity
  measures.
\newblock In {\em Proc. FOCS 2015}, pages 59--78, 2015.

\bibitem{AllenderCLPT14}
Eric Allender, Shiteng Chen, Tiancheng Lou, Periklis~A. Papakonstantinou, and
  Bangsheng Tang.
\newblock Width-parametrized {SAT:} {T}ime--space tradeoffs.
\newblock {\em Theory Comput.}, 10(12):297--339, 2014.

\bibitem{ArnborgLS91}
Stefan Arnborg, Jens Lagergren, and Detlef Seese.
\newblock Easy problems for tree-decomposable graphs.
\newblock {\em J. Algorithms}, 12(2):308--340, 1991.

\bibitem{BarskySTU07}
Marina Barsky, Ulrike Stege, Alex Thomo, and Chris Upton.
\newblock Shortest path approaches for the longest common subsequence of a set
  of strings.
\newblock In {\em Proc. {BIBE} 2007}, pages 327--333, 2007.

\bibitem{BazganCCFFL14}
Cristina Bazgan, Morgan Chopin, Marek Cygan, Michael~R. Fellows, Fedor~V.
  Fomin, and Erik~Jan van Leeuwen.
\newblock Parameterized complexity of firefighting.
\newblock {\em J. Comput. System Sci.}, 80(7):1285--1297, 2014.

\bibitem{BjorklundHKK07}
Andreas Bj\"{o}rklund, Thore Husfeldt, Petteri Kaski, and Mikko Koivisto.
\newblock Fourier meets {M}{\"o}bius: fast subset convolution.
\newblock In {\em Proc. STOC 2007}, pages 67--74, 2007.

\bibitem{BodlaenderDFW95}
Hans~L. Bodlaender, Rodney~G. Downey, Michael~R. Fellows, and Harold~T.
  Wareham.
\newblock The parameterized complexity of sequence alignment and consensus.
\newblock {\em Theoret. Comput. Sci.}, 147(1{\&}2):31--54, 1995.

\bibitem{BodlaenderGJJL22}
Hans~L. Bodlaender, Carla Groenland, Hugo Jacob, Lars Jaffke, and Paloma~T.
  Lima.
\newblock Xnlp-completeness for parameterized problems on graphs with a linear
  structure.
\newblock In {\em 17th International Symposium on Parameterized and Exact
  Computation, {IPEC} 2022}, volume 249 of {\em LIPIcs}, pages 8:1--8:18.
  Schloss Dagstuhl --- Leibniz-Zentrum f{\"{u}}r Informatik, 2022.

\bibitem{BodlaenderGJPP22}
Hans~L. Bodlaender, Carla Groenland, Hugo Jacob, Marcin Pilipczuk, and
  Micha\l{} Pilipczuk.
\newblock On the complexity of problems on tree-structured graphs.
\newblock In {\em 17th International Symposium on Parameterized and Exact
  Computation, {IPEC} 2022}, volume 249 of {\em LIPIcs}, pages 6:1--6:17.
  Schloss Dagstuhl --- Leibniz-Zentrum f{\"{u}}r Informatik, 2022.

\bibitem{BodlaenderGNS21}
Hans~L. Bodlaender, Carla Groenland, Jesper Nederlof, and C{\'{e}}line M.~F.
  Swennenhuis.
\newblock Parameterized problems complete for nondeterministic {FPT} time and
  logarithmic space.
\newblock In {\em 62nd {IEEE} Annual Symposium on Foundations of Computer
  Science, {FOCS} 2021}, pages 193--204. {IEEE}, 2021.

\bibitem{BodlaenderCP22}
Hans~L. Bodlaender, Carla Groenland, and Micha\l{} Pilipczuk.
\newblock Parameterized complexity of binary {CSP}: Vertex cover, treedepth,
  and related parameters.
\newblock {\em CoRR}, abs/2208.12543, 2022.

\bibitem{ChlebikovaC17}
Janka Chleb{\'{\i}}kov{\'{a}} and Morgan Chopin.
\newblock The firefighter problem: further steps in understanding its
  complexity.
\newblock {\em Theoret. Comput. Sci.}, 676:42--51, 2017.

\bibitem{CourcelleMR00}
Bruno Courcelle, Johann~A. Makowsky, and Udi Rotics.
\newblock Linear time solvable optimization problems on graphs of bounded
  clique-width.
\newblock {\em Theory Comput. Syst.}, 33(2):125--150, 2000.

\bibitem{cygan2015parameterized}
Marek Cygan, Fedor~V Fomin, {\L}ukasz Kowalik, Daniel Lokshtanov, D{\'a}niel
  Marx, Marcin Pilipczuk, Micha{\l} Pilipczuk, and Saket Saurabh.
\newblock {\em Parameterized Algorithms}.
\newblock Springer, 2015.

\bibitem{CyganNPPRW22}
Marek Cygan, Jesper Nederlof, Marcin Pilipczuk, Micha{\l} Pilipczuk, Johan
  M.~M. van Rooij, and Jakub~Onufry Wojtaszczyk.
\newblock Solving connectivity problems parameterized by treewidth in single
  exponential time.
\newblock {\em {ACM} Trans. Algorithms}, 18(2):17:1--17:31, 2022.

\bibitem{DeVosKO20}
Matt DeVos, O{-}joung Kwon, and Sang{-}il Oum.
\newblock Branch-depth: {G}eneralizing tree-depth of graphs.
\newblock {\em European J. Combin.}, 90:Article 103186, 2020.

\bibitem{Diestel12}
Reinhard Diestel.
\newblock {\em Graph Theory}, volume 173 of {\em Graduate texts in
  mathematics}.
\newblock Springer, 4th edition, 2012.

\bibitem{DowneyF2013}
Rodney~G. Downey and Michael~R. Fellows.
\newblock {\em Fundamentals of Parameterized Complexity}.
\newblock Texts in Computer Science. Springer, 2013.

\bibitem{Dreier21}
Jan Dreier.
\newblock Lacon- and shrub-decompositions: {A} new characterization of
  first-order transductions of bounded expansion classes.
\newblock In {\em Proc. LICS 2021}, pages 1--13, 2021.

\bibitem{DreierGKPT22}
Jan Dreier, Jakub Gajarsk{\'{y}}, Sandra Kiefer, Micha{\l} Pilipczuk, and
  Szymon Toru{\'n}czyk.
\newblock Treelike decompositions for transductions of sparse graphs.
\newblock In {\em Proc. LICS 2022}, pages 31:1--31:14, 2022.

\bibitem{ElberfeldST15}
Michael Elberfeld, Christoph Stockhusen, and Till Tantau.
\newblock On the space and circuit complexity of parameterized problems:
  Classes and completeness.
\newblock {\em Algorithmica}, 71(3):661--701, 2015.

\bibitem{FinbowKMR07}
Stephen Finbow, Andrew~D. King, Gary MacGillivray, and Romeo Rizzi.
\newblock The firefighter problem for graphs of maximum degree three.
\newblock {\em Discret. Math.}, 307(16):2094--2105, 2007.

\bibitem{FominGLS10}
Fedor~V. Fomin, Petr~A. Golovach, Daniel Lokshtanov, and Saket Saurabh.
\newblock Intractability of clique-width parameterizations.
\newblock {\em {SIAM} J. Comput.}, 39(5):1941--1956, 2010.

\bibitem{FominGLS14}
Fedor~V. Fomin, Petr~A. Golovach, Daniel Lokshtanov, and Saket Saurabh.
\newblock Almost optimal lower bounds for problems parameterized by
  clique-width.
\newblock {\em {SIAM} J. Comput.}, 43(5):1541--1563, 2014.

\bibitem{FominK22}
Fedor~V. Fomin and Tuukka Korhonen.
\newblock Fast {FPT}-approximation of branchwidth.
\newblock In {\em Proc. STOC 2022}, pages 886--899, 2022.

\bibitem{Furer17}
Martin F{\"{u}}rer.
\newblock Multi-clique-width.
\newblock In {\em Proc. ITCS 2017}, volume~67 of {\em Leibniz Int. Proc.
  Inform.}, pages 14:1--14:13, 2017.

\bibitem{FurerY17}
Martin F{\"{u}}rer and Huiwen Yu.
\newblock Space saving by dynamic algebraization based on tree-depth.
\newblock {\em Theory Comput. Syst.}, 61(2):283--304, 2017.

\bibitem{GajarskyK20}
Jakub Gajarsk{\'{y}} and Stephan Kreutzer.
\newblock Computing shrub-depth decompositions.
\newblock In {\em Proc. STACS 2020}, volume 154 of {\em Leibniz Int. Proc.
  Inform.}, pages 56:1--56:17, 2020.

\bibitem{GajarskyKNMPST20}
Jakub Gajarsk{\'{y}}, Stephan Kreutzer, Jaroslav Ne{\v{s}}et{\v{r}}il, Patrice
  Ossona~de Mendez, Micha{\l} Pilipczuk, Sebastian Siebertz, and Szymon
  Toru{\'{n}}czyk.
\newblock First-order interpretations of bounded expansion classes.
\newblock {\em ACM Trans. Comput. Log.}, 21(4):Art. 29, 41, 2020.

\bibitem{GanianHNOM19}
Robert Ganian, Petr Hlin{\v{e}}n{{\'y}}, Jaroslav Ne{\v{s}}et{\v{r}}il, Jan
  Obdr\v{z}\'{a}lek, and Patrice Ossona~de Mendez.
\newblock Shrub-depth: Capturing height of dense graphs.
\newblock {\em Log. Methods Comput. Sci.}, 15(1):7:1--7:25, 2019.

\bibitem{GimaHKKO22}
Tatsuya Gima, Tesshu Hanaka, Masashi Kiyomi, Yasuaki Kobayashi, and Yota
  Otachi.
\newblock Exploring the gap between treedepth and vertex cover through vertex
  integrity.
\newblock {\em Theoret. Comput. Sci.}, 918:60--76, 2022.

\bibitem{Guillemot11}
Sylvain Guillemot.
\newblock Parameterized complexity and approximability of the longest
  compatible sequence problem.
\newblock {\em Discret. Optim.}, 8(1):50--60, 2011.

\bibitem{HegerfeldK20}
Falko Hegerfeld and Stefan Kratsch.
\newblock Solving connectivity problems parameterized by treedepth in
  single-exponential time and polynomial space.
\newblock In {\em Proc. STACS 2020}, volume 154 of {\em Leibniz Int. Proc.
  Inform.}, 2020.

\bibitem{HegerfeldK23}
Falko Hegerfeld and Stefan Kratsch.
\newblock Tight algorithms for connectivity problems parameterized by
  clique-width.
\newblock Technical report, 2023.
\newblock \url{https://arxiv.org/abs/2302.03627}.

\bibitem{ImpagliazzoPZ01}
Russell Impagliazzo, Ramamohan Paturi, and Francis Zane.
\newblock Which problems have strongly exponential complexity?
\newblock {\em J. Comput. Syst. Sci.}, 63(4):512--530, 2001.

\bibitem{Kane10}
Daniel~M. Kane.
\newblock Unary subset-sum is in logspace.
\newblock Technical report, 2010.
\newblock \url{https://arxiv.org/abs/1012.1336}.

\bibitem{LiP22}
Shaohua Li and Marcin Pilipczuk.
\newblock Hardness of metric dimension in graphs of constant treewidth.
\newblock {\em Algorithmica}, 84(11):3110--3155, 2022.

\bibitem{LokshtanovMS11}
Daniel Lokshtanov, Matthias Mnich, and Saket Saurabh.
\newblock Planar $k$-path in subexponential time and polynomial space.
\newblock In {\em Proc. WG 2011}, volume 6986 of {\em Lecture Notes Comput.
  Sci.}, pages 262--270, 2011.

\bibitem{MulmuleyVV87}
Ketan Mulmuley, Umesh~V. Vazirani, and Vijay~V. Vazirani.
\newblock Matching is as easy as matrix inversion.
\newblock {\em Combinatorica}, 7(1):105--113, 1987.

\bibitem{NadaraPS22}
Wojciech Nadara, Micha{\l} Pilipczuk, and Marcin Smulewicz.
\newblock Computing treedepth in polynomial space and linear {FPT} time.
\newblock In {\em Proc. ESA 2022}, volume 244 of {\em Leibniz Int. Proc.
  Inform.}, pages 79:1--79:14, 2022.

\bibitem{NederlofPSW20}
Jesper Nederlof, Micha\l{} Pilipczuk, C{\'{e}}line M.~F. Swennenhuis, and Karol
  W\k{e}grzycki.
\newblock Hamiltonian cycle parameterized by treedepth in single exponential
  time and polynomial space.
\newblock In {\em Proc. {WG} 2020}, volume 12301 of {\em Lecture Notes Comput.
  Sci.}, pages 27--39, 2020.

\bibitem{OhlmannPPT23}
Pierre Ohlmann, Micha{\l} Pilipczuk, Wojciech Przybyszewski, and Szymon
  Toru{\'n}czyk.
\newblock Canonical decompositions in monadically stable and bounded shrubdepth
  graph classes.
\newblock Technical report, 2023.
\newblock \url{https://arxiv.org/abs/2303.01473}.

\bibitem{OSSONADEMENDEZ2022103660}
Patrice {Ossona de Mendez}, Micha\l{} Pilipczuk, and Sebastian Siebertz.
\newblock Transducing paths in graph classes with unbounded shrubdepth.
\newblock {\em European J. Combin.}, page 103660, 2022.

\bibitem{Pietrzak03}
Krzysztof Pietrzak.
\newblock On the parameterized complexity of the fixed alphabet shortest common
  supersequence and longest common subsequence problems.
\newblock {\em J. Comput. Syst. Sci.}, 67(4):757--771, 2003.

\bibitem{PilipczukW18}
Micha{\l} Pilipczuk and Marcin Wrochna.
\newblock On space efficiency of algorithms working on structural
  decompositions of graphs.
\newblock {\em {ACM} Trans. Comput. Theory}, 9(4):18:1--18:36, 2018.

\bibitem{Wanke94}
Egon Wanke.
\newblock $k$-{NLC} graphs and polynomial algorithms.
\newblock {\em Discret. Appl. Math.}, 54(2-3):251--266, 1994.

\end{thebibliography}
 
\end{document}